\newtheorem{theorem}{Theorem}
\theoremstyle{definition}
\newcommand{\ef}{\tilde{\mathcal{E}}}
\newcommand{\ket}[1]{|{#1}\rangle}
\newcommand{\bra}[1]{\langle {#1} |}
\renewcommand{\mod}{\mathrm{mod \,}}
\def\equationautorefname~#1\null{Eq. (#1)\null}
\newcommand{\appref}[1]{\hyperref[#1]{App.~\ref*{#1}}}
\renewcommand{\vec}[1]{\mathbf{#1}}
\newcommand{\david}[1]{{#1}}
\newcommand{\gs}[1]{{#1}}
\newcommand{\comment}[1]{}
\newcommand{\cnot}{\mathrm{CNOT}}
\newcommand{\iswap}{\mathrm{iSWAP}}
\newcommand{\dam}{$\mathcal{M}$-dynamics}
\newcommand{\Tr}{\mathrm{Tr}}
\renewcommand\onecolumngrid{% <<<<<<
\do@columngrid{one}{\@ne}%
\def\set@footnotewidth{\onecolumngrid}% <<<<<<<<<<<<<<<<
\def\footnoterule{\kern-6pt\hrule width 1.5in\kern6pt}%
}
\renewcommand\twocolumngrid{% <<<<<<
        \def\footnoterule{% restore rule
        \dimen@\skip\footins\divide\dimen@\thr@@
        \kern-\dimen@\hrule width.5in\kern\dimen@}
        \do@columngrid{mlt}{\tw@}
}%
\begin{document}
% TiKZ style file generated by TikZiT. You may edit this file manually,
% but some things (e.g. comments) may be overwritten. To be readable in
% TikZiT, the only non-comment lines must be of the form:
% \tikzstyle{NAME}=[PROPERTY LIST]

% Node styles
\tikzstyle{x spider}=[fill={rgb,255: red,232; green,165; blue,165}, draw=black, shape=circle]
\tikzstyle{z spider}=[fill={rgb,255: red,216; green,248; blue,216}, draw=black, shape=circle]
\tikzstyle{hadamard}=[fill={rgb,255: red,255; green,255; blue,153}, draw=black, shape=rectangle]
\tikzstyle{U}=[fill=white, draw=black, shape=rectangle, minimum height=0.3cm, minimum width=0.36cm]
\tikzstyle{end}=[fill=white, draw=black, shape=circle, line width=1pt]
\tikzstyle{Urep}=[fill={rgb,255: red,128; green,0; blue,128}, draw=black, shape=rectangle, minimum height=0.8cm, minimum width=0.8cm, opacity=0.7]
\tikzstyle{U2}=[fill={rgb,255: red,255; green,128; blue,0}, draw=black, shape=rectangle, minimum height=0.8cm, minimum width=0.8cm, opacity=1]
\tikzstyle{square}=[fill=white, draw=black, shape=rectangle, line width=1pt]

% Edge styles
\tikzstyle{light}=[-, draw={rgb,255: red,191; green,191; blue,191}]
\tikzstyle{arrow}=[line width=1pt, <->]
\tikzstyle{thick}=[-, line width=2pt, draw={rgb,255: red,128; green,0; blue,128}]
\tikzstyle{bigZ}=[-, fill={rgb,255: red,216; green,248; blue,216}]
\tikzstyle{Urep shape}=[-, fill={rgb,255: red,128; green,0; blue,128}, opacity=0.6]
\tikzstyle{U2 shape}=[-, fill={rgb,255: red,255; green,128; blue,0}]
\tikzstyle{dashline}=[-, dashed]

%\date{\today}
\title{Zero-temperature entanglement membranes in quantum circuits} %without randomness}
\author{Grace M. Sommers}
\address{Physics Department, Princeton University, Princeton, NJ 08544}
\author{Sarang Gopalakrishnan}
\address{Department of Electrical and Computer Engineering, Princeton University, Princeton, NJ 08544}
\author{Michael J. Gullans}
\address{Joint Center for Quantum Information and Computer Science,
NIST/University of Maryland, College Park, Maryland 20742, USA}
\author{David A. Huse}
\address{Physics Department, Princeton University, Princeton, NJ 08544}
\begin{abstract}

In chaotic quantum systems, the entanglement of a region $A$ can be described in terms of the surface tension of a spacetime membrane pinned to the boundary of $A$. Here, we interpret the tension of this entanglement membrane in terms of the rate at which information ``flows'' across it.  For any orientation of the membrane, one can define (generically nonunitary) dynamics {\it across} the membrane; we explore this dynamics in various space-time translation-invariant (STTI) stabilizer circuits in one and two spatial dimensions.  We find that the flux of information across the membrane in these STTI circuits reaches a steady state.  In the {cases where this dynamics is nonunitary and the steady state flux is nonzero, this occurs because the dynamics across the membrane is unitary in a {\it subspace}} of extensive entropy. This generalized unitarity is present in a broad class of STTI stabilizer circuits, and is also present in some special non-stabilizer models. The existence of multiple unitary (or generalized unitary) directions forces the entanglement membrane tension to be a piecewise linear function of the orientation of the membrane; in this respect, the entanglement membrane behaves like an interface in a zero-temperature classical lattice model.  We argue that entanglement membranes in {\it random} stabilizer circuits that produce volume-law entanglement are also effectively at zero temperature.

\end{abstract}

\maketitle
%\emph{Introduction.---}
\section{Introduction}

In many strongly interacting quantum systems, the entanglement entropy of a spatial region has a geometrical interpretation in terms of a surface in a higher-dimensional space(-time). This interpretation dates back to the Ryu-Takayanagi conjecture in AdS/CFT~\cite{Ryu2006}, and has recently been explored in depth for random quantum circuits~\cite{Nahum2017op,Nahum2017entanglement,VonKeyserlingk2018,Jonay2018,Nahum2018,Zhou2019,Zhou2020,Nahum2022,Sierant2023,Li2021,Li2023,Sang2023,Liu2023noisy,Lovas2023}. In random circuits, the interpretation of the entanglement entropy as the surface tension of a spacetime membrane can be made explicit, using a mapping between random-circuit ensembles and statistical mechanics models. According to this mapping, the spacetime bulk of a random unitary circuit maps onto a ferromagnet, and the entanglement entropy of a region $A$ is the free energy cost of a domain wall in the ferromagnet separating $A$ from its complement---or equivalently, the {free energy} %tension 
of a spacetime membrane whose boundary (at the final time) coincides with the boundary of $A$. While membrane theory has been developed for the typical behavior of ensembles of random circuits, many of its basic predictions are generic~\cite{Zhou2020}. However, some of its predictions---such as the fact that the large time and/or large distance scaling behavior of the entanglement membrane is in many cases that of a directed polymer in a random medium (DPRM)~\cite{Nahum2017entanglement}---rely on randomness. 

This paper is concerned with the behavior of the entanglement membrane in individual circuits, rather than ensembles, so it is desirable to have an interpretation of the entanglement membrane that does not implicitly rely on an ensemble. {A formal approach to entanglement membrane theory in chaotic, nonrandom systems was introduced in Ref.~\cite{Zhou2020}, which expresses the line tension as an expansion in effective spins that typically converges, but is not guaranteed to do so. Here we {explore a different approach,}  %take a less technical, more geometrically transparent view, 
interpreting the tension of an entanglement membrane as a flux of information across it.} This interpretation is most transparent if one thinks of a setting in which a reference qubit is initially maximally entangled with each qubit in the system, and then the system evolves under a unitary circuit. The entanglement between the reference qubits and the system is the amount of quantum information flowing through the circuit, which, by unitarity, is equal to the number of qubits in the system.\footnote{If we time-reverse the dynamics, it is still unitary and carries the same flux of information.  Thus the direction of this "flow" is set by our choice of the time direction.} 

In the membrane picture, this entanglement is the tension of a membrane running sideways across the circuit, separating the final state of the system from the reference qubits. Unitarity constrains the properties of the entanglement membrane in this setup: since quantum information flows without loss in unitary circuits, the entanglement membrane tension for spacelike membranes is independent of the depth of the circuit. This observation, interpreted in terms of a hypothetical statistical mechanics model, would translate to the statement that spacelike entanglement membranes in unitary circuits behave as if they were at zero temperature:  Increasing the depth of the circuit gives the membrane more places to sit and thus more entropy, yet its tension (a free energy) does not change, indicating that it is at zero temperature. 

In many other contexts---e.g., the growth of bipartite entanglement in shallow wide circuits---the membrane instead runs in a timelike direction.  In this case, when we instead consider the dynamics across the membrane we are choosing {to run time along a different direction through the circuit, so that this membrane is spacelike for this new time direction.}  %{What does spacelike mean when there is no causal light cone? I guess we can always choose the membrane to be orthogonal to the time direction, which is "spacelike" even if $v_B=\infty$?}  
{When $v_{LC}=\infty$ in both directions then spacelike is only ``slope'' $v=\infty$.} In general, this dynamics across the membrane is not unitary, but can be expressed in terms of unitary gates interleaved with generalized measurements.  Such hybrid dynamics does not  in general conserve the rank or spectrum of a mixed state, and correspondingly the entanglement membrane need not be at zero temperature.  However, there are important classes of circuits in which the dynamics remains unitary for multiple distinct choices of the time directions~\cite{Gopalakrishnan2019,Bertini2019,Piroli2019,Bertini2020,Bertini2021,Jonay2021,Mestyan2022,Milbradt2023,Sommers2023}:  In 1+1d, these include dual-unitary (DU) circuits, which are unitary in the spatial direction~\cite{Gopalakrishnan2019,Bertini2019,Piroli2019,Bertini2020,Bertini2021}; tri-unitary circuits, with three distinct unitary arrows of time (six if we count time-reversals)~\cite{Jonay2021}; and the recently introduced hierarchy of generalized dual-unitary (gDU) circuits, which satisfy relaxed constraints and in some cases are unitary in only a subspace of the full space of many-body states~\cite{Yu2023,Foligno2023,Liu2023,Rampp2023}. In these cases, the dynamics across the membrane is constrained by (generalized) unitarity and the entanglement membrane is correspondingly at zero temperature.

The bulk of this paper is concerned with the behavior of the entanglement membrane in space-time translation invariant (STTI) stabilizer circuits, in both $1+1$ and $2+1$ dimensions, including nonunitary stabilizer circuits with (generalized) measurements.  We observe that STTI stabilizer circuits that generate volume law entanglement are multi-unitary in the generalized sense {for all of the many cases we have examined:} for any orientation of the entanglement membrane there is a subspace (which we call the ``plateau'') within which the dynamics is exactly unitary. The entanglement membrane tension exhibits zero-temperature behavior constrained by the unitary arrow(s) of time for which the membrane is spacelike.  As a result, the tension is piecewise linear, with cusps when the constraint switches between distinct unitary arrows of time. We demonstrate this behavior for various STTI circuits. In the case of two spatial dimensions, we investigate two examples that had not previously been identified as multi-unitary.

Our methods for studying the membrane tension $\mathcal{E}$ are largely complementary to those in recent works on gDU circuits, and constitute a second major thrust of this work. We present several equivalent ways of defining $\mathcal{E}$ and demonstrate the associated numerical methods in Clifford circuits, e.g. pinning the ends/edges of the membrane to probe many angles at once. These techniques suggest new types of questions about entanglement dynamics in spacetime. While some of the interesting cases -- e.g. the dynamics across membranes running at angles between the butterfly and light cones -- arise in non-stabilizer circuits where our methods are not efficient, Clifford dynamics are nonetheless important in their own right, such as in applications to fault-tolerant quantum computing~\cite{Campbell2017,Gottesman2022,Gottesman2024}.  Moreover, the zero-temperature behavior of membranes in STTI Clifford and in multi-unitary circuits is conjectured to extend to \textit{random} stabilizer circuits, owing to an "emergent plateau" defined below.
%\emph{Model.---}

\gs{The remainder of the paper proceeds as follows. In~\autoref{sect:membranes} we define entanglement membranes and outline the different classes of dynamics across the membrane.~\autoref{sect:tension-2} presents the general theory for the line tension in "fully multi-unitary" models, several examples of which are presented in~\autoref{sect:examples}. We conclude in~\autoref{sect:conclude}.}
\section{Entanglement membranes}\label{sect:membranes}

\subsection{Membranes in unitary circuits}

Entanglement membranes were first introduced in the context of entanglement growth, starting from a product state, in random unitary circuits. In this context, one is interested in the entanglement of a region $A$---for example, the left half of an infinite system---after a circuit of depth $t$. The entanglement membrane originates at the boundary of $A$, and runs through the circuit so as to disconnect $A$ from its complement. In this case, the entanglement membrane runs in a timelike direction, i.e., within the light cone. In this theory one posits that the entanglement entropy of $A$ is the free energy of the membrane.\footnote{The entropy could be the $n$th Renyi entropy or the von Neumann entropy. In general, the line tension will depend on the Renyi index, and in the presence of conservation laws, the ballistic growth of the von Neumann entropy differs qualitatively from the diffusive growth of higher Renyi entropies~\cite{Rakovszky2019,Zhou2020diffusive,Znidaric2020,Huang2020}. However, we will just refer to "the" line tension. In the concrete examples we consider (Clifford circuits), the spectrum is flat and all Renyi entropies are equal. {Generally, we give entropies in bits.}}

For clarity, we will first discuss the case of 1+1-dimensional circuits, where the membrane is a line, generalizing to higher $d>1$ later.  \david{Our initial discussion here is for a general circuit which may be random, although it must be statistically homogeneous in space and time.}  If the membrane runs at an angle $v$, this free energy takes the form $S = \int \mathcal{E}(v) dt$, where the integral is over the membrane and the entanglement entropy per unit time $\mathcal{E}(v)$ is called the line tension.  In this setup the angle of the membrane is not fixed, since it can terminate anywhere at the bottom of the circuit and, in the limit of a large circuit, will adopt the configuration(s) that minimizes $S$.

To fix the angle of the membrane, one can pin its bottom end using the method introduced in~\cite{Jonay2018} (Fig.~\ref{fig:page-sketch}):  Initialize a long chain of $L$ qubits in a pure state that is a product of fully scrambled states on each half of the chain at time $t=0$, with the center of the chain at $x=0$, and distances defined so the qubits have unit spacing along the chain.  The initial entanglement is then $S(x,t=0)=|x|$ for $|x|<L/4$; this is the entanglement entropy between the qubits to the left of the cut at $x$ and those to right.  Then run the circuit and obtain $S(x,t)$ in the regime $t\gg 1$, $|x|<L/4$, and $L$ large enough so that the results do not depend on $L$.  $\mathcal{E}(v)$ is then obtained from $S(vt,t) \approx \mathcal{E}(v) t$ (or more generally an integral---this equation assumes a coarse-grained homogeneity in space and time). This allows us to gather data for many angles at once: one end of the membrane is pinned at $(x,t)=(0,0)$ by the initial state, and the other end is at $(x,t)$ since we are looking at $S(x,t)$ of the state at time $t$. %As we discuss below, however, this method can be dangerous for $|v|\rightarrow\infty$ when there is an infinite butterfly or light-cone speed, as arises in some non-unitary settings.  
Note that $S(x,t)$ can change by at most one if we change $x$ by one qubit.  This restricts $\mathcal{E}(v)$ to be continuous, although it can have discontinuities in its first derivative; we will call these discontinuities in $d\mathcal{E}/dv$ ``cusps''.

\begin{figure}[t]
\centering
\subfloat[\label{fig:page-sketch}]{%
          \includegraphics[width=0.85\linewidth]{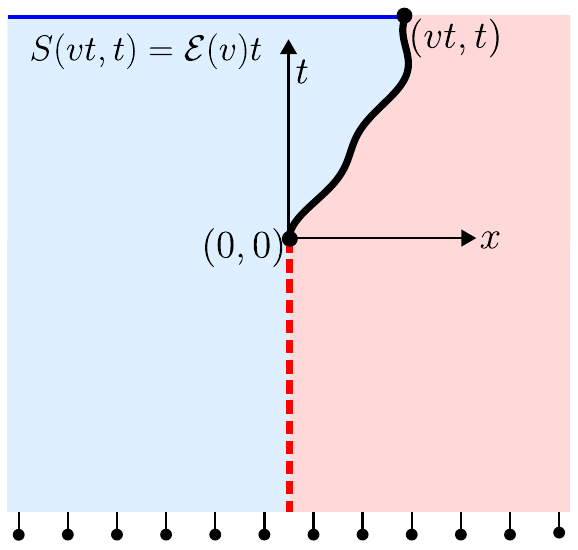}}\\
    \subfloat[\label{fig:spacelike1}]{%
          \includegraphics[width=0.85\linewidth]{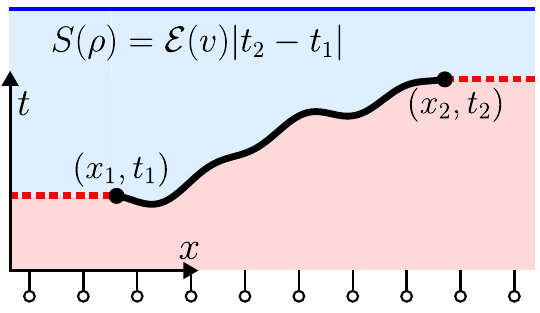}}
\caption{\gs{Probes of line tension. In each panel, the time direction is chosen to be vertical, the thick black curve shows a possible path of the membrane, and the blue line on the top boundary indicates the (sub)system whose entropy is measured at time $t$. (a) Protocol to determine $\mathcal{E}(v)$ by tracking subsystem entropies $S(x=vt,t)$ starting from two separately scrambled systems at $t=0$. This also illustrates one way to prepare the $t=0$ state: evolve a product initial \david{pure} state (closed circles on bottom boundary) under the chosen dynamics, but omit gates that cross the red dashed cut for $t<0$. %Reintroducing these gates at $t=0$ terminates the red cut, making it the preferred endpoint for entanglement membranes within the butterfly cone at $t>0$. 
(b) An alternative probe is to evolve a fully mixed initial state (open circles on bottom boundary) under the chosen dynamics, measure all qubits in the interval $[0,x_1]$ at time $t_1$ and the interval $[x_2,L]$ at time $t_2$, and compute the entropy of the full state at time $t>t_1,t_2$. }
} %at a coarse-grained level.}
\end{figure}
%{The procedure outlined above only determines the line tension within the butterfly cone. In unitary circuits this is sufficient, since the line tension for $|v|\geq v_B$ is trivially fixed to $\mathcal{E}(v)=|v|$, but we will also be interested in certain hybrid unitary-projective circuits.} Then, 
An alternative approach to compute the line tension for spacelike membranes~\gs{is shown in~\autoref{fig:spacelike1}}: Initialize the circuit in a maximally mixed state (or equivalently, with each system qubit in a Bell state with a reference qubit). Make cuts (perform rank-1 projective measurements) on either side of a finite-size chain, i.e. measure all qubits in the interval $[0,x_1]$ at time $t_1>0$ and $[x_2,L]$ at time $t_2>0$, with $x_2>x_1$, $t>t_2,t_1$. The membrane separating the initial state from the final state will follow the cuts and thus run through the uncut part of the circuit from %, for some range of times, prefer to travel from 
$(x_1,t_1)$ to $(x_2,t_2)$, so with $v=(x_2-x_1)/(t_2-t_1)$. The entropy of the final state (or equivalently the entanglement between the references and the final state) is then $\mathcal{E}(v)|t_2-t_1|$.  Again, we take the times, $(x_2-x_1)$, and $L$ all large enough so the resulting $\mathcal{E}(v)$ does not depend on them at a given $v$. While this is not {an ``efficient''} numerical method, since only one angle can be probed in each run, it is a natural way to define the line tension for spacelike membranes.

An alternative quantification of the tension that is not singular in the limit $v\rightarrow\infty$ ($t_1\rightarrow t_2$) is the entropy density per (Euclidean) unit length (per unit ``area'' in higher $d$) of the membrane $\ef(\vec{\hat{n}})=(1+v^2)^{-1/2}\mathcal{E}(v)$, where we denote the membrane orientation with the unit normal $\vec{\hat{n}}$. For a given orientation $\vec{\hat{n}}$ of the membrane, $\ef$ does not depend on the choice of time direction, unlike $\mathcal{E}(v)$. 

\subsection{Dynamics across the membrane}\label{sect:dam}

So far, we have mostly discussed circuits that are unitary along the conventional time direction, but the circuit need not possess any strictly unitary arrow of time. More generally, given a circuit made up of geometrically local (not necessarily unitary) gates, one can set any direction to be the ``time'' direction and construct a transfer matrix that evolves states along that direction. Using the polar decomposition, each gate appearing in this transfer matrix can be written  as a combination of a unitary and a generalized measurement with a (possibly) post-selected measurement outcome. For any orientation of the membrane, we can choose to run time along the normal direction. The membrane tension $\ef$ can then be interpreted as the entropy density of an initially fully mixed state under this hybrid dynamics (\autoref{fig:spacelike2}). If this entropy density settles to a constant value in the limit $L\rightarrow\infty$, followed by $t\rightarrow \infty$, one can interpret it as the density of information flowing along the direction normal to the membrane. \gs{This interpretation of entanglement entropy as information flux appears in the holography literature under the name of "bit threads": Bell pairs between a region and its complement carried by a vector field whose flux through the region yields its entropy~\cite{Agon2021}.}

\begin{figure}[t]
\centering
\includegraphics[width=0.9\linewidth]{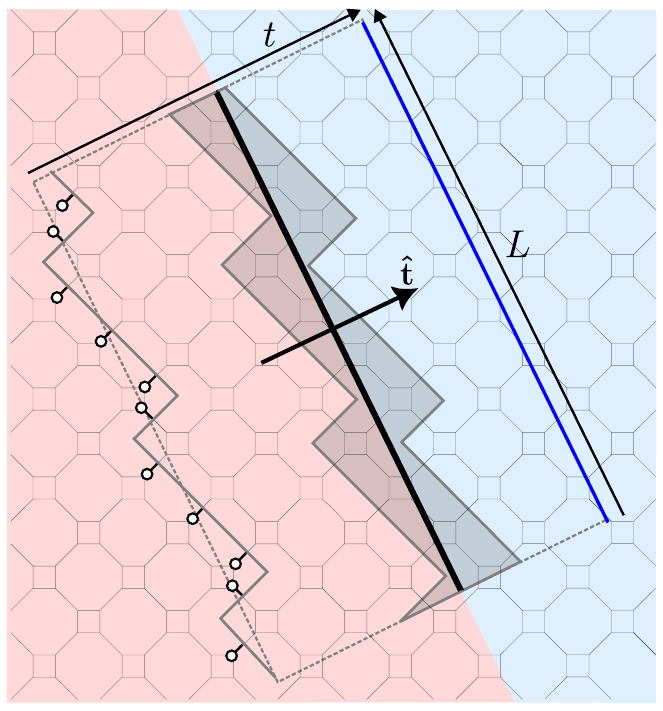}
\caption{\gs{Tensor network of not necessarily unitary \david{and possibly random} gates. For an arbitrary "time" direction $\vec{\hat{t}}$, one may evolve a fully mixed initial state (open circles) on $L$ qubits under $t$ time steps (one time step is shaded in gray). If the final entropy density converges to a constant value in the limit $L\rightarrow\infty$ \david{then} $t\rightarrow\infty$, this density is the free energy density of a membrane (thick black line) with normal vector $\vec{\hat{t}}$. \label{fig:spacelike2}}}
\end{figure}
Let's denote this dynamics across the membrane as the "\dam". % (can change this terminology as needed). 
The \dam~ appear to fall into four general categories. From most to least constrained, these are:

%\begin{enumerate}[wide] \item 
(i) The \dam~are precisely unitary. This is the case when there is some choice of the time direction $\vec{\hat{t}}$ that defines a unitary time evolution, as long as the membrane surface runs in a spacelike direction (outside the light cone) with respect to $\vec{\hat{t}}$. In this case $\mathcal{E}(v)$ is set by the size of the Hilbert space (the entropy density) that this unitary dynamics is acting on.  The dynamics across the membrane is qualitatively the same for all membrane orientations that are spacelike for this $\vec{\hat{t}}$.  We will call this set of membrane orientations a ``light sector''.  This gives $T=0$ behavior and $\mathcal{E}(v)$ linear in $v$ within this light sector. If there is no strict light cone (as in continuous time nonrelativistic lattice dynamics), this light sector consists of only a single membrane orientation, but the range of orientations with this unitary~\dam~widens upon Trotterization.

(ii) The \dam~are not strictly unitary, but have a permanently stable plateau, which is a subspace with extensive entropy whose spectrum is invariant in time. Starting in the maximally mixed state, the system evolves to this subspace in its steady state. Like the strict unitarity in case (i), this fixes the free energy density of the membrane across which the dynamics runs to the entropy density of the plateau.  This means that the interface is effectively at zero temperature $T$, since at $T\neq 0$, for finite $L$ the free energy would instead continuously evolve with increasing time as terms are added to the partition function.

In many circuits %cases, 
the dynamics is unitary and local within this subspace. \gs{Then the dynamics has a light cone, and this same dynamics applies over all membrane orientations for which the membrane lies outside the light cone. This range of orientations thus defines a light sector}, within which $\mathcal{E}(v)$ is linear in $v$.  %Or possibly one can have a dynamics of this type (ii) that only applies for one specific membrane orientation, so is not part of a larger light sector; we have not yet found any such examples.

We will call a circuit multi-unitary if it has at least two distinct light sectors with %all membrane orientations either have a 
unitary \dam~of types (i) or (ii).  The circuit is {\it fully} multi-unitary if the union of all of its light sectors covers all membrane orientations (except possibly the boundaries between sectors); \gs{see~\autoref{sect:tension-2} and \autoref{fig:sectors} below.}

This definition of full multi-unitarity %, in the sense of class (i), 
applies to any dynamics composed of "multi-directional unitary gates"~\cite{Mestyan2022}, including dual-~\cite{Gopalakrishnan2019,Bertini2019,Piroli2019,Bertini2020,Bertini2021}, tri-~\cite{Jonay2021}, and ternary-unitary~\cite{Milbradt2023} circuits. It also applies to the first two levels of the hierarchy of generalized DU (gdU) circuits~\cite{Yu2023,Foligno2023,Rampp2023}, whose sideways dynamics is unitary within a subspace, and to many STTI Clifford circuits.  These multi-unitary circuits enjoy the further property that the steady state $\rho$ for any membrane orientation has a uniform spectrum, so the entropy is independent of the Renyi index. Thus, not only is the entanglement membrane at zero temperature, but it is described by a lattice model with integer energies. 

More generally, the three properties which allow us to define light sectors---(1) stable plateau, (2) unitarity within the plateau subspace, and (3) locality-preserving {unitary} dynamics within the subspace---need not occur together. While (3) implies (2) and (2) implies (1), the converses do not hold in general. One example is the mixed phase of certain non-Hermitian models, which has a plateau subspace within which the dynamics is not unitary due to dephasing of non-orthogonal eigenvectors~\cite{Gopalakrishnan2021}. (The spectrum is also not degenerate in this case, so the line tension would depend on Renyi index, but each tension will still be zero-temperature-like due to property 1.) Such models are outside the scope of the present paper.

In STTI Clifford circuits, the existence of a permanently stable plateau \textit{does} imply unitarity within a subspace~\cite{supp-ref}, so these circuits are gDU in the weak sense of having a set of membrane orientations for which the \dam~are unitary within a subspace. However, generalized dual-unitarity can also be defined in a stricter sense as a condition at the gate level, which makes the influence matrix factorize at least for some range of $v$. Explicitly, for the first two levels of the hierarchy, DU and DU2, it factorizes for all $v$; for higher levels of the hierarchy, the influence matrix only factorizes for $|v|>v^*$~\cite{Rampp2023}. This factorization imposes locality (property 3) on the effective unitary dynamics, confining two-point correlations to single-qubit channels along specific "flow directions" (described below), as also occurs in the type-(i) multi-unitary circuits. Several of the $1+1$d STTI Clifford circuits that we have studied can be expressed in terms of the DU2 conditions and thus provably possess these flow directions~\cite{supp-ref}.  For larger unit cells, we do not have an explicit construction that would identify the flow directions and light sectors, but for the cases that we have studied the line tension is always piecewise linear in $v$, and we have yet to find a 1+1d circuit with a line tension that has cusps at more than 3 different membrane orientations. 

(iii) The \dam~has an {\it emergent} plateau: a long-lived steady state {entropy} which purifies only occasionally.  This is the case, for example, in the mixed (volume-law) phase of random monitored Clifford circuits~\cite{Li2018,Li2019,Gullans2019,Gullans2020}, whose plateau is exponentially long-lived. Starting from a fully mixed state (a stabilizer state), the Renyi entropies remain equal at all times, and in particular are set by the rank (Hartley) entropy. The rank entropy takes only integer values, so in a finite system of length $L$ with an exponential purification time, the purification events which take the state to a lower entropy plateau are exponentially rare in $L$.  This dynamics again matches that of a $T=0$ membrane, whose free energy is set by the minimum energy and thus only changes when a new location for the membrane appears that sets a new record low for the energy~\cite{supp-ref}.  This is in accord with the observation in Ref.~\cite{Li2023} that the Hartley entropy of random monitored circuits in the volume-law phase is governed by the zero-temperature limit of the DPRM; for Cliffords, the Hartley entropy is the same as all the other entropies, again suggesting that the DPRM picture of the entanglement membrane for random Clifford circuits is at $T=0$.
 
Generalizing beyond hybrid Clifford circuits with one-site projective measurements, the dynamics across a membrane run at any arbitrary angle in a random Clifford circuit will likewise generically host an emergent plateau, provided that it does not purify completely. Indeed, given a circuit and an orientation of time with only Clifford gates and stabilizer projectors and/or measurements \gs{(e.g., the vertical direction in~\autoref{fig:spacelike2}, with each 4-legged tensor a Clifford gate)}, the dynamics of that circuit along any direction \gs{(e.g., across the membrane indicated in~\autoref{fig:spacelike2})}, can again be expressed in terms of Clifford gates, projectors, and/or measurements~\cite{Ippoliti2021}.  In the absence of (multi-)unitarity, %or generalized dual unitarity, 
the membrane is in a random medium which generally breaks the piecewise linearity of $\mathcal{E}(v)$, while remaining at $T=0$. 

We leave open the question of whether there are any non-Clifford and non-unitary circuits which contain membranes whose \dam~support an emergent plateau. These circuits would have to be highly fine-tuned (but not gDU).

(iv)  Finally there is the general case that is nonunitary, non-Clifford, and "generic", where we do not expect a plateau in the above sense, and thus none of these $T=0$-like features. For example, although the purification time is exponential in system size in the volume-law phase of Haar-random monitored circuits~\cite{Skinner2018}---\gs{so that the $L\rightarrow\infty, t\rightarrow\infty$ limit in~\autoref{fig:spacelike2} does converge}---the density matrix eigenvalues, being nondegenerate, evolve nontrivially rather than reach an emergent plateau. The entanglement and purification dynamics of such circuits is quite rich, requiring tools outside the scope of this paper, such as techniques from random matrix theory~\cite{Bulchandani2023,DeLuca2023,Gerbino2024}. 

In a generic unitary circuit, \gs{with a unitary arrow of time along $v=0$}, a membrane running at a slope between the light cone speed $v_{LC}$ and the butterfly speed $v_B$, may also be subject to thermal fluctuations, due to the exponential tails outside the butterfly cone. However, from general constraints~\cite{Jonay2018}, \gs{$\mathcal{E}(v) = s_{eq}|v|$ for all $|v|\geq v_B$}, {so some of the zero-temperature-like properties that the membranes have for $v>v_{LC}$ may remain for $v>v_B$.}\footnote{The functional form of $\mathcal{E}(v)$ in random unitary circuits continued from $|v|\leq v_B$ to $v_B < |v| \leq v_{LC}$ does relate to the exponential decay of the OTOC outside the butterfly cone in random circuits~\cite{Jonay2018,Khemani2018,Zhou2019}, but does not determine the leading order scaling of the entanglement across a membrane in this interval.}

\section{Membrane tensions and multi-unitarity}\label{sect:tension-2}
In the remainder of this paper, we focus on fully multi-unitary models {whose membranes are in class (i) and/or (ii).}

In general, circuits can have full multi-unitarity with a high symmetry, but the symmetry may be hidden by a particular microscopic representation of the circuit, or only emergent. \gs{An invertible linear transformation of the coordinates $(x,t)$---i.e., linearly scaling, shearing, and/or rotating the circuit---can often bring out the symmetry and a familiar functional form of the line tension:
\begin{align}\label{eq:transform}
    (x,t) &= (ax'+bt', cx'+dt') \notag \\
    \Rightarrow \mathcal{E}'(v') &= (cv'+d)\mathcal{E}\left(\frac{av'+b}{cv'+d}\right).
\end{align}
For concrete examples of these transformations, see~\autoref{sect:examples} and the Supplement~\cite{supp-ref}.} 

\begin{figure}[hbtp]
\subfloat[]{\includegraphics[width=0.48\linewidth]{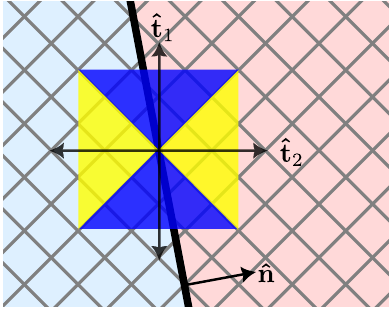}\label{fig:dual}}\hfill
\subfloat[]{\includegraphics[width=0.48\linewidth]{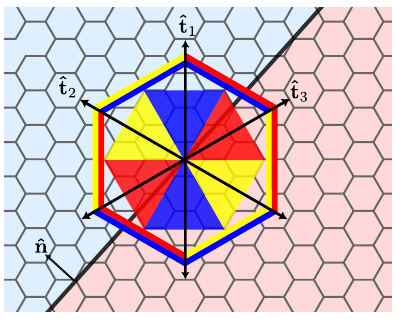}\label{fig:tri}}
\caption{\gs{Emergent (a) square and (b) hexagonal symmetries of models whose line tension can be brought into the form of~\autoref{eq:dual-tension} and~\autoref{eq:tri-tension} respectively. In each panel, the thick black line is a membrane that is spacelike with respect to $\vec{\hat{t}}_2$ (lies outside the yellow light cone), with normal vector $\vec{\hat{n}}$ pointing inside the corresponding yellow light sector, so that~\autoref{eq:normal-vector} holds with $\vec{\hat{t}} = \vec{\hat{t}}_2$. (a) has two pairs of light sectors, which are coincident with the light cones centered on (generalized) unitary directions $\vec{\hat{t}}_1$ and $\vec{\hat{t}}_2$. (b) has three pairs of light sectors (filled triangles) corresponding to three overlapping light cones (red, yellow, and blue borders).} %\david{these "borders" need to be a lot thicker to be visible.}\gs{Are the new borders thick enough?}
\label{fig:sectors}}
\end{figure}

For instance, any 1+1d circuit for which $\mathcal{E}(v)$ is piecewise linear and has only two cusps can be scaled, sheared and rotated to give it a (possibly emergent) square symmetry, so that the two cusps are at $v=\pm 1$, with the standard dual-unitary behavior~\cite{Piroli2019,Zhou2020}:
\begin{equation}\label{eq:dual-tension}
\mathcal{E}(v) = \begin{cases}
|v| & |v| \geq 1 \\
1 & |v| \leq 1~.
\end{cases}
\end{equation}
%follows from considering the dynamics along two unitary directions: unitarity along the "standard" time direction yields the stated line tension for $|v|>v_{LC}=1$ ($S(x,t)=|x|=|vt|=\mathcal{E}(v)|t|$) (which usually is not mentioned explicitly), while unitarity along the "spatial" direction yields the well-known flat line tension for $|v|<1$ ($S(x,t)=|t| = \mathcal{E}(v) |t|$).
The per-unit-length line tension then satisfies:
\begin{subequations}
\begin{align}
\ef(\theta) &= \cos\theta \quad (0\leq \theta \leq \pi/4) \label{eq:dual1} \\
\ef(\theta) &= \ef(\theta + \pi/2) = \ef(-\theta)~,
\end{align}
\end{subequations}
where the orientation of the membrane is denoted with $v=\cot{\theta}$; this explicitly exhibits the square symmetry, \gs{illustrated in~\autoref{fig:dual}.}

We can also re-examine tri-unitary 1+1d circuits (piecewise linear $\mathcal{E}(v)$ with 3 cusps) from this perspective.  Many such circuits can be linearly scaled, sheared and rotated to have a (possibly emergent) hexagonal symmetry, with
\begin{equation}\label{eq:tri-tension}
\mathcal{E}(v) = \begin{cases}
|v| & |v| \geq \sqrt{3} \\
\frac{|v|+\sqrt{3}}{2} & |v| \leq \sqrt{3}~,
\end{cases}
\end{equation}
so
\begin{subequations}
\begin{align}
\ef(\theta) &= \cos\theta \quad (0\leq \theta \leq \pi/6) \label{eq:tri1} \\
\ef(\theta) &= \ef(\theta+\pi/3) = \ef(-\theta)~.
\end{align}
\end{subequations}
Thus $\ef(\theta)$ is periodic with cusps at the boundaries between the six sectors of a "light hexagon" (\autoref{fig:tri}).\footnote{Note: angles $\theta$ and $(\pi + \theta)$ represent the same membrane orientation, so there are only 3 independent light sectors.}

\gs{The DU2 circuits of Ref.~\cite{Yu2023,Foligno2023,Rampp2023} provide a simple example of how the explicitly hexagonal-symmetric form of~\autoref{eq:tri-tension} can be obtained for tri-unitary circuits. All entangling DU2 circuits are either the standard dual-unitary or have a reflection-symmetric line tension with cusps at $v=\pm 1, v=0$ in the units of the brickwork circuit. The latter class of circuits are thus tri-unitary by our definition. In brickwork circuit units, 
\begin{equation}\label{eq:brickwork-tri}
    \mathcal{E}(v) = \begin{cases}
        v_E + (1-v_E)|v|  & |v| < 1 \\
        |v| & |v| \geq 1.
    \end{cases}
\end{equation}
When the entanglement velocity $v_E\equiv \mathcal{E}(0) =1/2$,~\autoref{eq:brickwork-tri} can be brought into the form of~\autoref{eq:tri-tension} just by rescaling $t\rightarrow t/\sqrt{3}$. 
On qubits, $v_E=1/2$ is the only allowed value of the entanglement velocity~\cite{Rampp2023}}, and the emergent hexagonal symmetry can be made explicit by expressing each gate as a dressed CNOT gate~\cite{Balakrishnan2011,Cohen2013} and representing it on a honeycomb lattice~\cite{Liu2023} using ZX calculus~\cite{VandeWetering2020,supp-ref}. On higher-dimensional qudits, however, there exist examples of DU2 gates with $v_E\neq 1/2$~\cite{Foligno2023,Rampp2023}, for which the rectangular symmetry of the original circuit cannot be promoted to hexagonal symmetry.\footnote{\gs{In principle, ``monoclinic'' tri-unitary circuits that cannot be scaled to have any extra symmetry should also be possible, although we have not yet found any such examples.}}

There is a third way of parameterizing the membrane tension that is useful for $d>1$:  For a given choice of the time and spatial directions, define the membrane surface as the function $t(\vec{r})$, where $\vec{r}$ is a $d$-component spatial vector such that the point $(t(\vec{r}),\vec{r})$ lies on the membrane. Then, we can quantify the membrane orientation by the gradient  $\vec{g}=\nabla t(\vec{r}) = \vec{v}/v^2$.  We then define the tension %Projecting the membrane onto an equal-\textit{time} surface yields have 
$\mathcal{E}_{\vec{r}}(\vec{g})$ as the free energy per unit area projected onto an \textit{equal time} surface (not in the surface defined by the membrane), that is,
\begin{equation}
\mathcal{E}_{\vec{r}}(\vec{g}) = \ef(\vec{\hat{n}})/(\vec{\hat{n}}\cdot \vec{\hat{t}}) = \ef(\hat{\vec{n}}) \sqrt{1 + (\nabla t (\vec{r}))^2}.
\end{equation}%For $d>1$ 
$\mathcal{E}_{\vec{r}}(\vec{g})$ is the quantity that, for membrane dynamics in classes (i) and (ii), remains a piecewise linear function of its argument $\vec{g}$ within a light sector, no matter how we linearly deform spacetime, and in any dimension $d$.\footnote{In $d=1$, $g = 1/v$ and $\mathcal{E}_x(g) = \mathcal{E}(v)/v$, so if $\mathcal{E}_x(g)$ is piecewise linear in $g$, then $\mathcal{E}(v)$ is piecewise linear in $v$ and vice versa. However, the piecewise linearity of $\mathcal{E}(v)$ in $d=1$ does not simply generalize to $d>1$.}  

The commonality between~\autoref{eq:dual1} and~\autoref{eq:tri1} points to a more general form for the membrane tension, which holds in higher dimensions as well. Consider a single light sector.  For that light sector, there is a choice of the time direction $\vec{\hat{t}}$ that makes $\mathcal{E}_{\vec{r}}(\vec{g})$ independent of $\vec{g}$, with $\mathcal{E}_{\vec{r}} = \ef(\vec{\hat{t}})$, which is the plateau entropy density for this time direction.  Within that light sector, we then have
%n arrow of time $\vec{t}$ with the maximal plateau entropy density. A membrane running in the spatial direction with respect to this arrow of time, i.e. with normal vector $\vec{\hat{n}} \parallel \vec{t}$, has $\ef = \ef(\vec{\hat{t}})$. Viewing the entropy as "flowing" along each channel across the membrane, we obtain a membrane tension of
\begin{equation}\label{eq:normal-vector}
\ef(\vec{\hat{n}}) = (\vec{\hat{n}} \cdot \vec{\hat{t}} ) \ef(\vec{\hat{t}})~.
\end{equation}
Heuristically, it is as if the entanglement entropy is ``flowing'' along the direction $\vec{\hat{t}}$ for all membrane orientations within this light sector. \gs{In the language of bit threads~\cite{Agon2021}, $\ef(\vec{\hat{n}})$ is the flux density on the minimal surface, and the optimal vector field carrying the bit threads points along $\vec{\hat{t}}$ for the entire light sector.}\footnote{\gs{That is, in the notation of Ref.~\cite{Agon2021}, $w^{\mu} = \ef(\vec{\hat{t}})\vec{\hat{t}}$ when evaluated on the minimal membrane.}} The direction of this ``flow'' then changes discontinuously as the membrane orientation is rotated across the boundary between two light sectors.

In some of the fully multi-unitary models we have studied, \gs{there exist vectors $\vec{\hat{u}}_i$ with "capacity" $s(i)$ such that:
\begin{equation}\label{eq:entanglement-channels}
\ef(\vec{\hat{n}}) = \sum_i |\vec{\hat{n}} \cdot \vec{\hat{u}}_i| s(i).
\end{equation}
for any choice of $\vec{\hat{n}}$.
In such cases, we can alternatively interpret the entropy as flowing independently along these vectors, which we call "flow directions," with the $i$th direction carrying flow $s(i)$.} This constrains all possible membranes, rather than just a single light sector as in~\autoref{eq:normal-vector}. We do not have an information-theoretic understanding of this "flow," but empirically, when (i) a model possesses sets of unitary light cones such that the locus of points that are not spacelike with respect to any time direction are a discrete set of lines, then (ii)~\autoref{eq:entanglement-channels} is satisfied by taking the flow directions to be along these lines~\cite{supp-ref}.

While dual- and tri-unitary circuits have been studied before in some detail,~\autoref{eq:normal-vector}, along with its stronger form~\autoref{eq:entanglement-channels}, is a unifying principle that will prove useful in our treatment of more novel gDU circuits in 2+1d. 

\section{Case studies in $1+1$ and $2+1$d}\label{sect:examples}
We now highlight some STTI Clifford circuits in 1 and 2 spatial dimensions, situating them within the broad framework of the previous sections. All of our circuits have features that go beyond prior works (larger unit cell, measurements, higher dimensions) but fall within the broad umbrella of generalized multi-unitarity.

%{Depending on how long we want the main text to be, these could be put in a supplement.}
%\emph{1+1d Clifford circuits.}
\subsection{$1+1$d Clifford circuits}
A $(b, a, d)$ STTI brickwork circuit is defined to have a unit cell that is of depth $b$ layers of gates in time and is $a$ qubits wide.  The lattice translations are: spatial translation by $a$ qubits, with $a$ even; and simultaneous time translation by $b$ layers and spatial translation by $d$ qubits, with $b$ and $d$ having the same parity.

\begin{figure}[t]
    \centering    \includegraphics[width=\linewidth]{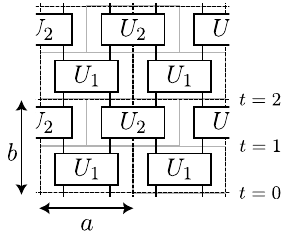}
    \caption{STTI brickwork circuit with ($b=2, a=2, d=0$) unit cell (indicated by dashed lines). Time is measured in the number of layers. \gs{If $U_1=U_2$, then the unit cell is $(b=1, a=2, d=1)$, outlined in light gray.}}
    \label{fig:cell}
\end{figure}

\autoref{fig:cell} shows a $(b=2,a=2,d=0)$ unitary circuit, whose unit cell contains two 2-qubit gates. Each gate can be decomposed into a 2-qubit "core", which for Clifford gates is either dual-unitary (SWAP, iSWAP) %~\cite{Gopalakrishnan2019,Bertini2019} 
or not ($\mathbbm{1}$, CNOT), dressed by single-qubit gates; we will call the single-qubit gates $u_+$, $u_-$. 

\subsubsection{CNOT-core circuits}\label{sect:cnot}

As an extension of earlier works on DU2 circuits \gs{in which every gate is identical~\cite{Yu2023,Foligno2023,Rampp2023}, we first study a reflection-symmetric CNOT-core circuit with two gates per unit cell}. We set
\begin{equation}
U_1 = \cnot (u_+ \otimes u_-)~,
\end{equation}
and make the circuit left-right reflection-symmetric (modulo a space-time translation) by placing the spatially reflected gate in the layers after odd times:
\begin{equation}
U_2 = \mathrm{NOTC} (u_- \otimes u_+)~.
\end{equation}
The scrambling properties of this circuit \comment{and whether or not it's triunitary} depend upon $u_\pm$.  We focus on the class of good scrambling circuits with the densest operator spreading in the sense defined in Ref.~\cite{Sommers2023}, a representative of which has 
\begin{align}\label{eq:idx9}
u_+ &= \exp[-i \pi X/4] = R_X[\pi/2] \notag \\
u_- &= \exp[i\pi (X+Y+Z)/3\sqrt{3}] = R_{(1,1,1)}[-2\pi/3].
\end{align}
Using ZX calculus to express this circuit on the honeycomb lattice, as in Ref.~\cite{Liu2023}, we find that this choice of single-qubit gates makes the circuit tri-unitary, with line tension in the hexagonally symmetric honeycomb geometry given by~\autoref{eq:tri-tension}~\cite{supp-ref}.

While tri-unitarity allows us to determine $\mathcal{E}(v)$ without using any of the numerical probes discussed above, these numerical methods still provide insight into the entanglement dynamics. In the protocol depicted in~\autoref{fig:page-sketch}, the two Page curves at $t=0$ are prepared by feeding an initial pure product state into a scrambling circuit with open boundary conditions, omitting gates along the vertical cut. We can choose the scrambling circuit to be just a random Clifford circuit, thus preparing the left and right subystems in random stabilizer states; or we can use the STTI circuit itself to separately scramble the two halves. The latter implementation allows us to distinguish between "good scrambling" tri-unitary circuits such as that shown in~\autoref{fig:page}, whose $t>0$ entanglement dynamics reproduces the expected line tension after rescaling time, and non-ergodic poor scramblers with the same line tension whose entanglement generation for an initial product state (not a solvable state) is limited by the presence of "gliders"~\cite{Bertini2020soliton,Jonay2021}. {Using an alternative method, we also verify that the piecewise linearity of $S(x,t)$ is not just for Clifford initial states, but holds for typical Haar-random initial states as well~\cite{supp-ref}.} 

\begin{figure}[t]
\includegraphics[width=\linewidth]{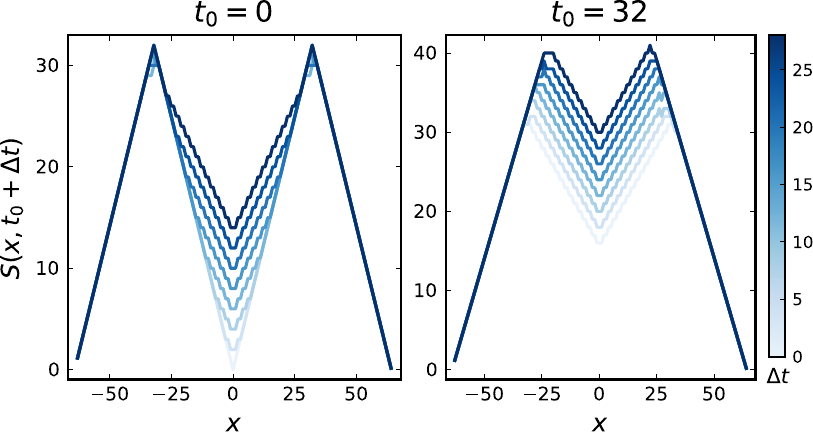}
\caption{Entanglement growth in a CNOT-NOTC circuit with single qubit gates given by~\autoref{eq:idx9}, for $L=128$, using the protocol in~\autoref{fig:page-sketch}. Times in increments of 4 layers are shown. \label{fig:page}}
\end{figure}

\subsubsection{Hybrid circuits}\label{sect:dual-hybrid}
Next, as a demonstration of how generalized dual-unitarity can appear in hybrid circuits, again consider a brickwork circuit with unit cell $(b=2,a=2,d=0)$ as in~\autoref{fig:cell}, where the left incoming leg to each $U_1$ gate is now measured in the $X$ basis. Take
\begin{equation}\label{eq:dual-hybrid-gate}
U_1 = U_2 = \iswap (R_X[\pi/2] \otimes R_{(1,1,1)}[-2\pi/3]).
\end{equation}
In the absence of measurements, this choice of gate would produce a good-scrambling dual-unitary circuit~\cite{Sommers2023}, with the line tension given in~\autoref{eq:dual-tension}.
When we add measurements, however, the entanglement $S(x,t)$ of this "dual hybrid" circuit grows as in~\autoref{fig:dual-hybrid}a.

\begin{figure}[t]
\includegraphics[width=\linewidth]{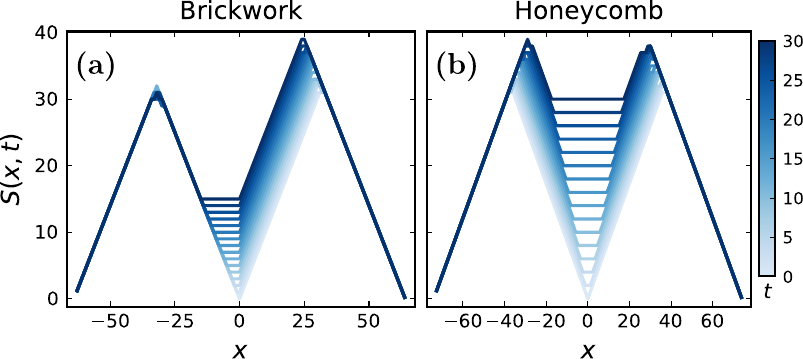}
\caption{(a) Entanglement growth starting from the product of random stabilizer states, in a hybrid circuit with dual-unitary gate~\autoref{eq:dual-hybrid-gate}. ($|x|$ is the number of unmeasured qubits left or right of the central cut.) (b) Entanglement growth in the same circuit deformed onto the honeycomb lattice.\label{fig:dual-hybrid} Times in increments of two layers are shown.}
\end{figure}
This circuit is a case in which the microscopic representation (as a brickwork circuit, on the square lattice) hides the underlying symmetry of a fully multi-unitary model. On the square lattice, the inclusion of measurements strongly breaks the left-right symmetry,
manifesting in an infinite butterfly velocity to the right, and finite butterfly velocity to the left. %But when $v_B=\infty$, the protocol in~\autoref{fig:page-sketch} becomes sensitive to boundary conditions. In this case, the orientation of the boundary conditions suppresses signals with infinite velocity to the left.
But by deforming the circuit onto a honeycomb lattice~\cite{supp-ref}---\gs{namely, applying a horizontal shear in the $(x,t)$ plane, while still running the dynamics along $t$}---we can restore symmetry between left and right and in fact obtain full hexagonal symmetry (\autoref{fig:dual-hybrid}b), with a line tension of
\begin{equation}\label{eq:honeycomb-spatial}
\mathcal{E}(v) = \begin{cases}
1 & |v| < \frac{1}{\sqrt{3}} \\
\frac{|v|\sqrt{3} + 1}{2} & |v| \geq \frac{1}{\sqrt{3}}.
\end{cases}
\end{equation}
This is precisely the tri-unitary line tension (\autoref{eq:tri-tension}) with the conventional time and space directions flipped. Indeed, a $\pi/2$ rotation in the honeycomb lattice representation transforms the dual hybrid circuit with the gate given in~\autoref{eq:dual-hybrid-gate} into the Clifford East model, a DU2 circuit in which every gate is a CNOT~\cite{Gopalakrishnan2018,Bertini2023,Yu2023}. 

More generally, whenever the time and space axes are chosen such that $v_B=\infty$, the method for extracting $\mathcal{E}(v)$ sketched in~\autoref{fig:page-sketch} can exhibit unwanted sensitivity to details of the boundary. This sensitivity also appears in the properties of the steady-state plateau group of hybrid circuits and recurrence times of pure states within the plateau subspace~\cite{supp-ref}. We emphasize that while these features may appear to violate the qualitative picture of an entanglement interface with a local free energy, the underlying (local, zero-temperature) membrane behavior can be uncovered by running the dynamics through a different angle so that no information flows along the space direction.

%\emph{2+1d Clifford circuits.}
\subsection{$2+1$d Clifford circuits}\label{sect:2d}
To illustrate these concepts in higher dimensions, we next present two examples of generalized multi-unitary Clifford circuits in 2+1d. Detailed numerics, {along with a third model which is ternary-unitary~\cite{Milbradt2023}}, are left to the Supplementary Material~\cite{supp-ref}.
\subsubsection{Square lattice parity model}
In the square lattice parity model~\cite{Gopalakrishnan2018}, qubits are placed on the vertices of a square lattice. Denoting the two sublattices $A$ and $B$, in odd layers, a CNOT gate is applied from every $A$ vertex to each of its neighbors (which belong to the B sublattice):
\begin{equation}\label{eq:odd-slp}
\mathbb{U}_1=\prod_{v \in A}\prod_{n \in \mathcal{N}(v)} \cnot(v\rightarrow n)
\end{equation}
and in even layers, the control and target are reversed:
\begin{equation}\label{eq:even-slp}
\mathbb{U}_2 = \prod_{v\in A}\prod_{n \in \mathcal{N}(v)} \cnot(n\rightarrow v)
\end{equation}
Note that the product of the CNOTs acting on a given qubit from each of its neighbors will flip the target if and only if an odd number of the controls is in the state $\ket{1}$, hence the name "parity model." This model has no apparent local conserved quantities, and initially local operators spread with small highly structured gaps, within squares with corners at $(\pm t, 0, t), (0,\pm t,t)$~\cite{Gopalakrishnan2018}.

This model can be elegantly expressed as a ZX diagram~\cite{VandeWetering2020}, a tensor network representation of a quantum circuit in terms of Z (green) "spiders":
\begin{equation}\label{eq:z-spider}
\underbrace{\overbrace{\tikzfig{green-spider}}^{n}}_{m} = \overbrace{\ket{00...0}}^{n}\overbrace{\bra{0...0}}^{m} + e^{i\alpha} \overbrace{\ket{11...1}}^{n}\overbrace{\bra{1...1}}^{m}
\end{equation}
and X (red) spiders:
\begin{equation}\label{eq:x-spider}
\tikzfig{red-spider} = \ket{++...+}{\bra{+...+}} + e^{i\alpha} \ket{--...-}\bra{-...-}.
\end{equation}
In this representation, the square lattice parity model is a simple cubic lattice with X spiders on one sublattice and Z spiders on the other (\autoref{fig:square-lattice-parity}), all with phase $\alpha=0$, and $n+m=6$.  These spiders have cubic symmetry: e.g. the Z spider imposes the constraint that either the quantum states of the qubits on all six legs of that spider are $\ket{0}$ or they are all $\ket{1}$. %, with time pointing in the $+z$ direction. {Each spider has $n+m=6$, and with time running in the $+z$ direction, we can, for example, bend the wires such that X spiders have $n=1, m=5$ and Z spiders have $m=1,n=5$ legs. 
Thus, while the unitarity of~\autoref{eq:odd-slp} and (\ref{eq:even-slp}) is no longer explicit in this simple cubic lattice ``ZX spider'' representation, the cubic symmetry is manifest. In particular, rotational symmetry implies that the model is likewise unitary if we instead run time along the $x$ or $y$ directions.  Therefore, the model is multi-unitary, with three pairs of light cones centered on these unitary arrows of time (\autoref{fig:light-cube}). Each light cone is a square pyramid within which the operators spread when time is oriented somewhere within that pyramid.

%together forming a "light cube". Two-point correlations of one-site operators are constrained the edges common to all three pairs of pyramids, i.e. "channels" of information propagation are the lines $|x|=|y|=|z|$. {but since the model is not integrable, the two correlations vanish after $O(1)$.} This can be seen as a slight variation on the ternary-unitary circuits introduced in Ref.~\cite{Milbradt2023} ("variation" because it has a shift between odd and even layers). 

A smooth patch of membrane at any orientation will lie entirely outside, or on the boundary of, one of these three pairs of light cones, and thus its tension can be computed from unitarity. This results in 3 independent light sectors, pairs of pyramids that together form the "light cube" shown in~\autoref{fig:light-cube}. Like the light cones, the light sectors are centered on the unitary arrows of time, such that the surface tension of a membrane $\mathcal{M}$ with normal vector $\hat{\vec{n}}$ is determined by the inner product with the closest unitary arrow of time (cf.~\autoref{eq:normal-vector}):
\begin{equation}\label{eq:normal-slp}
\ef(\vec{\hat{n}}) = \max_i |\vec{\hat{n}} \cdot \vec{\hat{e}}_i|.
\end{equation}
It should be emphasized that while there are gaps between the light cones (at the corners of cube), the light \textit{sectors} covers all membrane orientations, making this model fully multi-unitary.  One can also consider the dynamics with the time direction pointed towards a corner of this light cube; that dynamics, which consists of spiders with $n=m=3$, is nonunitary within the full Hilbert space but unitary within a subspace [type (ii)] and has an infinite butterfly speed, so its ``light cones'' are all directions that are not orthogonal to this time direction.

\begin{figure}[hbtp]
\subfloat[]{
\includegraphics[width=0.49\linewidth]{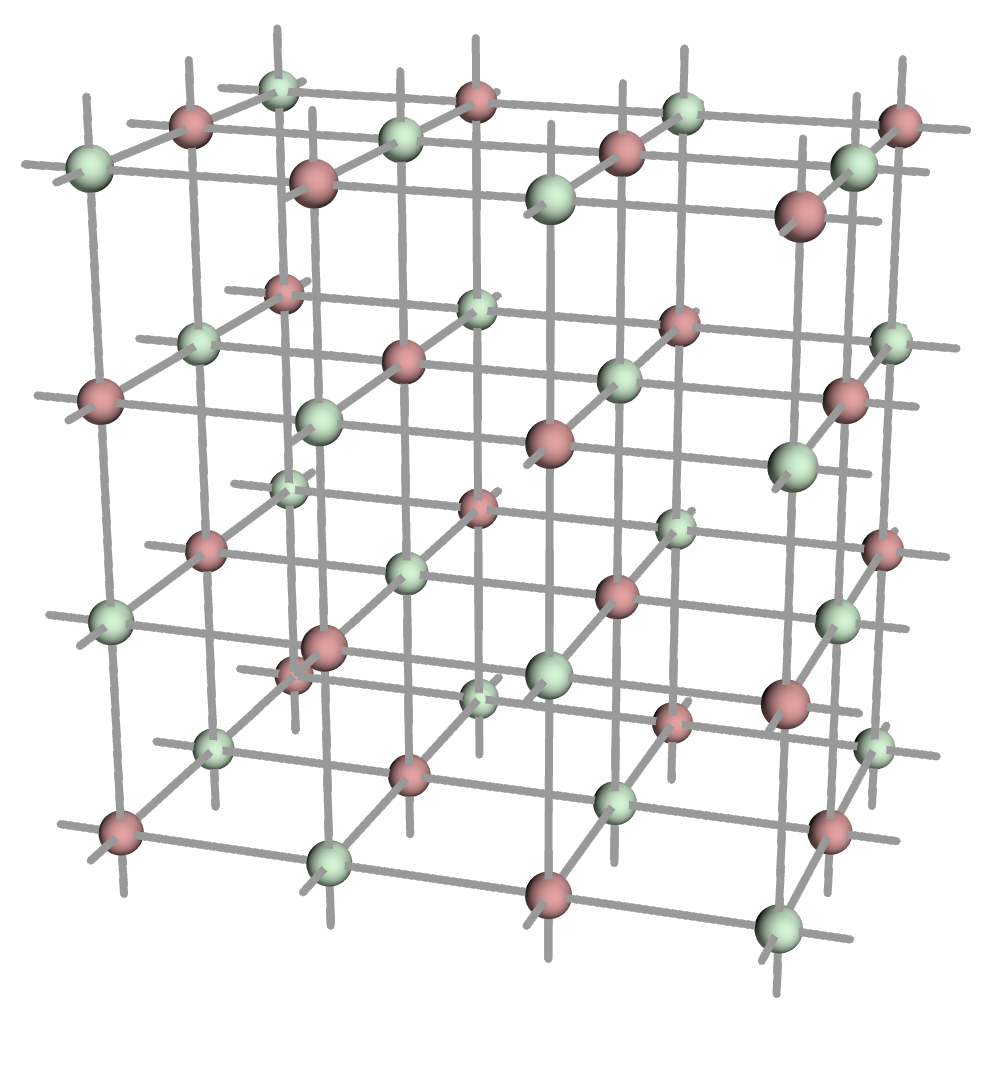}
\label{fig:square-lattice-parity}}
\subfloat[]{%\includegraphics[width=0.5\linewidth]{light-pyramids-sectors.pdf}
\includegraphics[width=0.49\linewidth]{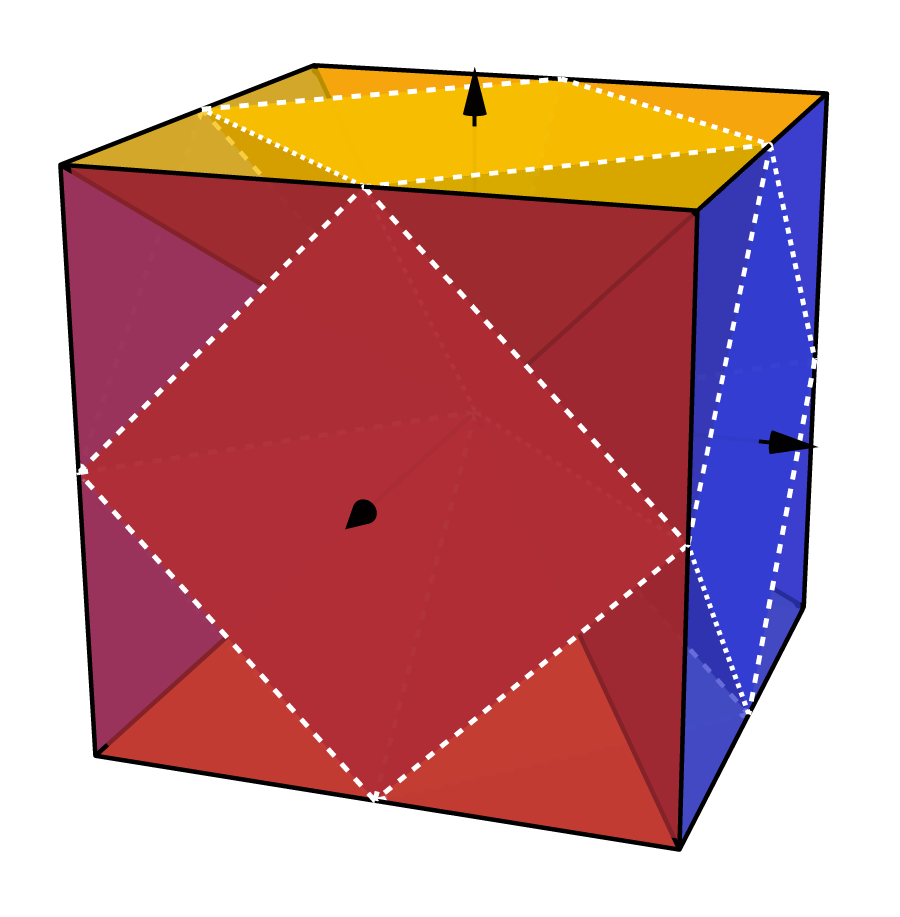}
\label{fig:light-cube}}\\
\caption{(a) Square lattice parity model as a simple cubic lattice of X (red) and Z (green) ``spiders''. A 2d input state is fed in at the bottom surface, and the final state is defined at the top surface. (b) Light cones (outlined in white dashed lines) and light sectors (outlined in black lines) of the square lattice parity model. Red, blue, and yellow pyramids are centered on the $x$, $y$, and $z$ "time arrows", respectively.}
\end{figure}

\begin{figure}[hbtp]
\subfloat[]{
\includegraphics[width=0.49\linewidth]{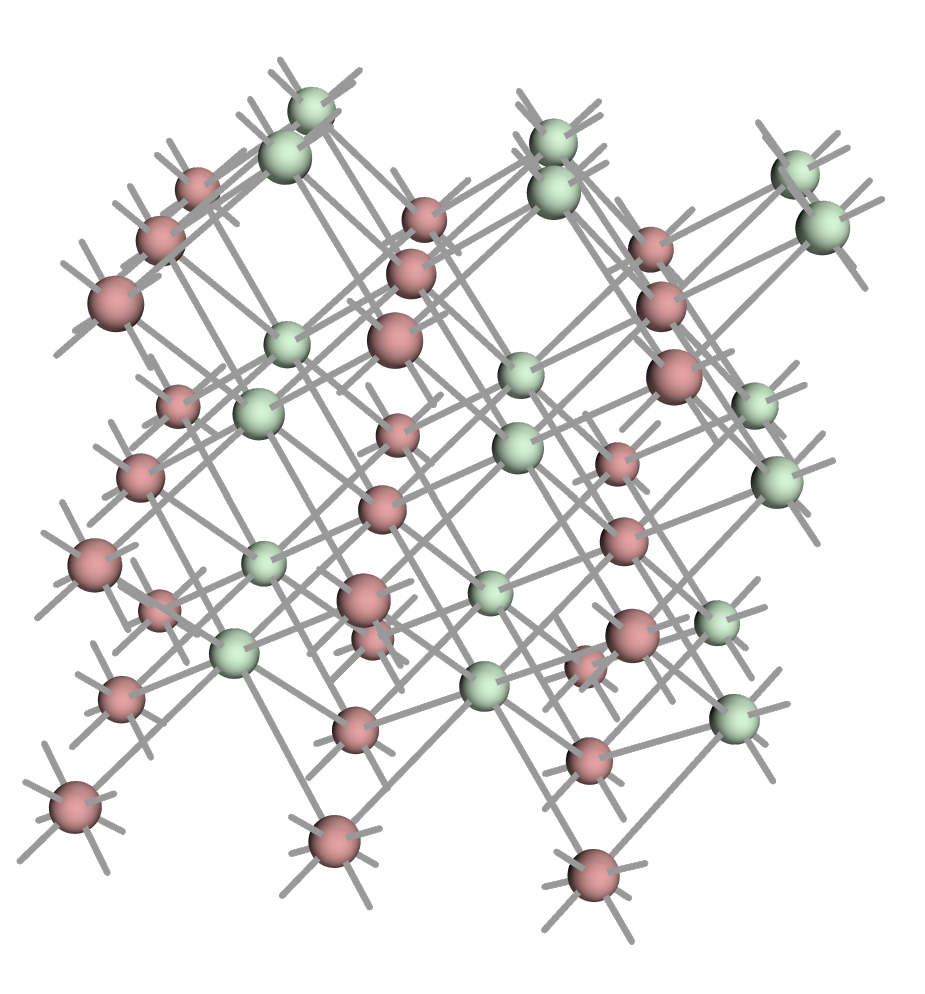}
\label{fig:bcc-lattice}}
\subfloat[]{
\includegraphics[width=0.49\linewidth]{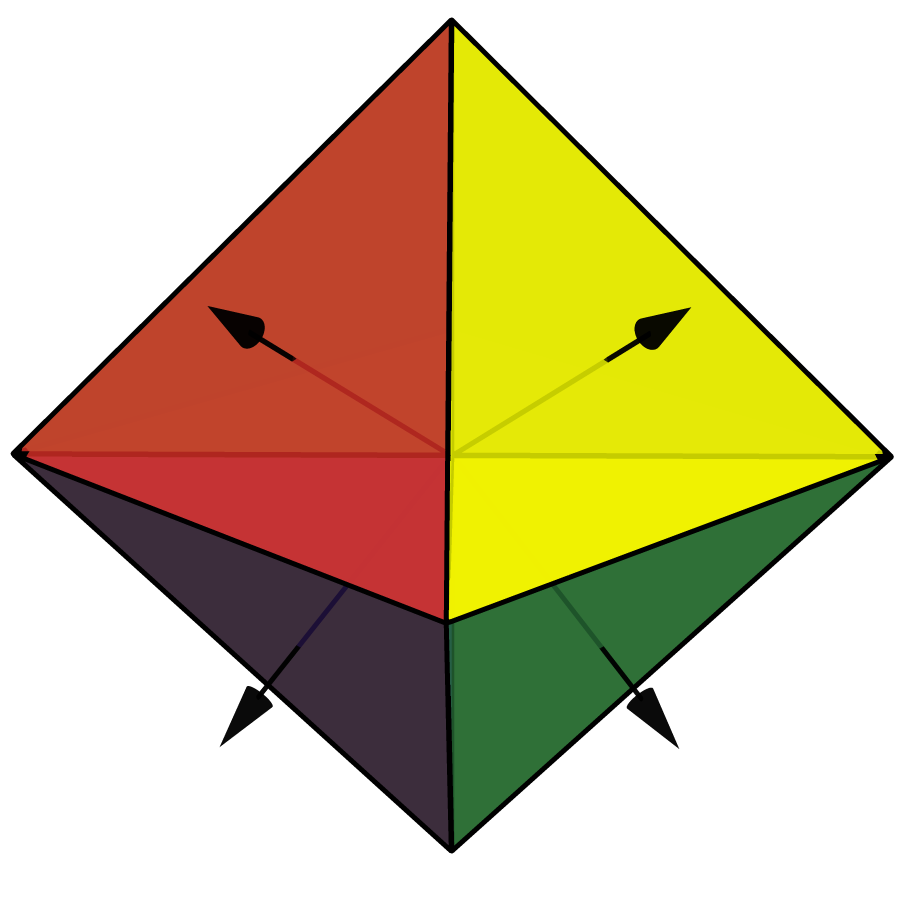}\label{fig:light-octahedron}}
\caption{(a) ZX diagram for the 2+1d model based on the bcc lattice. (b) Light octahedron. The black arrows at the centers of the yellow, blue, red, and green tetrahedra indicate the directions $\vec{t}_1,-\vec{t}_2,\vec{t}_3,\vec{t}_4$, respectively, along which the plateau entropy density is maximal. }
\end{figure}

\subsubsection{The bcc model}
The simple cubic form of the square lattice parity model in terms of X and Z spiders inspires a variation in which the spiders are placed on the vertices of a bcc lattice (\autoref{fig:bcc-lattice}).  Although this model does not have any obviously unitary interpretation, it is fully type-(ii) multi-unitary, with maximal plateau entropy density achieved by running time along the lines with $|x|=|y|=|z|$.  Running time along any of these directions, each spider has $n=m=4$, corresponding to a forced measurement of 3 stabilizers on 4 qubits.  Like the gDU circuits in 1+1d, the bcc model possesses discrete directions for information propagation/entropy flow along the $x$, $y$, and $z$ axes, in the sense that any spreading operator has an edge along one or more of these axes. These flow directions lie on the boundaries between light cones, which, unlike in the square lattice parity model, coincide with the light sectors: as shown in~\autoref{fig:light-octahedron}, the sectors are the four pairs of tetrahedra which compose an octahedron. %(dual polyhedron of the light cube). 
In a given light sector $\mathcal{E}_{\vec{r}}(\vec{g})$ is constant if we choose $\vec{\hat{t}}$ to point along the maximal entropy density direction in that sector, that is,
\begin{equation}\label{eq:normal-bcc}
\ef(\vec{\hat{n}}) = \max_i |\vec{\hat{n}} \cdot \vec{t}_i|
\end{equation} 
where
\begin{align}\label{eq:octa-dir}
\vec{t}_1 = \begin{pmatrix} 1 \\ 1 \\ 1 \end{pmatrix},
\vec{t}_2 = \begin{pmatrix} -1 \\ 1 \\1 \end{pmatrix},
\vec{t}_3 = \begin{pmatrix} 1 \\ -1 \\ 1 \end{pmatrix},
\vec{t}_4 = \begin{pmatrix} 1 \\ 1 \\ -1 \end{pmatrix}. 
\end{align}
\section{Discussion and open questions}\label{sect:conclude}
In this paper, we have provided an interpretation of the entanglement membrane in terms of the flow of entropy across the membrane, and argued that when the dynamics across the membrane has a permanently stable plateau, or an emergent plateau with rare purification events, its free energy is consistent with zero temperature. We present a set of numerical techniques for evaluating the membrane tension and illustrate how these methods recover recent results~\cite{Foligno2023,Liu2023,Rampp2023} on membrane tensions in generalized multi-unitary circuits. Our explorations of new classes of solvable models in 2+1d extend to higher dimensions the geometrical interpretation of multi-unitary circuits as possessing a set of {"light sectors"; the different light sectors are all equivalent under symmetries in the two examples we discuss above.}

Our focus in this work has been on volume-law STTI Clifford circuits. All 1+1d circuits of this type that we have encountered have only two cusps (as in dual-unitary circuits~\cite{Piroli2019,Zhou2020}) or three cusps (as in DU2 or tri-unitary circuits~\cite{Foligno2023,Liu2023,Rampp2023}) in the line tension {as a function of orientation,} even when a mapping to an explicitly dual- or tri-unitary circuit is not known. Can we always construct such a mapping, i.e. do our STTI Clifford circuits always possess %some hidden 
multi-unitarity in the sense of discrete light sectors and flow directions? A precise information-theoretic definition of these flow directions %within 
eludes us, although we offer various interpretations in the Supplement. To this end, it might also be interesting to construct STTI constructions with higher point-group symmetries that admit more than three cusps in the line tension.

While the existence of these light sectors and the corresponding piecewise linearity of the tension is a rather special phenomenon---the domain of multi-unitary models, such as the STTI Clifford circuits considered in this paper---the broad interpretation of information flowing across the entanglement membrane applies to a far more general setting: the volume-law phase of any geometrically local dynamics. A quantum channel that can take a pure state to a mixed state, such as decoherence, generally destroys the volume-law entangled phase, so we might restrict to dynamics that takes pure states to pure states. An exception is if the decoherence acts on a circuit that implements a good quantum error-correcting code. However, error correction is successful precisely when the decoder can recover a pure state despite noise, in which case the error channel followed by recovery is proportional to a projector onto the code space~\cite{nielsen2010}. Thus, the general setting is any quantum circuit (following Trotterization, if the original dynamics is continuous) that can be decomposed into unitary gates and POVMs, possibly with feedback and feeding in fresh qudits.

When the circuit is a stabilizer circuit, we have argued that the membrane at any orientation is at zero temperature, not just in the STTI circuits that are generalized dual- and tri-unitary, but in any (volume-law-entangling) Clifford circuit, with or without randomness. In the 1+1-dimensional random case, the corresponding statistical mechanics model is a $T=0$ DPRM with integer energies. An open question is whether this has any distinctive features from the generic $T>0$ DPRM that appears in the volume-law phase of Haar-random circuits~\cite{Li2023}. Since both the temperature and the details of the energy distributions are irrelevant at the large-system fixed point of the DPRM~\cite{Huse1985,Kardar1985}, there is no obvious distinguishing behavior {that we are aware of.}  We leave a more thorough investigation of this question to future work.  

Generalizing beyond Cliffords, the precise relationships between the existence of a permanently stable plateau, unitarity within the subspace defined by the plateau, and locality-preserving dynamics within the subspace remains open.  All three properties are necessary to define an entire light sector within which the plateau entropy density is unchanged, resulting in a constant membrane tension. Relating the first two properties, Ref.~\cite{Gullans2020} proves that a stable plateau in monitored dynamics, with trajectories sampled according to the Born rule, allows for perfect recovery. However, running a circuit at an arbitrary angle results in post-selected dynamics, e.g. projectors rather than measurements~\cite{Ippoliti2021}. In that setting, the states are not normalized and the plateau dynamics is nonunitary if the normalization depends on the input state.  A simple proof shows that the necessary condition that the normalization $\mathrm{Tr}(K \rho K^\dag)$, {where $K$ is the Kraus operator for one time step,  be independent of the input state for states within the plateau is also a sufficient condition for unitarity within a subspace~\cite{supp-ref}}.  %{$K$ has not been defined.  Can we put a ref to this simple proof?} 
But we do not know of explicit non-Clifford examples with this property beyond DU2 circuits. Such examples would be useful as they retain some features of stabilizer circuits while generalizing beyond them.

{The role of the flatness of the plateau spectrum also merits clarification. Uniformity of the spectrum appears to be an independent property from unitarity, but again, the only class (ii) dynamics known to us with unitarity and locality preservation within a subspace also have a flat spectrum {when the initial state is maximally mixed}~\cite{Foligno2023,Rampp2023}. It would be interesting to find, or disprove the existence of, models where the membrane tension depends on Renyi index but retains light sectors for every Renyi index.} %{Remove:When the plateau is not permanently stable, zero-temperature behavior is preserved as long as the events which change the spectrum are well-spaced in time [class (iii)]; having a flat spectrum makes such behavior more likely, as it limits the entropy to a discrete set of values. [Technically the number of possible values could still be exponentially large in L, unless we have the further constraint that the entropy is an integer. But that's a digression.]}

\begin{acknowledgments}
We thank Sagar Vijay for helpful discussions.  This research was supported in part by NSF QLCI grant OMA-2120757 and an Institute for Robust Quantum Simulation (RQS) seed grant. Numerical work was completed using computational resources managed and supported by Princeton Research Computing, a consortium of groups including the Princeton Institute for Computational Science and Engineering (PICSciE) and the Office of Information Technology's High Performance Computing Center and Visualization Laboratory at Princeton University. The open-source \texttt{QuantumClifford.jl} package was used to simulate Clifford circuits~\cite{QuantumClifford}. 
\end{acknowledgments}

\comment{
\begin{figure*}[hbtp]
    \centering
    \begin{tabular}{cc}
    \adjustbox{valign=b}{\subfloat[]{%
          \includegraphics[width=0.45\linewidth]{timelike-sketch.pdf}}}
    &      
    \adjustbox{valign=b}{\begin{tabular}{@{}c@{}}
    \subfloat[]{%
          \includegraphics[width=0.35\linewidth]{spacelike1.pdf}} \\
    \subfloat[]{%

\includegraphics[width=0.35\linewidth]{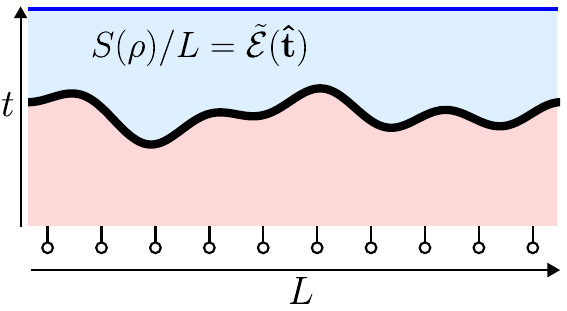}}
    \end{tabular}}
    \end{tabular}
\caption{\gs{ALTERNATE TO FIG. 1: Probes of line tension. In each panel, the time direction is chosen to be vertical, the thick black curve shows a possible path of the membrane, and the blue line on the top boundary indicates the subsystem whose entropy is measured at time $t$. (a) Protocol to determine $\mathcal{E}(v)$ by tracking subsystem entropies $S(x=vt,t)$ starting from two separately scrambled systems at $t=0$. This also illustrates one way to prepare the $t=0$ state: evolve a product initial state (closed circles on bottom boundary) under the chosen dynamics, but omit gates that cross the red dashed cut for $t<0$. Reintroducing these gates at $t=0$ terminates the red cut, making it the preferred endpoint for entanglement membranes within the butterfly cone at $t>0$. (b) Probe of spacelike membranes by evolving a fully mixed initial state (open circles on bottom boundary) under the chosen dynamics, measuring all qubits in the interval $[0,x_1]$ at time $t_1$ and the interval $[x_2,L]$ at time $t_2$, and computing the entropy of the full state at time $t>t_1,t_2$. (c) For an arbitary "time" direction $\vec{\hat{t}}$, evolve a fully mixed initial state on $L$ qubits under $t$ time steps. If the final entropy density converges to a constant value in the limit $L\rightarrow\infty$, $t\rightarrow\infty$, this density is the free energy density of a membrane with normal vector $\vec{\hat{t}}$.}
} %at a coarse-grained level.}
\end{figure*}
}

%\bibliography{bibliography}
%apsrev4-2.bst 2019-01-14 (MD) hand-edited version of apsrev4-1.bst
%Control: key (0)
%Control: author (8) initials jnrlst
%Control: editor formatted (1) identically to author
%Control: production of article title (0) allowed
%Control: page (0) single
%Control: year (1) truncated
%Control: production of eprint (1) enabled
%

%\clearpage
\onecolumngrid
\setcounter{figure}{0}
\setcounter{section}{0}
\setcounter{page}{0}
\let\oldthefigure\thefigure
\renewcommand{\thefigure}{S\oldthefigure}
\setcounter{equation}{0}
\renewcommand{\theequation}{S\arabic{equation}}
\newpage

\title{Supplemental Information: Zero-temperature entanglement membranes in quantum circuits}
\maketitle
\onecolumngrid
In the supplemental information, we elaborate upon the concrete examples presented in the main text and report on several numerical experiments/surveys of Clifford circuits. 
\begin{itemize}
\item~\autoref{sect:channel} discusses the interpretation of light sectors in terms of "flow directions."
\item~\autoref{app:ZX} introduces the ZX calculus which is used to represent the circuits in the subsequent three sections.
\begin{itemize}
\item~\autoref{app:DU2} discusses the 1+1d Clifford circuits composed of CNOT cores, including examples which are tri-unitary on the honeycomb lattice (\autoref{fig:page}) and sheared dual-unitary.
\item~\autoref{app:dual} discusses the "sideways dynamics" of DU2 circuits, including the hybrid circuit composed of dual-unitary gates presented in the main text (\autoref{fig:dual-hybrid}).
\item~\autoref{app:2d} provides more detail on the 2+1d models introduced in~\autoref{sect:2d} of the main text and their relation to prior works.
\end{itemize}
\item Stepping away from explicitly multi-unitary circuits,~\autoref{app:hybrid} is a broad survey of hybrid STTI Clifford circuits, with attention to the aspects that do and do not conform to the membrane picture. \gs{We also discuss the emergent plateau in hybrid random Clifford circuits.}
\item~\autoref{app:probes} demonstrates techniques for probing the entanglement membranes using ancillas.
\item~\autoref{app:unbinding} offers further evidence of the zero-temperature nature of the domain walls in the good-scrambling CNOT-NOTC circuit, which undergoes a first-order unbinding transition when subject to quasiperiodic edge dissipation, and contrasts it with the transition in a random Clifford circuit.
\item~\autoref{sect:haar} analyzes the entanglement growth in STTI stabilizer circuits, acting upon non-stabilizer initial states. In particular, we verify that for the STTI CNOT-core circuit considered in the main text, starting from the product of two Haar-random states, the state-averaged purity grows in the same piecewise-linear fashion as~\autoref{fig:page}, and the variance decays exponentially with system size.
\end{itemize}

\section{Light sectors and flow directions}\label{sect:channel}

~\autoref{eq:normal-vector}  for the membrane tension \textit{within} a light sector can be interpreted as all {of the entanglement entropy ``flowing''} along the direction $\vec{\hat{t}}$, which is the direction of maximal plateau entropy density.~\gs{\autoref{eq:entanglement-channels} provides a different description, in terms of "flow directions" $\vec{\hat{u}}_i$. In the latter interpretation, entanglement flows independently along each flow direction, and the magnitude of the flow normal to the membrane contributes to $\ef(\vec{\hat{n}})$.} %reverses when that flow direction lies inside a membrane, i.e. $\vec{\hat{n}} \cdot \vec{\hat{u}_i} = 0$.} 
The boundaries between light sectors are then the set of orientations
\begin{equation}\label{eq:channel-boundaries}
    \{\vec{\hat{n}} \, | \, \exists \, i \, : \, \vec{\hat{n}} \cdot \vec{\hat{u}}_i = 0\}
\end{equation}
{where at least one of these flow directions lies within the membrane.}

Not only are the $\vec{\hat{u}}_i$ vectors for entropy flow, but in many previously studied examples, two-point correlation functions of one-site operators can be nonzero only along these directions, and are expressible in terms of single-qubit channels~\cite{Bertini2019,Jonay2021,Milbradt2023}. Such channels lie along the locus of points that are timelike or lightlike with respect to all time directions, since correlations must vanish along spacelike directions. In other words, when a model possesses unitary arrows of time, we can associate a light cone to each unitary arrow, and typically find that the points common to all light cones fall along a discrete set of lines.\footnote{Of all the fully multi-unitary models we have considered, the only circuit that lacks these flow directions is the square lattice parity model (see \autoref{sect:slp} of this Supplement).} In non-ergodic models, some correlations persist or oscillate to infinite time owing to conserved charges ("gliders") that travel along the channels~\cite{Bertini2020soliton}. In contrast, all correlations decay exponentially in time in ergodic models, or in maximally chaotic circuits~\cite{Aravinda2021}, vanish for all $t>0$. Clifford circuits are a pathological case where the correlator can only take the values $(-1,0,1)$, so in good scrambling gDU Clifford circuits, the correlators vanish at finite time. Even in such instances, the presence of flow directions can still manifest in the OTOC, as the left and right edges of initially local operators move at the velocity of one of the flow directions.

Certain information-theoretic probes, such as Hayden-Preskill recovery along the light cone~\cite{Rampp2023a,Rampp2023}, are only sensitive to flow directions at the maximal butterfly velocity $\pm v_B=\pm 1$. On the other hand, related probes, such as the growth of a single logical operator in an otherwise pure state, are only sensitive to the entanglement velocity $v_E$. At present, we do not know of a concrete information-theoretic interpretation of flow directions that are strictly timelike with respect to certain unitary arrows of time, e.g., the $v=0$ flow direction in tri-unitary circuits. However, in some ergodic tri-unitary Clifford models, including the Clifford East model~\cite{Gopalakrishnan2018}, the $v=0$ flow direction manifests as certain operators spreading only to the left or to the right, with one endpoint sticking to $x=0$.

In all models where we have been able to identify flow directions in the sense described above,~\autoref{eq:entanglement-channels} is consistent with~\autoref{eq:normal-vector} within each light sector, i.e., the light sectors are indeed bounded by the orientations in~\autoref{eq:channel-boundaries}. We address two 2+1d models, the BCC spider model and ternary-unitary models, in~\autoref{app:2d} of this Supplement, and here discuss the simpler 1+1d examples.

\gs{In dual-unitary circuits, the flow directions are along the light cone~\cite{Bertini2019} ($\theta=\pm \pi/4$ where $v=\cot\theta$), so that:
\begin{equation}
\ef(\theta) = (|\cos(\theta-\pi/4)| + |\cos(\theta + \pi/4)|)/\sqrt{2} = (|\cos\theta + \sin\theta| + |\cos\theta - \sin\theta|)/2,
\end{equation}
with the normalization chosen to match~\autoref{eq:dual1}. In this case, the boundaries of the light sectors are also the edges of the light cones, since the two flow directions are orthogonal.} 

\gs{In tri-unitary circuits with (emergent) hexagonal symmetry, the flow directions are along $\pm \pi/6, \pi/2$~\cite{Jonay2021}, so that:
\begin{equation}
\ef(\theta) = (|\cos(\theta+\pi/3)| + |\cos\theta| + |\cos(\theta - \pi/3)|)/2,
\end{equation}
in agreement with~\autoref{eq:tri1}. The boundaries of the light sectors are where $\vec{\hat{n}}$ points along $\theta = \pm \pi/3, 0$, as shown in~\autoref{fig:tri}.}

{We note in passing that even in non-multi-unitary models which do not admit light sectors in the simple form of~\autoref{eq:normal-vector}, if the butterfly velocity is maximal, that should manifest as a cusp in the line tension at $v=\pm 1$, associated with the finite density of information transported along the light cone~\cite{Claeys2020,Rampp2023a}.}

\section{Notation and ZX Calculus}\label{app:ZX}
A helpful graphical tool for decomposing the dynamics of our circuits is the ZX calculus. For a thorough introduction we refer the reader to Ref.~\cite{VandeWetering2020}. Here we just present the minimal amount of required notation.

ZX diagrams are a tensor network representation of quantum circuits in terms of X and Z spiders (\autoref{eq:z-spider} and~\autoref{eq:x-spider}),
along with the Hadamard gate (yellow box), which can be (nonuniquely) expressed in terms of spiders as
\begin{equation}\label{eq:hadamard}
\tikzfig{had}=\tikzfig{hadamard}.
\end{equation}
In all of our diagrams, we use the convention that time flows upward. When no angle is indicated, $\alpha=0$.

Specializing the definition of Z spiders (\autoref{eq:z-spider}) to the case with $m=0$ input legs and 1 output leg, we obtain an X eigenstate:
\begin{equation}\label{eq:x-state}
\tikzfig{x-state} \simeq \ket{+}
\end{equation}
where we will use the notation $\simeq$ to indicate "equal up to a scalar [in this case $\sqrt{2}$] and possibly a Pauli". Similarly, for X spiders:
\begin{equation}
\tikzfig{z-state} \simeq \ket{0}.
\end{equation}

In terms of spiders and Hadamards, the CNOT and CZ gates take simple forms:
\begin{equation}
\cnot = \tikzfig{cnot}, \quad \mathrm{CZ} = \tikzfig{cz}.
\end{equation}
(Since the Hadamard gate is symmetric, it remains well-defined when placed on a horizontal leg.)

CNOT + arbitrary single-qubit rotations, which can be expressed in terms of 2-legged X and Z spiders, implement universal quantum computation. Restricting to Clifford circuits, any Clifford linear map~\cite{VandeWetering2020} can be written as a ZX diagram where the phase on each spider is a multiple of $\pi/2$. Up to multiplication by Paulis, there are six unique single-qubit Clifford gates:
\begin{equation}\label{eq:single-qubit}
\tikzfig{identity} \, , \quad \tikzfig{rz}\, , \quad \tikzfig{rx}\, , \quad \tikzfig{rplus}\, ,\quad \tikzfig{rminus}\, ,\quad \tikzfig{hadamard} \, .
\end{equation}

A central tenet of ZX calculus is that "only connectivity matters" (wires can be bent, stretched, etc., as long as we preserve which legs are inputs and which are outputs). A specific instance of this general rule is that spiders are symmetric tensors, so we can swap any pair of inputs or any pair of outputs without changing the result~\cite{VandeWetering2020}.

Another property that we will exploit is the ability to commute Hadamards through spiders at the cost of changing their color:
\begin{equation}\label{eq:hadamard-rule}
\tikzfig{hadamard-rule}=\tikzfig{red-spider}
\end{equation}
which has the consequence that any ZX-diagram equation still holds if all the X and Z spiders are exchanged (the "color inverse").

Finally, we will use the "spider fusion rules":
\begin{equation}\label{eq:fusion}
\tikzfig{fusion}
\end{equation}
and cancel pairs of connections between opposite spiders:
\begin{equation}\label{eq:pairs}
\tikzfig{pair}.
\end{equation}
\section{Explicit DU2/Tri-unitary}\label{app:DU2}
In this section and the next we demonstrate the explicit connection to generalized dual-unitarity for three examples: (1) good scrambling CNOT-NOTC circuits, (2) reflection-asymmetric CNOT-core circuits, (3) hybrid circuits composed on dual-unitary gates. These circuits all go beyond the one-gate-per-unit-cell constructions of prior works.
\subsection{CNOT-core circuits: Tri-unitarity conditions}
The ability to split CNOT-core gates into pairs of 3-junctions motivates the representation of CNOT-core circuits on the honeycomb lattice, as shown in~\autoref{fig:honeycomb-cnot}.

\begin{figure}
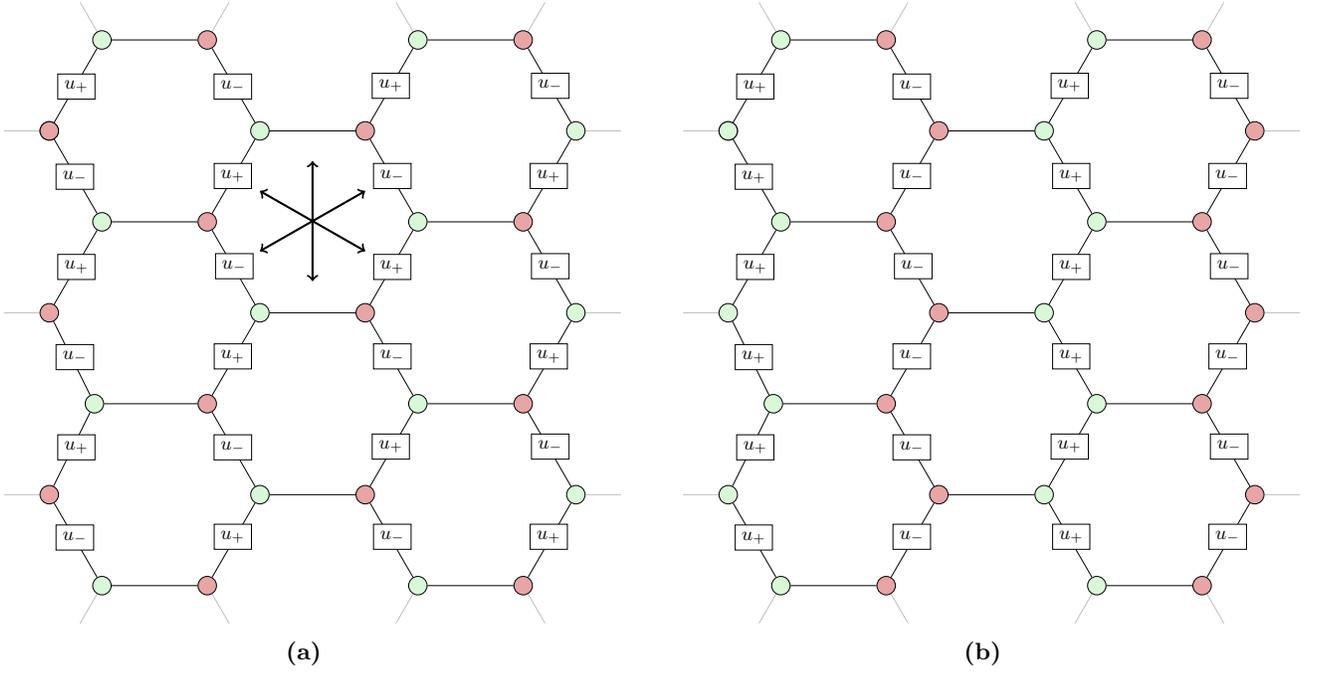

\subfloat[]{
\scalebox{0.8}{\tikzfig{cnot-lattice}}\label{fig:cnot}}
\hfill
\subfloat[]{
\scalebox{0.8}{\tikzfig{cnot-notc}}\label{fig:cnot-notc}}
\caption{(a) CNOT-core circuit ($a=2, b=1,d=1$). (b) Reflection-symmetric CNOT-NOTC-core circuit ($a=2,b=2,d=0$).\label{fig:honeycomb-cnot} Both circuits are networks of red (X) and green (Z) "spiders" interspersed with the single-qubit gates $u_\pm$ (white squares).}
\end{figure}

In this representation, we can then determine the condition for unitarity along six different directions, indicated in~\autoref{fig:cnot}. (These come in pairs related by time reversal, so it suffices to consider just three.) 

Note that
\begin{equation}\label{eq:cz-core-h}
\tikzfig{cz-core} = \sqrt{2} \begin{pmatrix}
u_{00} &&& \\
& u_{01} && \\
&& u_{10} & \\
&&& u_{11}
\end{pmatrix}
\end{equation}
(Here, since $u$ might not equal its transpose, we have distorted to indicate which way time flows; this is a nonstandard diagram.) This two-qubit gate is therefore unitary (up to an overall normalization) if
\begin{equation}\label{eq:unitary-cz}
|u_{ij}| = \frac{1}{\sqrt{2}} \, \forall i,j.
\end{equation}
\gs{i.e., $u$ is a complex Hadamard matrix (again up to normalization), a unitary matrix in which every element has modulus $1$~\cite{Tadej2006,Banica2023}}.\gs{\footnote{The appearance of a complex Hadamard matrix sandwiched between delta tensors (green spiders) in~\autoref{eq:cz-core-h} is reminiscent of the dual Hadamard lattice construction for dual-unitary circuits in Ref.~\cite{Claeys2024}.}.}

First let us consider a CNOT-core circuit in which each gate is identical (\autoref{fig:cnot}), i.e. unit cell $(b,a,d)=(1,2,1)$. The original time direction, which points upward, is unitary by design. Running time along the left lightlike direction, the two-qubit gate is now 
\begin{equation}
\tikzfig{cnot-plus}
\end{equation}
while along the right lightlike direction,
\begin{equation}
\tikzfig{cnot-minus}.
\end{equation}
In both of these equations we have used the color change rule of the Hadamards (\autoref{eq:hadamard-rule}).

Thus, unitarity along these two directions are independent conditions, which via~\autoref{eq:unitary-cz} can be expressed as
\begin{equation}\label{eq:unitary-cnot}
|(u_+ H)_{ij}| = \frac{1}{\sqrt{2}} \textrm{   [left edge of light cone]}, \quad |(H u_-)_{ij}| = \frac{1}{\sqrt{2}} \textrm{    [right edge of light cone]}.
\end{equation}
These conditions are equivalent to those previously derived in Ref.~\cite{Yu2023}. Within the set of Clifford gates, some choices of $u_+, u_-$ satisfy these conditions and others do not. Thus, with Clifford gates alone we can obtain one, two, or three unitary directions. The case of two unitary directions is discussed in more detail in~\autoref{app:cnot-asymm} below. A notable example with three unitary directions is the Clifford East model~\cite{Gopalakrishnan2018,Bertini2023}, where $u_+=u_-=\mathbbm{1}$. This is manifestly tri-unitary~\cite{Yu2023} and in fact self-tri-unitary, since the gates along the three light cone directions are all the same. 

The honeycomb lattice representation also makes it straightforward to generalize beyond the case where each gate is identical. As a minimal demonstration, consider the reflection-symmetric circuits with unit cell $(b,a,d)=(2,2,0)$, composed of alternating CNOT and NOTC cores, as defined in the main text (\autoref{sect:cnot}). Then, as shown in~\autoref{fig:cnot-notc}, the "left light cone direction" is comprised of the gates
\begin{equation}\label{eq:uplus}
\tikzfig{left1} \quad \textrm{and} \quad \tikzfig{left2} = \tikzfig{left3}
\end{equation}

The conditions for unitarity are then
\begin{equation}\label{eq:unitary-cnot-notc}
|(u_+)_{ij}|=\frac{1}{\sqrt{2}}, \quad |(H u_- H)_{ij}| = \frac{1}{\sqrt{2}} \, \forall i,j.
\end{equation}
\comment{
\begin{subequations}\label{eq:unitary-cnot-notc}
\begin{align}
\label{eq:uplus-unitary}
|(u_+)_{00}| &= |(u_+)_{10}| = |(u_+)_{01}| = |(u_+)_{11}| = \frac{1}{\sqrt{2}} \\
\label{eq:uminus-unitary}
|(u_-)_{++}| &= |(u_-)_{-+}| = |(u_-)_{+-}| = |(u_-)_{--}| = \frac{1}{\sqrt{2}}.
\end{align}
\end{subequations}}
If \autoref{eq:unitary-cnot-notc} is satisfied, then tri-unitarity immediately follows, since the evolution along the right light cone is just the mirror image. The circuit in the main text with $u_+$ and $u_-$ given by~\autoref{eq:idx9} is one such example. Graphically, the single-qubit gates are
\begin{equation}
u_+ = \tikzfig{rx}, \quad u_- = \tikzfig{rminus}.
\end{equation}
With time oriented along the left edge of the light cone, this becomes the unitary circuit with gates
\begin{equation}
U_1 = \cnot \, (R_X[\pi/2] \otimes R_{(1,1,1)}[2\pi/3]), \quad U_2 = \mathrm{NOTC} \, (R{(1,1,1)}[-2\pi/3] \otimes H).
\end{equation}
%Double check the right light cone direction.

More generally, we can construct \textit{random} circuits with exactly 2 or 3 unitary directions in the following manner: starting from a brickwork circuit, let each two-qubit gate be CZ dressed with single-qubit gates on each input. Randomly choose the one-site gates on the left inputs from the ensemble satisfying (violating)~\autoref{eq:unitary-cz} to make the time evolution along the left light cone unitary (nonunitary), and similarly for the right inputs. Each ensemble contains non-Clifford gates, but when restricted to Cliffords, 4 of the 6 unique one-site gates (\autoref{eq:single-qubit}) satisfy~\autoref{eq:unitary-cz}.

\subsection{Consequences of unitarity on the honeycomb}
In general, unitarity along the original time direction implies

\begin{equation}\label{eq:unitary}
\mathcal{E}(v) = s_{eq}|v|, \quad |v|\geq v_{LC} \quad \textrm{[unitarity].}
\end{equation}
General constraints on the coarse-grained line tension also impose $\mathcal{E}(v_B^R) = s_{eq} v_B^R$ and $\mathcal{E}(v_B^L) = -s_{eq} v_B^L$~\cite{Jonay2018}, so that~\autoref{eq:unitary} in fact holds through the entire interval $(-\infty,v_B^L] \cup [v_B^R, \infty)$. 

On the honeycomb lattice, $v_{LC}=\sqrt{3}$, and $s_{eq}=2/(3a)$ in units of lattice spacing $a$. Unitarity along the right light cone translates into the identity~\cite{Yu2023,Liu2023}
\begin{equation}\label{eq:unitary-right}
\scalebox{0.5}{\tikzfig{right-unitary}} = \scalebox{0.5}{\tikzfig{right-unitary2}}
\end{equation}
where the color purple indicates the folded gate $U \otimes U^*$, and
\begin{equation}\label{eq:identity}
\scalebox{0.6}{\tikzfig{trace}}.
\end{equation}
~\autoref{eq:unitary-right} allows for analytic computation of the line tension in the interval $-v_{LC}<v<0$:
\begin{equation}\label{eq:right-unitary}
\mathcal{E}(v) = s_{eq} \frac{\sqrt{3}-v}{2}, \quad -\sqrt{3} < v < 0 \quad \textrm{[unitarity along right light cone].}
\end{equation}
And analogously, unitarity along the left light cone, which translates into
\begin{equation}\label{eq:left-unitary}
\scalebox{0.5}{\tikzfig{left-unitary}} = \scalebox{0.5}{\tikzfig{left-unitary2}}
\end{equation}
allows for analytic computation of the line tension in the interval $0<v<v_{LC}$:
\begin{equation}
\mathcal{E}(v) = s_{eq}\frac{\sqrt{3}+v}{2}, \quad 0 < v < \sqrt{3} \quad \textrm{[unitarity along left light cone].}
\end{equation}

Thus, for our representative circuit in the main text [\autoref{eq:idx9}], we recover the tri-unitary line tension reported in~\autoref{eq:tri-tension} by choosing lattice spacing $a=2/3$. The same line tension is obtained for any tri-unitary CNOT (\autoref{eq:unitary-cnot}) or CNOT-NOTC circuit (\autoref{eq:unitary-cnot-notc}). However, some of these circuits are non-ergodic, possessing operators that move along the flow directions $v=0,v=\pm \sqrt{3}$ without spreading ("gliders")~\cite{Jonay2021}. These non-ergodic circuits fail to generate maximal entanglement from an initial pure product state, which can be diagnosed using the protocol in~\autoref{fig:page-sketch}. 

Finally, a note on units: with the symmetry of the honeycomb it is natural to say that the flow directions have velocities $v=0, v=\pm \sqrt{3}$. But the spacing between qubits on an equal-time slice is $\sqrt{3}$ times the spacing between consecutive layers in time. Thus if we measure time in number of layers, and space in number of qubits, the nonzero-velocity flow directions are at $v=\pm 1$. With the latter convention, which also yields $s_{eq}=1$, we recover the line tension derived for DU2 circuits in Ref.~\cite{Foligno2023,Rampp2023}. 

\subsection{Reflection-asymmetric CNOT-core circuit}\label{app:cnot-asymm}
The left-right reflection symmetry of the $(b,a,d)=(2,2,0)$ CNOT-NOTC circuits in the main text means that these circuits contain precisely one or three unitary directions. Without this symmetry, not only can the timelike domain wall energies become asymmetric, but $\mathcal{E}(v)$ can have a minimum at nonzero $v$, i.e., the lowest energy interface has nonzero slope.

A simple setting in which this strong asymmetry manifests is the CNOT-core $(b,a,d)=(1,2,1)$ circuits, particularly those where \textit{two} of the three directions on the honeycomb are unitary. A representative circuit with this behavior has the two-qubit gate
\begin{equation}\label{eq:cnot-asymm}
U = \cnot (R_{(1,1,1)}[-2\pi/3]\otimes R_{(1,1,1)}[-2\pi/3])
\end{equation} 
which is unitary along the original time direction and right light cone direction, but nonunitary along the left light cone direction. In this section we will show how these two directions of unitarity manifest as a piecewise-linear line tension that is "sheared dual-unitary."

The two directions of unitarity (\autoref{eq:unitary} and~\autoref{eq:right-unitary}) give the piecewise form for $\mathcal{E}(v)$ outside the interval $v\in[0,v_B^{R}]$. Using the brickwork circuit units with $v_{LC}=1$ yields:

\begin{equation}
\mathcal{E}(v) = \begin{cases}
    |v| & v \in (-\infty,-1] \cup [v_B^R,\infty)  \\
    \frac{1-v}{2} & v \in(-1,0)
\end{cases}
\end{equation}
\comment{
\begin{equation}
\mathcal{E}(v) = s_{eq} \begin{cases}
|v| & v \in (-\infty, -\sqrt{3}] \cup [v_B^R,\infty) \\
\frac{\sqrt{3}-v}{2} & v \in [-\sqrt{3}, 0]
\end{cases}
\end{equation}
}
The cusp at $v=-1$ is associated with operators spreading with left endpoints along the left light cone, with velocity $v_B^L = -1$. The Schr{\"o}dinger evolution of Pauli operators in Clifford circuits is particularly simple because Clifford gates normalize the Pauli group: a single Pauli string transforms into a single Pauli string, rather than a superposition of many strings. As shown in~\autoref{fig:asymm-operator}, in the circuit with local gate~\autoref{eq:cnot-asymm} initially local Paulis spread within a region bounded by $v_B^L = -1, v_B^R = 1/3$. This fixes the line tension 
%$\mathcal{E}(1/\sqrt{3)=s_{eq}/\sqrt{3}$. 
$\mathcal{E}(1/3) = 1/3$. Convexity of $\mathcal{E}(v)$ further fixes the slope in the interval $[0,1/3]$ to be $-1/2$, so in fact $\mathcal{E}(v)$ is continuous within the butterfly cone:
\comment{
\begin{equation}
\mathcal{E}(v) = s_{eq} \begin{cases}
    \frac{\sqrt{3}-v}{2} & v \in [-\sqrt{3},1/\sqrt{3}]  \\
    |v| & \mathrm{otherwise}
\end{cases}
\end{equation}
}
\begin{equation}
\mathcal{E}(v) = \begin{cases}
    \frac{1-v}{2} & v \in [-1,1/3]  \\
    |v| & \mathrm{otherwise}
\end{cases}
\end{equation}
as seen in~\autoref{fig:asymm}. That is, the only cusps in the line tension are along the (asymmetric) butterfly cone directions.

\begin{figure}[t]
\centering
\includegraphics[width=0.8\linewidth]{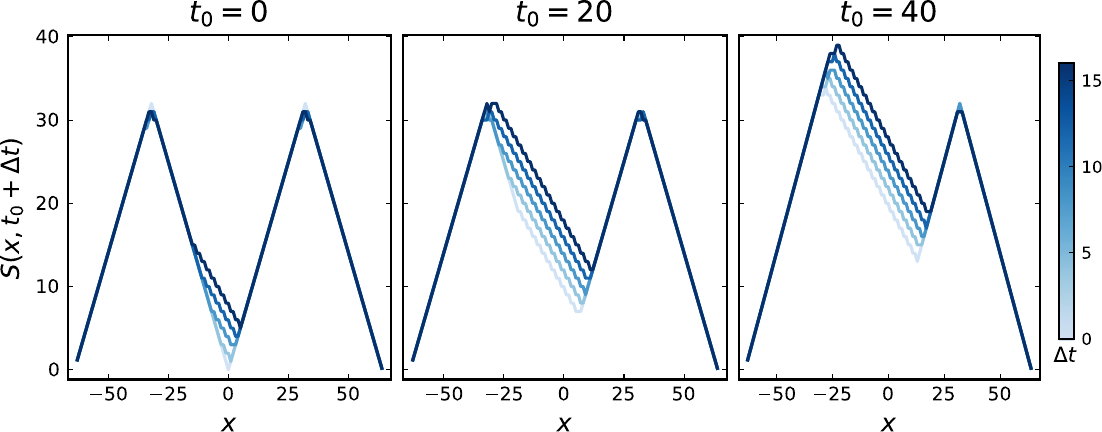}
\caption{Entropy growth in an asymmetric dressed CNOT circuit belonging to the SDKI class, on $L=128$ qubits. Each gate is $(R_{(1,1,1)}[-2\pi/3] \otimes R_{(1,1,1)}[-2\pi/3]) \cnot$. The entanglement after every four layers is plotted, with darker curves correspond to later times. %{I also have a version which just shows $x\in[-L/4,L/4]$.} 
\label{fig:asymm}}
\end{figure}

\begin{figure}[hbtp]
\subfloat[]{\includegraphics[width=0.41\linewidth,height=3in,keepaspectratio]{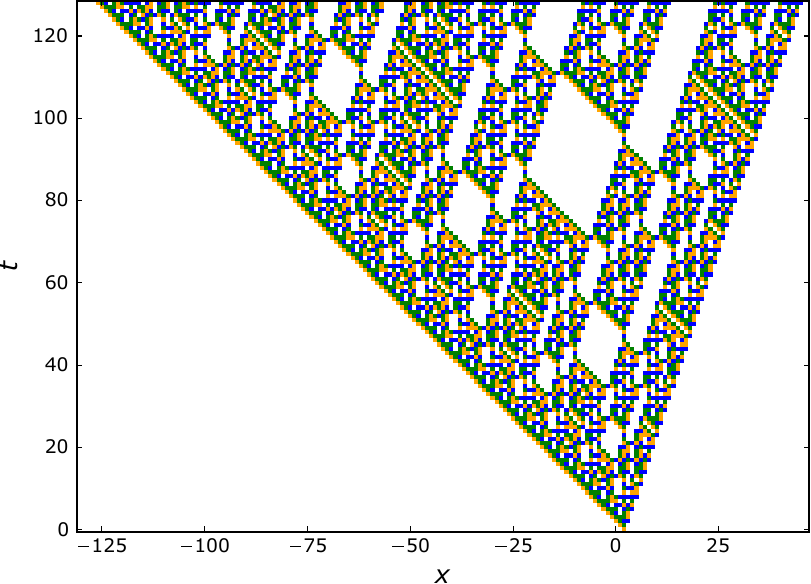}
\label{fig:asymm-operator}}
\subfloat[]{
\includegraphics[width=0.55\linewidth,height=3in,keepaspectratio]{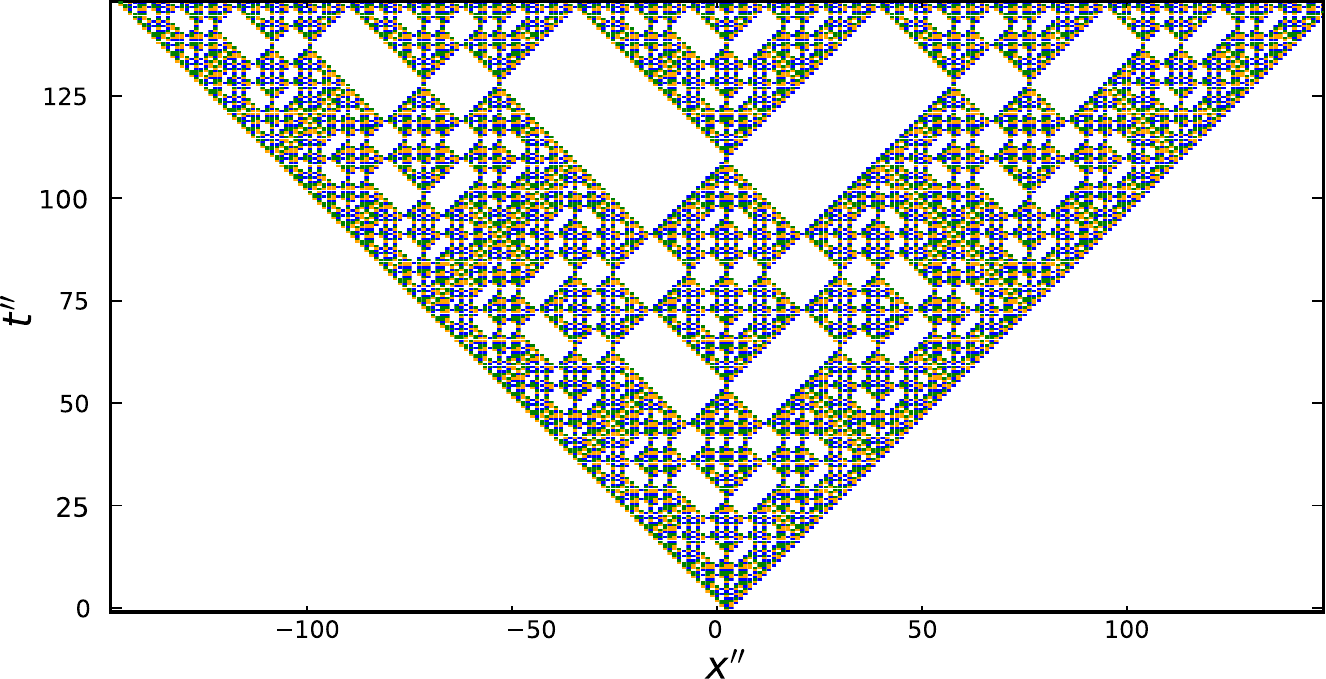}
\label{fig:sheared}}
\caption{Operator spreading in a CNOT-core brickwork circuit where every gate is given by~\autoref{eq:cnot-asymm}. A $Z$ operator is initialized on the right leg entering a gate ($x=2$) and evolved in the Schrodinger picture. (a) shows its spacetime "footprint" in the original coordinates, with a white, blue, green, or orange pixel at the location $(x,t)$ indicating that the time-evolved Pauli string at time $t$ has an $I$, $X$, $Y$, or $Z$, respectively, at position $x$. The same spacetime footprint is shown in (b) in terms of the transformed coordinates $(x'',t'')$ [\autoref{eq:transformed}].}
\end{figure}

It turns out there is a deeper reason for why $\mathcal{E}(v)$ takes such a simple form. While the edges of the spreading operators tell us the butterfly velocities, the bulk tells us the fractal dimension $d_f$, that is, the scaling of the non-identity spacetime volume $V\sim t^{d_f}$ within the image of the operator. For this particular class of circuits, $d_f=\log_2[(3+\sqrt{17})/2]=1.832...$, the same fractal dimension as that of the (dual-unitary) self-dual kicked-Ising (SDKI) circuit~\cite{Gutschow2010fractal,Akila2016,Bertini2018,Gopalakrishnan2019,Sommers2023}. In fact, a shearing + rescaling transformation yields the same operator spreading as in the SDKI model: first shear the lattice as
\begin{equation}
(x,t) \rightarrow (x',t')=(x + t/3, 2t/3)
\end{equation}
so that $|v_B^L| = v_B^R = 1$, and then stretch/shrink along the light cone axes:
\begin{equation}
(t'+x',t'-x') \rightarrow (t''+x'', t''-x'')=((t'+x')/\sqrt{3}, \sqrt{3}(t'-x'))
\end{equation}
so that
\gs{
\begin{equation}\label{eq:transformed}
x = \frac{\sqrt{3}}{2} x'', \quad t = \frac{\sqrt{3}}{2} (x''+2t'').
\end{equation}
}
This yields~\autoref{fig:sheared}. \gs{By~\autoref{eq:transform},} this same transformation turns the asymmetric line tension into the flat line tension characteristic of dual-unitary circuits:
\begin{equation}
\mathcal{E}''(v'') = \frac{\sqrt{3}}{2} \begin{cases} 1 & |v| < 1 \\
|v| & |v| \geq 1
\end{cases}.
\end{equation}
%{Note that the entanglement in this frame is submaximal, suggests if you run the transformed circuit (which includes non-unitary gates) it will hit a plateau at $s=\frac{\sqrt{3}}{2}$. But that might just be the rescaling?}

\section{Duals of Tri-Unitary Circuits}\label{app:dual}
In this section we provide more detail on the spatial dynamics of the tri-unitary/DU2 circuits and the relation to the hybrid circuit with dual-unitary gates introduced \autoref{sect:dual-hybrid} of the main text.

\subsection{Running the honeycomb circuit sideways}

\begin{figure}[t]
\subfloat[]{
\includegraphics[width=0.55\linewidth,height=3in,keepaspectratio]{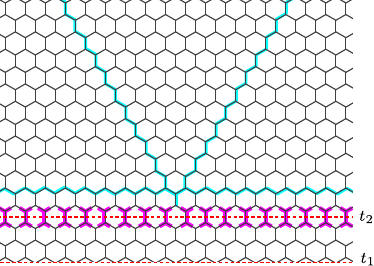}\label{fig:honeycomb-spatial}}
\hfill
\subfloat[]{
\includegraphics[width=0.4\linewidth,height=3in,keepaspectratio]{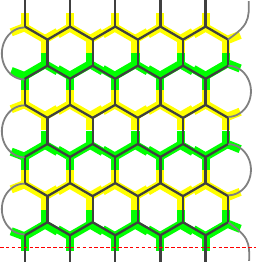}\label{fig:honeycomb-obc}}
\caption{CNOT-core circuit run in the "spatial" direction. (a) Zoom-out view of the honeycomb lattice. Highlighted in blue are the three flow directions running at $v=\pm 1/\sqrt{3}$ and $|v|=\infty$. Highlighted in purple is a row of non-unitary gates obtained by taking the dual of a CNOT-core gate. If we cut time along the red dashed line labeled $t_1$, then the spatial circuit is a brickwork circuit of these purple gates. The dashed line labeled $t_2$ is an alternate way to cut time which reduces the Hilbert space dimension by 1/2. (b) The spatial circuit with OBCs chosen so that, cutting time along $t_2$, the dynamics is explicitly unitary: alternating cascade of yellow gates and green gates.}
\end{figure}

The \dam~of CNOT-core tri-unitary circuits can be interpreted in two ways. From one perspective, since the circuit defined on the honeycomb lattice is tri-unitary, any angle along which we run the membrane will fall outside the light cone of some unitary time arrow, making the \dam~strictly unitary, group (i) of~\autoref{sect:dam}. From a second point of view, in terms of the gDU hierarchy, these circuits belong to the second level (DU2) and their spatial dynamics is only unitary within a subspace. In the language of entanglement membranes, the dimension of this subspace relative to the full Hilbert space is given by the entanglement velocity $v_E=\mathcal{E}(0)$, that is, the entropy density of the plateau group running time in the spatial direction. This entropy density is set by the Schmidt rank $\mathcal{R}$ of $U$~\cite{Rampp2023}, that is, the number of nonzero eigenvalues of
\begin{equation}\label{eq:u-utilde}
\tilde{U}\tilde{U}^\dag = \scalebox{0.5}{\tikzfig{Urep}}
\end{equation}
where $\tilde{U}$ is the dual gate $\tilde{U}^{o_2 i_2}_{o_1 i_1} = U^{o_1 o_2}_{i_1 i_2}$. In DU2 circuits,~\autoref{eq:u-utilde} has a flat spectrum~\cite{Foligno2023}, and the entanglement velocity is:
\begin{equation}
v_E = \frac{\log \mathcal{R}}{\log q^2}.
\end{equation} 
For a generic unitary, $\mathcal{R}$ can range from $1$ (e.g. for the identity gate) to $q^2$ (e.g. for a dual-unitary gate); the DU2 conditions restrict $\mathcal{R}$ to certain values (although the constructions in Ref.~\cite{Rampp2023} are not exhaustive). On qubits, CNOT has $\mathcal{R} = q = 2$, and we recover the entanglement velocity of $1/2$ in brickwork circuit units.

To reconcile the full tri-unitarity on the honeycomb with the unitarity only within a subspace for the spatial dynamics, consider rotating the honeycomb lattice in~\autoref{fig:honeycomb-cnot} by $\pi/2$ to obtain~\autoref{fig:honeycomb-spatial}. 

In the figure, we show two constant time slices on which to cut the legs. If we cut time along the slice labeled $t_1$, then the spatial dynamics is a brickwork circuit of nonunitary gates, one row of which is shown in purple in~\autoref{fig:honeycomb-spatial}. Each nonunitary gate is the dual of a CNOT-core, which can be decomposed into a forced measurement of $ZZ$ (projector onto +1 eigenvalue) followed by a unitary~\cite{Ippoliti2021}. Thus, starting from a fully mixed $L$-qubit state, the first layer of nonunitary gates purifies the state by $L/2$ bits. In fact, the state stops purifying after this first layer, reaching a steady state stabilizer group with entropy density $1/2$. This "plateau group" is the subspace within which the spatial evolution is unitary, with half as many logical qubits as physical qubits.

If we instead slice time along the dashed line labeled $t_2$, however, the Hilbert space only contains $L/2$ physical qubits to begin with. On this Hilbert space, the spatial dynamics can be made explicitly unitary by choosing open boundary conditions as in~\autoref{fig:honeycomb-obc}. The resulting circuit can be expressed as alternating "cascades" of gates along the left and right light cones of the original honeycomb lattice, which are both unitary if the underlying circuit is tri-unitary. %{If you put in PBCs you get this special condition of always purifying one extra bit at least. Maybe don't mention.} 

 To summarize, with the appropriate choice of boundary conditions, the \dam~of DU2 circuits on qubits composed of entangling gates fall into group (i). An open question is whether this is true of all gDU circuits: can we tune the boundary conditions such that for any membrane orientation, the \dam~is strictly unitary, not just unitary within a subspace? If so, then fully multi-unitary circuits of type (i) and (ii) coincide.
 
Highlighted in blue in~\autoref{fig:honeycomb-spatial} are the flow directions of the $\pi/2$-rotated circuit, at $v=\pm 1/\sqrt{3}$ and $v=\pm \infty$---the latter implying an infinite "butterfly velocity" with respect to the rotated arrow of time. These flow directions are precisely those observed exhibited in the entanglement growth of the dual hybrid circuit (\autoref{eq:honeycomb-spatial} and \autoref{fig:dual-hybrid}b). In ~\autoref{app:dual-hybrid} we show where this connection comes from.

\subsection{Boundary conditions for the membrane tension}

Before proceeding, let us note a third way to define "boundary conditions" that aligns more closely with tensor network calculations of the line tension.\footnote{\gs{See also the "depolarizing" or "light cone" boundary conditions of Ref.~\cite{Ippoliti2021}.}} To obtain the $n$th Renyi tension, $\mathcal{E}_n(v)$ of a membrane running from $(0,0)$ to $(x=vt,t)$ in a circuit that is unitary with respect to the standard time direction, one would evaluate the $n$-replica object~\cite{Zhou2020,Foligno2023,Rampp2023}:
\begin{equation}
    e^{-(n-1)\mathcal{E}_n(v) t} = \frac{1}{q^{nt}}\scalebox{0.5}{\tikzfig{membrane}}
\end{equation}
where the orange shading indicates $n$ copies of the folded gate, and the mark in the corner indicates that the standard unitary time direction flows to the right. On the legs terminated with circles, each folded gate is contracted with itself (identity permutation, generalizing~\autoref{eq:identity} to $n$ copies), while on the legs terminated with squares, a cyclic permutation is performed:
\begin{equation}
\scalebox{0.6}{\tikzfig{identity-perm}} = \overbrace{\scalebox{0.6}{\tikzfig{identity-perm2}}}^{2n}, \qquad \scalebox{0.6}{\tikzfig{swap-perm}} = \overbrace{\scalebox{0.6}{\tikzfig{swap-perm2}}}^{2n}.
\end{equation}
 That is, we are feeding a fully mixed state on $t$ qudits into a circuit composed of the dual gates:
 \begin{equation}\label{eq:dual-dynamics}
 \rho(t) = \frac{1}{q^t} \scalebox{0.5}{\tikzfig{membrane-v3}}
 \end{equation}
 then calculating the Renyi entropy of the output state $\rho(t)$:
 \begin{equation}
     \mathcal{E}_n(v) t = - \frac{1}{1-n} \log \mathrm{Tr}(\rho(t)^n).
 \end{equation}

 In the main text, when we define the tension of the membrane by the plateau entropy density of the dynamics running across it, in the limit $L\rightarrow\infty, t\rightarrow\infty$, we are essentially asserting that this plateau entropy matches the output entropy density of~\autoref{eq:dual-dynamics}, which contains $(t-x)(t+x)/4$ gates. For $v=0$, this corresponds to saying that the plateau entropy density of the "sideways" brickwork circuit with the appropriate boundary conditions matches the output entropy density of the $t$-qubit, \gs{$(t-1)$-layer} diamond-shaped circuit in~\autoref{eq:dual-dynamics}. 
 
 \gs{More generally, for $|v|<1$ in DU2 circuits,~\autoref{eq:dual-dynamics} reaches the plateau as soon as the circuit has reached its maximal width, of $t-|x|$ qubits. This can be seen from the factorization of the right influence matrix $\ket{R}$~\cite{Rampp2023}: after $(t-|x|)/2$ layers (bottom dashed line), the state evolves to:}
 \begin{equation}\label{eq:rho-input}
     \rho((t-|x|)/{2}) = \frac{1}{q^t} \scalebox{0.5}{\tikzfig{membrane-input}} = \frac{1}{q^t} \scalebox{0.5}{\tikzfig{membrane-input2}}
 \end{equation}
\gs{which has entropy $|x| + v_E (t - |x|)$, consistent with~\autoref{eq:brickwork-tri}. The spectrum is invariant under the remaining $(t-|x|)/2-1$ layers. In particular, for each layer of width $t-|x|$ in between the two dashed lines, time evolution just implements a shift:}
 \begin{align}
     \rho((t-|x|)/{2}+1) &= \frac{1}{q^t} \scalebox{0.5}{\tikzfig{membrane-transfer}} = \frac{1}{q^t} \scalebox{0.5}{\tikzfig{membrane-output}}.
 \end{align}
\gs{Finally, when evaluating the entropy, the last $(t-|x|)/2$ layers simplify in a complementary fashion to~\autoref{eq:rho-input}, owing to the factorization of the left influence matrix~\cite{Rampp2023}.}

The structure of~\autoref{eq:dual-dynamics} also lends insight into the behavior of the \dam~in a generic (non-Clifford) unitary circuit, when the slope of the membrane is between $v_B$ and $v_{LC}$. %We claimed in the main text that the \dam~in this case fall into group (iii), i.e. have an emergent plateau. {But does it have a uniform spectrum?} 
Setting $s_{eq}=1$, the general constraints of convexity, $\mathcal{E}(v_B) = v_B$, and $\mathcal{E}(v_{LC}) = v_{LC}$ ~\cite{Jonay2018} impose that the coarse-grained line tension of $\mathcal{E}(v) = |v|$, which follows from strict unitarity outside the light cone, extends all the way down to $|v|=v_B$. Comparing to~\autoref{eq:dual-dynamics}, we see that this line tension implies that the $t$-qubit initial mixed state fed in at the bottom reaches a "plateau entropy" of $|x| < t$ bits. {However, if the circuit is non-Clifford, this plateau is not permanently stable due to exponential tails outside the butterfly cone, and if the spectrum evolves continuously, we might expect that as the slope of the membrane is taken through $v_{LC}$, the membrane changes from being at $T=0$, to $T\neq 0$, although the functional form of the line tension is unchanged.}

\subsection{From Dual Hybrid to Tri-unitary}\label{app:dual-hybrid}

Using tools from ZX calculus, we can bring the dual hybrid circuits into a similar form to the CNOT-core circuits of the previous section. First, we write the iSWAP gate in terms of spiders and Hadamards as
\begin{align}
\iswap \simeq \mathrm{SWAP} \, \mathrm{CZ} \, (R_Z[\pi/2]\otimes R_Z[\pi/2]) = \tikzfig{iswap}.
\end{align}
where we have used the fusion rule (\autoref{eq:fusion}) to absorb the Z rotations into the Z spiders of the CZ gate.

In the dual-hybrid circuit described in the main text, each gate is either immediately preceded by a measurement on the left leg, or immediately followed by a measurement on the right outgoing leg. Performing measurements in the $X$ basis, we choose gates of the form:
\begin{equation}
    \iswap (R_X[\pi/2] \otimes u)
\end{equation}
Since $X$ rotations commute with $X$ measurements, we can omit $R_X[\pi/2]$ on layers immediately after measurements. We can also assume that the measurement outcomes are always +1 (i.e., post-select/apply feedback), since the sign of the measurement outcome just affects the signs on the stabilizer generators and not the entanglement properties. 

Using~\autoref{eq:x-state}, iSWAP followed by measurement is:
\begin{equation}
(\mathbbm{1} \otimes \bra{+}) \iswap = \tikzfig{iswap-out}
\end{equation}
and iSWAP preceded by measurement is:
\begin{equation}
\iswap (\ket{+} \otimes \mathbbm{1}) = \tikzfig{iswap-in}
\end{equation}

Next, using the identity~\autoref{eq:hadamard},
we can deform the hybrid brickwork circuit into a circuit on a honeycomb lattice where each hexagon is:
\begin{equation}
\tikzfig{hexagon-v2}
\end{equation}

We can now run the circuit with the spatial direction along any of the edges of the hexagon and ask whether it is unitary. With space along the direction labeled "1", each two-qubit gate is CZ dressed with single-qubit gates, which is unitary. The dynamics along the "2" and "3" directions are both unitary if $u$ satisfies~\autoref{eq:unitary-cz}. 

Two striking examples deserve mention. First, when $u=H$, the dynamics are self-tri-unitary; any of the three directions is a brickwork circuit with the two-qubit gate
\begin{equation}
\tikzfig{cz-h} = \mathrm{CZ} (R_{(1,1,1)}[-2\pi/3] \otimes R_{(1,1,1)}[-2\pi/3]).
\end{equation}
While this circuit is DU2/tri-unitary, it has gliders that suppress entanglement growth. Indeed, it is precisely the $(\pi/2,\pi/2)$ alternating kicked Ising model (AKIM) discussed in Ref.~\cite{Liu2023}, which has a non-unique steady state.

To obtain better scrambling properties, we can instead set $u = R_{(1,1,1)}[-2\pi/3] \simeq \tikzfig{rminus}$. Then we can choose to cut the legs such that the dynamics along the "1" direction have the two-qubit gate
\begin{equation}
\tikzfig{direction1} = \tikzfig{cnot}
\end{equation}
i.e., the evolution in this direction is just that of the Clifford East model. Since the Clifford East model is also self-tri-unitary, we obtain the same circuit along the other five directions.

From the line tension~\autoref{eq:honeycomb-spatial} on the honeycomb, we can also derive the line tension in the original coordinates on the square lattice. \gs{The honeycomb coordinates $(x,t)$ are related to the brickwork circuit coordinates $(x',t')$ through the transformation:
\begin{equation}
(x,t) = \left(\frac{4x'+t'}{2\sqrt{3}}, \frac{t'}{2}\right)
\end{equation}}
where $\Delta x' = 1$ is the spacing between unmeasured qubits and $t'$ is the number of layers as in~\autoref{fig:cell}. Then the line tension~\autoref{eq:honeycomb-spatial} transforms into \gs{[cf.~\autoref{eq:transform}]}:
\begin{equation}
\mathcal{E}'(v') = \begin{cases}
|v'| & v' < -\frac{1}{2} \\
\frac{1}{2} & -\frac{1}{2} \leq v' \leq 0 \\
|v'| + \frac{1}{2} & v' > 0
\end{cases}
\end{equation}
in agreement with the entanglement growth shown in the left panel of~\autoref{fig:dual-hybrid}. It should be noted, though, that to obtain this growth, we took the initial state at $t=0$ to be the product of two random stabilizer states. If we instead prepare the $t=0$ state by running the STTI hybrid circuit omitting gates along the cut, then the amount of entanglement generated by the circuit is significantly reduced. This is another indication of the fragility of the protocol when $v_B=\infty$; see~\autoref{sect:gpp} of this Supplement for a related example.

\section{Details on 2+1d}\label{app:2d}

In this section we elaborate on the two 2+1d STTI Clifford circuits introduced in the main text (\autoref{sect:2d}) and compare the resulting light sector geometries---light cube for the square lattice parity model, light octahedron for the BCC spider model---to the 2+1d ternary-unitary~\cite{Milbradt2023} and tri-unitary~\cite{Jonay2021} circuits previously introduced.

\subsection{Square lattice parity model}\label{sect:slp}
To represent the square lattice parity model as a simple cubic lattice, we use the ZX calculus fusion rule (\autoref{eq:fusion}) to fuse CNOT gates with the same control qubit. Conversely, to analytically derive the surface tension of an entanglement membrane with normal vector $\hat{\vec{n}}$ (\autoref{eq:normal-slp}), we perform the fusion rule in reverse along to break up the lattice into unitary gates along any angle. 

Running time along the $z$ direction, let us restate~\autoref{eq:normal-slp} in terms of the tension $\mathcal{E}_{\vec{r}}(\vec{g})$ defined in~\autoref{sect:tension-2}. Consider a membrane containing the points $(x,y,t(x,y))$, so it has the unnormalized normal vector

\begin{equation}
\vec{n} = \begin{pmatrix}
    \frac{\partial t}{\partial x} & \frac{\partial t}{\partial y} & -1
    \end{pmatrix} = \begin{pmatrix} g_x & g_y & -1 \end{pmatrix}.
\end{equation}
For any choice of membrane orientation, the~\dam~is unitary in the sense of class (i) owing to unitarity along one of the three axes.
For example, when $\mathrm{argmax}_i(|n_i|) = 1$ (so $\vec{n}$ points inside the red light sector), the membrane is spacelike with respect to the unitary evolution along the $x$ direction, i.e. it lies outside the pair of red light cones shown in~\autoref{fig:light-pyramids}. Thus, the entanglement across the membrane is given by the size of the Hilbert space in the $yz$ plane. Similarly, when $\mathrm{argmax}_i(|n_i|) = 2$ (belongs to the blue light sector), the membrane is spacelike with respect to unitary evolution along the $y$ direction, i.e. lies outside the pair of blue light cones; and when $\mathrm{argmax}_i(|n_i|)=3$ (yellow light sector), the membrane is spacelike to the unitary evolution along the $z$ direction, i.e. lies outside the pair of yellow light cones.

Altogether, this implies a membrane tension of: 
\begin{equation}
\mathcal{E}_{\vec{r}}(\vec{g}) = \max_i(|n_i|) = \max(|g_x|, |g_y|, 1),
\end{equation}
which is equivalent to~\autoref{eq:normal-slp} for $\ef(\vec{\hat{n}})$. This is a natural higher-dimensional analog to the dual unitary line tension in 1+1d, where $\mathcal{E}_r(g_x) = \max(1, |dt/dx|) = \max(1, 1/|v|)$ (more commonly written as $\mathcal{E}(v) = \max(1, |v|))$. The orientations of $\vec{n}$ for which $\mathcal{E}_{\vec{r}}$ has a cusp define the boundaries of the light \textit{sectors}. In this model, the light sectors are the six pyramids shown in~\autoref{fig:light-cube}; note that the light sectors are larger than the light cones, a point of contrast with 1+1d dual-unitary circuits.

As a test of our Page curve protocol against the analytic result, we study membranes with ends pinned to the line $x + my = 0$ at $z=0$ and $x + my = vt$ at $z=t$. The cut at $z=0$ is effected by preparing a state on the lattice  $[-L_x/2,L_x/2] \times [-L_y/2,L_y/2]$, initialized as the product of random stabilizer states on the subsystems with $f_m(x,y)=x + my < 0$ and $f_m(x,y)=x+my > 0$. After evolving under $t$ layers with open boundary conditions, we pin to the line $f_m(x,y) = vt$ by computing the entropy of the subsystem to the left of this line. 

\begin{figure}[t]
\subfloat[]{
\includegraphics[width=0.4\linewidth]{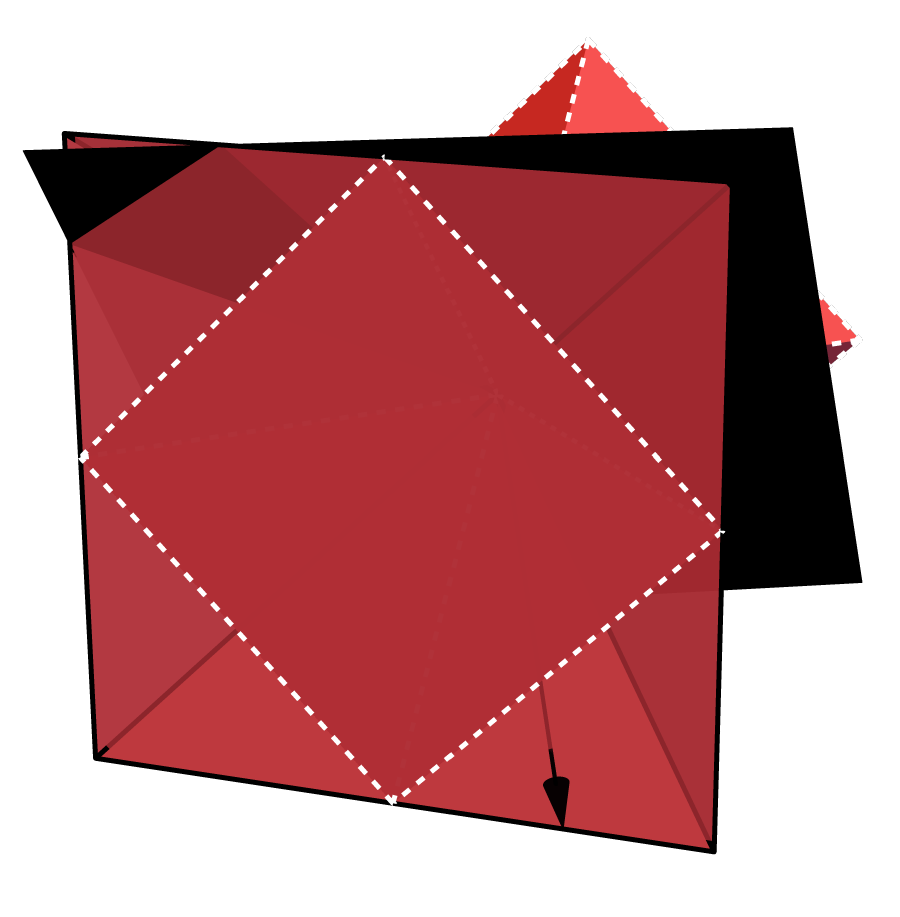}
\label{fig:light-pyramids}}
\subfloat[]{\includegraphics[width=0.5\linewidth]{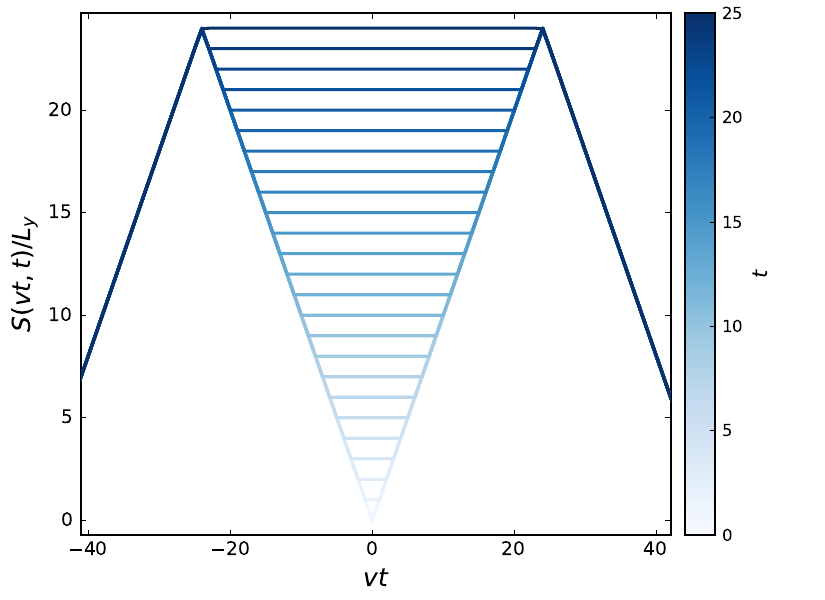}\label{fig:entropy-slp}}
\caption{(a) Portion of the membrane parameterized by $m=0.5,v=0.75$ in the Page curve protocol described in the text, shown in black, with the red light cones of the square lattice parity model outlined in white for reference. The membrane lies outside both red light cones; its normal vector (black arrow) belongs to a red light sector (outlined in black). (b) $S_m(vt,t)/L_y$ for $m=0.5$, up to $t=24$. Time runs in the $z$ direction, and the qubits occupy a $L_x \times L_y$ lattice with $L_x=96$, $L_y=24$ and open boundary conditions.}
\end{figure}

For a given $m$, we take $L_x$ to be large enough compared to $L_y$ that within the light cone ($|v|<1$), there are $L_y$ qubits within the slice $[f_m(x,y) = vt, f_m(x,y) = vt + 1]$. The Page curve protocol gives us access to all "timelike" membranes with $|v| < 1$. An $(m,v)$ membrane has the parameterization
\begin{equation}
    t(x,y) = \frac{x + my}{v} \rightarrow \vec{g} = \left(\frac{1}{v}, \frac{m}{v}\right)
\end{equation}

~\autoref{fig:light-pyramids} shows a patch of the membrane with $m=0.5,v=0.75$; it cuts through both the yellow and the red light cones but is spacelike with respect to the blue light cones. Correspondingly, its normal vector belongs to a light sector centered on the $x$ axis. Indeed, taking $m\in[0,1], |v|<1$ samples all normal vectors within the lower half of the $+x$ light sector (or the upper half of $-x$ light sector), i.e.
\begin{equation}
\mathcal{E}_{\vec{r}}(m,v) = g_x = \frac{1}{v}.
\end{equation}

Since the area of the membrane projected onto an equal-time slice is $L_y |v| t$, we conclude that the entanglement entropy across the cut grows as
\begin{equation}\label{eq:entropy-slp}
S_m(vt,t) = L_y t \quad [|v|<1].
\end{equation}
We find good agreement with this equation for a sampling of $m\in [0,1]$; the entropy growth for $m=0.5$ is shown in
~\autoref{fig:entropy-slp}.

Unlike in the other models we have studied, the membrane tension in the square lattice parity model does not admit a decomposition into flow directions \`a la~\autoref{eq:entanglement-channels}. This is because \textit{every} point is spacelike with respect to some unitary direction: the three pairs of light cones centered on unitary arrows of time have gaps in between them. The lines along $\begin{pmatrix} \pm 1 & 1 & 0 \end{pmatrix}$, $\begin{pmatrix} \pm 1 & 0 & 1 \end{pmatrix}$, $\begin{pmatrix} 0 & \pm 1 & 1 \end{pmatrix}$ are lightlike with respect to two of the three unitary direction, but spacelike with respect to the third.

\subsection{Ternary-unitary models}
Like the square lattice parity model, the ternary-unitary circuits of Ref.~\cite{Milbradt2023} are a class of models in which the time evolution along the $x$, $y$, and $z$ directions is unitary. In that construction, each layer of gates decomposes into 
\begin{equation}
\mathbb{U} = \prod_{oo} U_{oo} \prod_{ee} U_{ee}
\end{equation}
where $U_{oo}$ is a four-qubit unitary gate acting on each plaquette with coordinates $(2n+1, 2m+1)$, and $U_{ee}$ acts on each plaquette with coordinates $(2n,2m)$. As depicted in~\autoref{fig:square-lattice-layer}, the same decomposition holds for each of the square lattice parity model, with $U_{oo}^{(1)}=U_{ee}^{(1)}=U$ where
\begin{equation}
    U(i,j) = \prod_{s=\pm 1/2}\cnot((i+s, j+s)\rightarrow (i-1/2,j+1/2)) \cnot((i+s,j+s)\rightarrow (i+1/2,j-1/2))
\end{equation}
and $U_{oo}^{(2)} = U_{ee}^{(2)}$. But unlike in Ref.~\cite{Milbradt2023}, where each plaquette operator $U_{ee}, U_{oo}$ satisfies the "ternary unitarity" conditions
\begin{equation}
\scalebox{0.5}{\tikzfig{ternary-unitary-z}}, \quad \scalebox{0.5}{\tikzfig{ternary-unitary-x}}, \quad \scalebox{0.5}{\tikzfig{ternary-unitary-y}},
\end{equation}
only the first condition (unitarity in the $z$ direction) is satisfied at the plaquette level for the square lattice parity model. Unitarity along all three directions is restored by the fact that in even layers, $\mathbb{U}_2$, the controls and targets are flipped, i.e. ${U_{oo}^{(1)}}^{o_1 o_2 o_3 o_4}_{i_1 i_2 i_3 i_4} = {U_{oo}^{(2)}}^{o_2 o_3 o_4 o_1}_{i_2 i_3 i_4 i_1}$.

\begin{figure}[t]
\begin{minipage}[c]{0.5\linewidth}
\tikzfig{square-lattice}
\caption{Odd layer of square lattice parity model. With time running out of the page, each spider has an input leg (into the page) and output leg (out of the page), not shown. In even layers, the X and Z spiders are switched, or equivalently, the lattice is shifted one place to the right. \label{fig:square-lattice-layer}}
\end{minipage}
\hfill
\begin{minipage}[c]{0.4\linewidth}
\includegraphics[width=\linewidth]{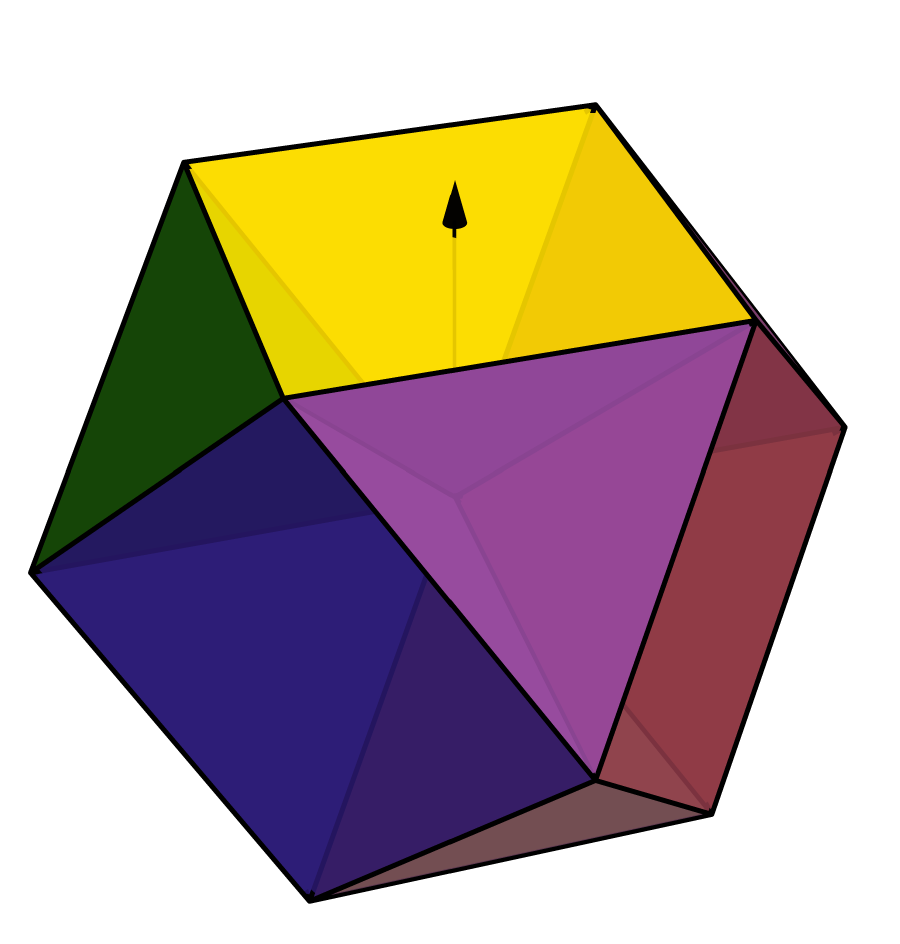}
\caption{Cuboctahedron of light sectors in the ternary-unitary models introduced in Ref.~\cite{Milbradt2023}. The black arrow points in the conventional time direction $\begin{pmatrix} 0 & 0 & 1 \end{pmatrix}$, and the blue face is centered on $\begin{pmatrix} 1 & 0 & 0 \end{pmatrix}$. \label{fig:cuboctahedron}}
\end{minipage}
\end{figure}

A key distinction between the square lattice parity model and the ternary-unitary circuits is in the structure of the light cones and light sectors. In ternary-unitary circuits, running time in the $z$ direction, operators spread within squares with corners at $(\pm t, t, t), (t,\pm t, t)$, i.e. the light cones of ternary-unitary models are the light \textit{sectors} of the square lattice parity model. The lines $|x|=|y|=|z|$ are common to all three pairs of light cones, hence the conclusion in Ref.~\cite{Milbradt2023} that two-point correlation functions of one-site observables can be nonzero along these directions (and only these directions). These are the flow directions $\vec{\hat{u}}_i$ of~\autoref{eq:entanglement-channels}. In contrast, as mentioned above, the light cones of the square lattice parity model intersect only at the origin, ruling out all such correlations beyond $t=0$.

We therefore conjecture that in ternary-unitary models, the free energy per unit area of a membrane with normal vector $\vec{\hat{n}}$ is:
\begin{equation}\label{eq:ternary-u-normal}
\ef(\vec{\hat{n}}) = (|\vec{\hat{n}} \cdot \begin{pmatrix} 1 & 1 & 1 \end{pmatrix}| + |\vec{\hat{n}} \cdot \begin{pmatrix} -1 & 1 & 1\end{pmatrix}| + |\vec{\hat{n}} \cdot \begin{pmatrix} 1 & -1 & 1 \end{pmatrix}| + |\vec{\hat{n}} \cdot \begin{pmatrix} 1 & 1 & -1 \end{pmatrix}|)/4
\end{equation}
where the normalization comes from considering the purely spatial membrane with $\vec{\hat{n}} = \vec{\hat{e}}_3$.

According to this conjecture,  the boundaries on the light sectors of the ternary-unitary circuits are the loci of $\vec{\hat{n}}$ for which one of the flow directions runs inside the membrane normal to $\vec{\hat{n}}$ [cf.~\autoref{eq:channel-boundaries}]. The sectors are therefore the six square pyramids and eight tetrahedra that together form a cuboctahedron (\autoref{fig:cuboctahedron}). When $\vec{\hat{n}}$ points inside one of the square pyramids, such as the conventional time arrow $(001)$, the membrane is spacelike with respect to one of the three unitary directions, so the \dam~are fully unitary, e.g., $\ef((001)) = 1$. When $\vec{\hat{n}}$ points inside one of the tetrahedra, e.g. $\vec{\hat{n}} = (111)/\sqrt{3}$, the membrane intersects all pairs of light cones, and the~\dam~are only unitary within a subspace. Conjecturing that entropy flows along the channel directions allows us to deduce the membrane tension within these tetrahedral sectors, and we can read off the dimension of the unitary subspace from the free energy density at the center of the sector, i.e.
\begin{equation}
\ef(\begin{pmatrix} 1 & 1 & 1 \end{pmatrix}/\sqrt{3}) = \sqrt{3}/2.
\end{equation}
Indeed, rewriting~\autoref{eq:ternary-u-normal} as:
\begin{equation}
\ef(\vec{\hat{n}}) = \max(\max_i |\vec{\hat{n}} \cdot \vec{\hat{e}}_i|, \max_i|\vec{\hat{n}}\cdot \vec{t}_i|/2)
\end{equation}
we can loosely interpret the light sectors as the intersection of the light cube of the square lattice parity model (maximal entropy directions $\vec{\hat{e}}_i$, cf.~\autoref{eq:normal-slp}) and the light octahedron of the the BCC spider model (maximal entropy directions $\vec{t}_i$, cf.~\autoref{eq:normal-bcc} and~\autoref{eq:octa-dir}).

Making the same cuts as in the numerics of the square lattice parity model, with time flowing along the $z$ axis,~\autoref{eq:ternary-u-normal} implies
\begin{equation}\label{eq:theory-ternary}
S_m(vt, t) = L_y t \begin{cases}
1 & |v| \leq 1 - m \\
\frac{|v|-1}{2m} + \frac{3}{2} & 1-m < |v| < 1+m \\
|v| & |v| \geq 1+m
\end{cases}.
\end{equation}
The numerics for a good scrambling ternary-unitary circuit based on iSWAP gates roughly confirm this conjecture, although the numerically observed entanglement consistently falls slightly below the theoretical prediction~\autoref{fig:ternary-u}.

\begin{figure}[t]
\includegraphics[width=\linewidth]{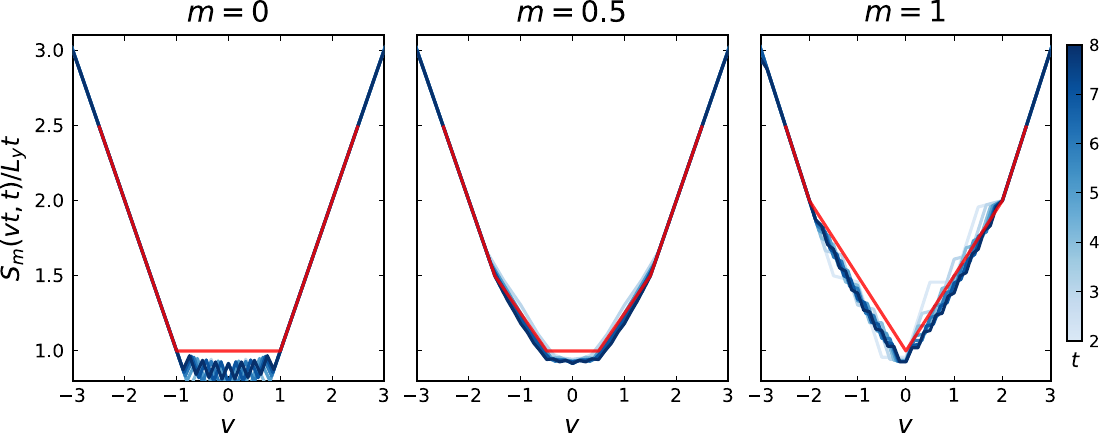}
\caption{\label{fig:ternary-u} Entropy growth in a ternary-unitary circuit, with time running in the $(001)$ direction. Red lines are the theoretical predictions (\autoref{eq:theory-ternary}).}
\end{figure}

\subsection{BCC Spider Model}\label{sect:bcc}
Turning to the BCC model depicted in~\autoref{fig:bcc-lattice}, we identify the flow directions along the $x$, $y$ and $z$ axes by comparing the plateau entropy density running time along two different, high-symmetry directions: $(001)$ and $(111)$. In both cases, the ZX diagram can be interpreted as a measurement-only model via the definitions~\autoref{eq:z-spider} and~\autoref{eq:x-spider} of X and Z spiders. We use units where the X spiders are at positions $(i,j,k)$ and Z spiders are at positions $(i+1/2,j+1/2,k+1/2)$ where $i,j,k$ are integers.

\begin{figure}[t]
\subfloat[]{
\includegraphics[width=0.4\linewidth,height=3in,keepaspectratio]{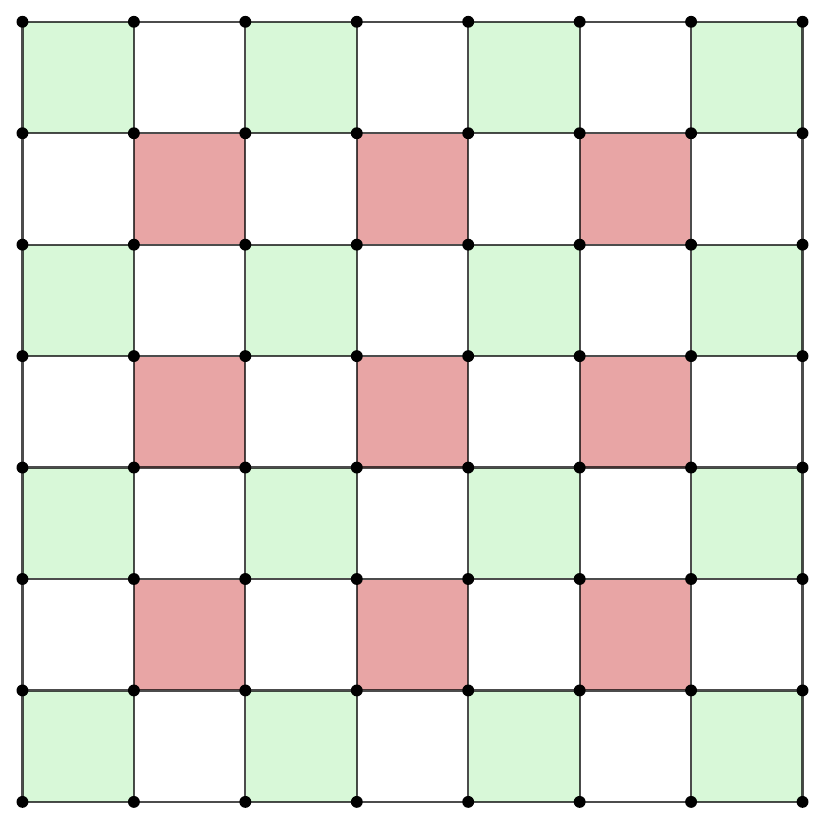}
\label{fig:bcc-z}
}
\subfloat[]{\includegraphics[width=0.6\linewidth,height=3in,keepaspectratio]{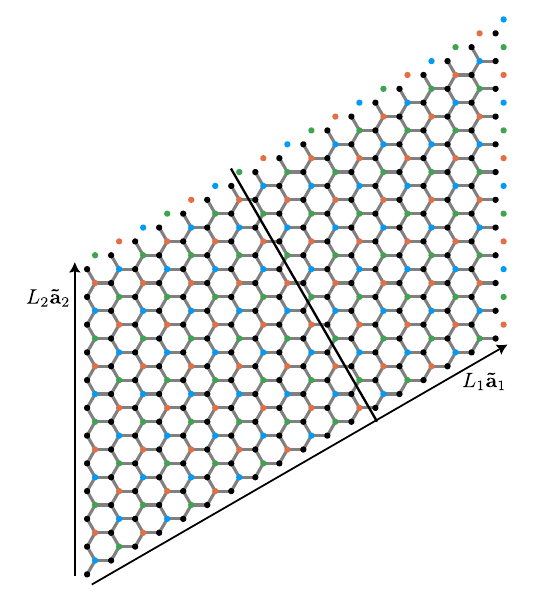}
\label{fig:layers-111}
}
\caption{ Qubit lattice in the BCC spider model running time in the (a) $(001)$ direction. Vertices are qubits, and green (red) shading of a plaquette indicates a Z (X) spider acting on the qubits around the plaquette. Z spiders and X spiders act on alternating layers in time. (b) $(111)$ direction. Black vertices are qubits involved in a gate in every layer, while blue, orange, green vertices are qubits involved only in the gate at that location, which occurs every three layers. System size is $(L_1, L_2) = (18,12)$ with boundaries aligned with $\vec{\tilde{a}}_1,\vec{\tilde{a}}_2$. Also shown is the cut with $m=0.5$ used to pin the membrane.} %{I also have a version with one extra row and column of black qubits - results in slightly different~\autoref{fig:bcc-page}.}}
\end{figure}
Running time in the $z$ direction, the circuit alternates between layers of Z spiders at half-integer times and X spiders at integer times. Each spider has four inputs and four outputs, so it is natural to assign qubits to the lattice $(i+1/4,j+1/4)$; the spiders act as plaquette operators on the qubit lattice (\autoref{fig:bcc-z}).

Each spider projects a 16-dimensional input Hilbert space onto the subspace spanned by
\begin{equation}
\ket{0000}_{x,z}, \, \, \ket{1111}_{x,z},
\end{equation}
i.e. the spins are aligned in the $X$ ($Z$) basis for X (Z) spiders. In terms of stabilizers, a 4-input, 4-output X spider is a forced measurement of $X_1 X_2, X_2 X_3, X_3 X_4$, post-selecting on +1 for all three measurements. Likewise, a 4-input, 4-output Z spider measures $Z_1 Z_2, Z_2 Z_3, Z_3 Z_4$.

After one layer of measurements, a fully mixed state is reduced to entropy density 1/4, placing an upper bound on the entropy density of the plateau. In the limit $L_x, L_y \rightarrow \infty$, the plateau entropy density converges toward this bound. For example, with periodic boundary conditions, if $L_x$ and $L_y$ have the same parity, then we numerically observe that the plateau entropy is
\begin{equation}
    S_{plat}(L_x,L_y) = (\lceil L_x/2 \rceil -1) (\lceil L_y/2 \rceil - 1) 
\end{equation}
The plateau is reached after just one layer if $L_x$ and $L_y$ are both odd, and two layers otherwise. Since there are 4 qubits/gate and the gate density is 1, the asymptotic entropy density per unit area is 1. This is the surface tension of a membrane running parallel to the xy plane, i.e.
\begin{equation}\label{eq:tension-001}
\ef(\begin{pmatrix} 0 & 0 & 1 \end{pmatrix}) = 1.
\end{equation}

Running time in the $(111)$ direction, three of the four outputs of a given spider go on to a gate in the next layer, while the fourth output skips the next two layers. The symmetry of the spiders with respect to inputs and outputs gives us freedom to choose which qubit to assign to the fourth output leg. We choose to have half of the qubits involved in a nonunitary gate in every layer, and the other half involved in a gate every third layer.

With respect to the orthonormal basis
\begin{equation}
    \vec{\tilde{e}}_1 = \begin{pmatrix} 0 & 1 & -1 \end{pmatrix}/\sqrt{2}, \quad \vec{\tilde{e}}_2 = \begin{pmatrix} -2 & 1 & 1 \end{pmatrix} / \sqrt{6}, \quad \vec{\tilde{e}}_3 = \begin{pmatrix} 1 & 1 & 1 \end{pmatrix} / \sqrt{3} \equiv \vec{\hat{t}},
\end{equation}
the gates on a constant time layer form a triangular lattice in the $\vec{\tilde{e}}_1, \vec{\tilde{e}}_2$ plane with lattice spacing $\sqrt{2}$. Concretely, let
\begin{equation}
\vec{a}_1 = \sqrt{2} \vec{\tilde{e}}_1 = \begin{pmatrix} 0 & 1 & -1 \end{pmatrix}, \quad
\vec{a}_2 = \frac{1}{\sqrt{2}} (\vec{\tilde{e}}_1 + \sqrt{3} \vec{\tilde{e}}_2) = \begin{pmatrix} -1 & 1 & 0 \end{pmatrix}, \quad \vec{a}_3 = \frac{1}{2 \sqrt{3}} \vec{\tilde{e}}_3 = \begin{pmatrix} 1 & 1 & 1 \end{pmatrix}/6.
\end{equation}
Then letting $2(x+y+z) = t$, 
\begin{equation}
\begin{pmatrix} x & y & z \end{pmatrix} = (t/6-z) \vec{a}_1 + (t/6-x) \vec{a}_2 + t \vec{a}_3.
\end{equation}

In these coordinates, the spacing in time between consecutive layers is $1$ (since $x+y+z$ is an integer for X spiders, and half-integer for Z spiders), and the triangular lattice on a given layer has primitive lattice vectors $\vec{a}_1, \vec{a}_2$. That is, the gates are located at the coordinates $(n + o_1) \vec{a}_1 + (m + o_2) \vec{a}_2$ for integer $n,m$, where the origin of the lattice $(o_1,o_2)$ evolves in time with a period of 3:

\begin{equation}
o_1 = o_2 = - (t \, \mod \, 3) / 3
\end{equation}

Placing our qubits at the vertices of the three triangular lattices defined at times $0, 1, 2$, as well as the three triangular lattices with origins $(o_1,o_2) = (1/3,0), (0,2/3), (2/3,1/3)$, we obtain the honeycomb lattice depicted in~\autoref{fig:layers-111}. We then assign to each gate the qubit at the location of the gate and the three neighboring qubits on the honeycomb.

\begin{figure}[t]
\includegraphics[width=\linewidth]{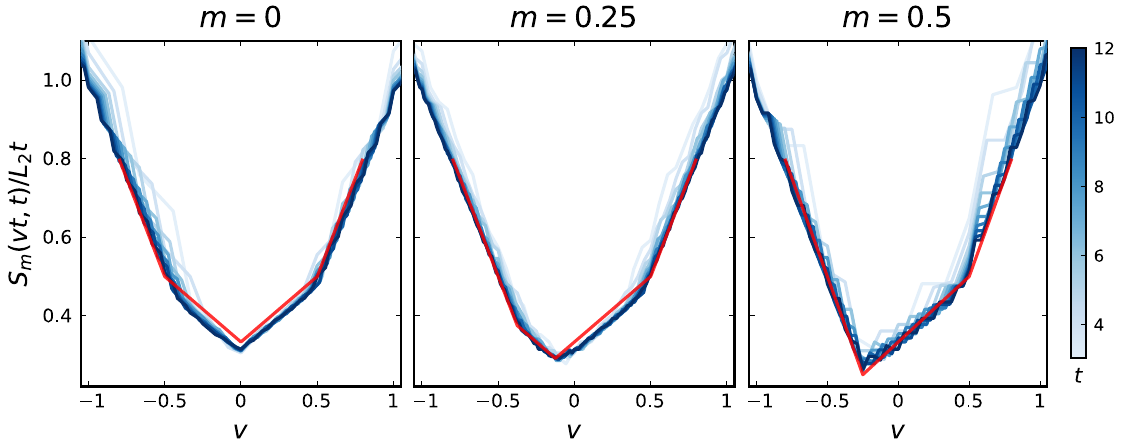}
\caption{Entropy growth in the BCC spider model, with time running along the (111) direction. Red lines are the theoretical predictions (\autoref{eq:Sm}).\label{fig:bcc-page} %{Could replace with slightly different version with extra row and column of black qubits, roughly similar agreement (theory-comp-bcc-v2).}
}
\end{figure}

With periodic boundary conditions, a fully mixed state on $N$ qubits reaches the plateau after exactly one time step, with an entropy of $N/2$. 
Since the qubit density is $2\sqrt{3}$, this corresponds to a surface tension of
\begin{equation}\label{eq:tension-111}
\ef(\begin{pmatrix} 1 & 1 & 1 \end{pmatrix}/\sqrt{3}) = \sqrt{3}.
\end{equation}
Anticipating that the flow directions $\vec{\hat{u}}_i$ and the maximal entropy directions $\vec{\hat{t}}$ will both be lines of high symmetry, the ratio of $\sqrt{3}$ between~\autoref{eq:tension-111} and~\autoref{eq:tension-001} leads us to conjecture that the time vector $(001)$ is along a flow direction whereas the time vector $(111)$ is along a direction of maximal entropy. This in turn implies the surface tension~\autoref{eq:normal-bcc}, since the boundaries of the tetrahedra in~\autoref{fig:light-octahedron} are precisely the orientations $\vec{\hat{n}}$ for which the $x$, $y$, or $z$ axis runs inside the membrane: the light sectors and light cones are the same. %{what other choices could we have for the channels? any mirror line of a cube? Could we have a channel along $(110)$ or something?}. 
If we had instead found a plateau entropy of $N/6$ (surface tension $1/\sqrt{3}$) along the $(111)$ direction, we would have deduced that the light sectors form the light cube found for the square lattice parity model.
 % With the version of OBCs that I implemented, it tends to purify a bit more, but in that implementation there are some qubits that never participate in any gate, and it is still asymptoting toward 1/2 (for aspect ratio 1, with $L^2$ gates per layer, the plateau entropy is $3L^2-4L+4$).

In terms of operator spreading, the $x$, $y$, and $z$ axes are the "edges" of spreading operators when time runs along a maximal entropy direction such as $(111)$. That is, within the unitary subspace defined by the plateau group, logical operators spread within the light cones shown in~\autoref{fig:light-octahedron}. The $x$, $y$, and $z$ axes are common to all three pairs of light cones (are not spacelike with respect to any time arrow), hence their designation as flow directions. Running time along the $(001)$ direction, we instead find an infinite butterfly velocity owing to the flow directions along the $x$ and $y$ axes. 

Rotational symmetry imposes that the $x$, $y$, and $z$ flow directions have equal capacity, and indeed, we can rewrite~\autoref{eq:normal-bcc} in the form of~\autoref{eq:entanglement-channels} by taking $s=1$, $\vec{\hat{u}}_i = \vec{\hat{e}}_i$:
\begin{equation}\label{eq:bcc-tension}
    \ef(\vec{\hat{n}}) = \max_{i} |\vec{\hat{n}} \cdot \vec{t}_i| = \sum_{i=1}^3 |n_i| = \sum_{i=1}^3 |\vec{\hat{n}} \cdot \vec{\hat{e}}_i|.
\end{equation}
To verify~\autoref{eq:normal-bcc}/~\autoref{eq:bcc-tension}, we probe membranes with normal vectors in between (111) and (001) by again using the Page curve protocol. Running the protocol with time in the (001) direction is hazardous, owing to the infinite butterfly velocity. Thus, we run time in the (111) direction, with open boundary conditions as depicted in~\autoref{fig:layers-111}. The black qubits occupy a triangular lattice with $\vec{\tilde{a}}_1 = (\vec{a}_1 + \vec{a}_2)/3, \vec{\tilde{a}}_2 = (2\vec{a}_2-\vec{a}_1)/3$. We track entanglement growth on the lattice spanned by $L_1 \vec{\tilde{a}}_1, L_2 \vec{\tilde{a}}_2$, with the initial pinning line from the origin to $-mL_2\vec{\tilde{a}}_1 + L_2 \vec{\tilde{a}}_2$, and final pinning line from $vt \vec{\tilde{a}}_1  + t \vec{a}_3$ to $(vt - mL_2) \vec{\tilde{a}}_1 + L_2\vec{\tilde{a}}_2 + t\vec{a}_3$.

For a given $m$,~\autoref{eq:bcc-tension} implies:
\begin{equation}\label{eq:Sm}
S_m(v,t) = \frac{t L_2}{6} \left(|m+2v| + |2v-1| + |2v+1-m|\right). 
\end{equation}
The numerics at early times are roughly consistent with this equation, as shown for three examples ($m=0,1/4,1/2$) in~\autoref{fig:bcc-page}. Slight discrepancies can be attributed in part to details at the boundaries.

\subsubsection{Tri- and Octahedral-unitarity}
Like our BCC spider model, the 2+1d tri-unitary circuits composed of three-qubit gates in Ref.~\cite{Jonay2021} have flow directions along the $x, y, z$ axes. However, in that work there are only three strictly unitary directions: $\begin{pmatrix} 1 & 1 & 1 \end{pmatrix}, \begin{pmatrix} -1 & 1 & 1 \end{pmatrix}, \begin{pmatrix} -1 & -1 & 1 \end{pmatrix}$. Unitarity along the fourth symmetry-related direction, $\begin{pmatrix} 1 & -1 & 1 \end{pmatrix}$, imposes an additional condition on the gates, resulting in "octahedral-unitary" gates which are a strict subset of tri-unitaries~\cite{Mestyan2022}. The BCC spider model can be viewed as a "generalized octahedral-unitary" circuit as it is unitary within a maximal-entropy-density subspace along all four directions, but we have not found an explicitly unitary formulation.

Interestingly, tri-unitarity is enough to constrain the flow directions to lines along $x, y, z$. To recapitulate the argument in Ref.~\cite{Jonay2021}: each of their three arrows of time makes the correlations vanish outside two octants (forward and backward light tetrahedra). The only lines common to all three sets of octants intersect are the $x, y, z$ axes. This argument follows from unitarity within the full Hilbert space, which imposes a strict light cone outside of which infinite-temperature one-site correlations vanish---or equivalently, if an entanglement membrane falls entirely outside the light cone, its tension can be diagrammatically evaluated just using unitarity. %{This seems like an important distinction to make between unitary vs. unitary within a subspace, as the latter does not on its own strictly impose a light cone beyond which there are no channels.} 

While unitarity along the fourth direction would impose a redundant constraint on the channels for two-point correlations, it is necessary to be able to deduce $\ef(\vec{\hat{n}})$ when $\vec{\hat{n}}$ points inside the fourth pair of tetrahedra. A membrane with such an $\vec{\hat{n}}$ intersects all three other pairs of tetrahedra, so tri-unitarity alone is not enough to simplify the surface tension. Thus, tri-unitary circuits in 2+1d are not "fully multi-unitary", and~\autoref{eq:bcc-tension} only applies within three pairs of light sectors.

\section{Broad survey of hybrid circuits}\label{app:hybrid}
The dual hybrid circuit in~\autoref{app:dual-hybrid} of this Supplement is a special case of a hybrid circuit that is explicitly tri-unitary. In this section we move away from this particular limit, considering larger unit cells populated by random Clifford gates (possibly a mix of iSWAP and CNOT cores) and one or more projective Pauli measurements.

In a stabilizer circuit, measurement outcomes are either deterministic or uniformly random~\cite{Gottesman1998,Aaronson2004}. In the latter case, the two equally likely outcomes only differ with respect to the sign on one stabilizer generator, which does not affect entanglement observables. Therefore, if the gates and measurement locations are STTI, the dynamics of the stabilizer group "modulo signs" will also be STTI. (Equivalently, to fix the signs, we can implement classical feedback: following every measurement that yields a $-1$ outcome, apply a Pauli operator that flips the sign to $+1$.

An important consequence of this independence of measurement outcomes is that the entanglement dynamics in hybrid Clifford circuits where the measurements are sampled according to Born probabilities is equivalent to the entanglement dynamics in Clifford circuits with forced projectors (such as those those that arise when running a unitary Clifford circuit at some angle). The latter kind of tensor network is what we have in mind for the~\dam~of a generic circuit, and for non-Clifford evolution, differs in a meaningful way from the former, a point we elaborate upon below.

In Clifford circuits, the STTI time evolution from a fully mixed initial state always leads to a steady-state "plateau" stabilizer group. (We refer to this as a "plateau group" rather than "plateau state" because we do not keep track of signs on stabilizer generators.) For concreteness, we consider one single-qubit Pauli measurement per unit cell, in which case the plateau entropy density $s^*\leq (a-1)/a$. As discussed in the main text, this entropy density tells us the line tension of spacelike interfaces, which separate the initial time slice from the final time slice~\cite{Li2021,Li2023,Sang2023}; if the plateau is reached within a time that scales more slowly than $L$ as we take $L\rightarrow\infty$, then the interface in runs along the spatial direction. The fully mixed boundary condition makes these interfaces sensitive to the system size in a fractal way.

The sensitivity to boundaries manifests in several ways, as we now describe. In the following, let $m=L/a = j 2^n$ denote the number of unit cells in the spatial direction, where $n$ is an integer and $j$ is odd. When we refer to the "dual" of a circuit, we mean the circuit with time and space swapped in the bulk, possibly with different boundary conditions.%(As we will see, some properties exhibit a fractal dependence on the system size, e.g. scale linearly in $2^n$ rather than in $m$.)

\subsection{Plateau proofs}

We begin by proving some results about the purification dynamics in Floquet Clifford circuits starting from the fully mixed state. These results do not require the spatial translation invariance of our STTI circuits; only time translation invariance is necessary.\footnote{However, to apply these results to the dynamics run along any angle {commensurate with the lattice} would require full spacetime translation invariance.} Thus, when we refer to a "layer of measurements," this can be a translation-invariant or spatially random set of commuting single- or multi-qubit measurements all on the same time slice. These results also hold for $N>1$ measurement layers per period.

Let $t^*$ denote the time step of the last purification event, and $s^*$ the corresponding entropy density. In the absence of purifications, the action of the brickwork circuit from time $t_1$ to $t_2$ can be described by an effective unitary $U(t_1,t_2)$, which maps the stabilizer generators and logical operators at time $t_1$ to those at time $t_2$. Note that $U$ does not necessarily inherent time translation invariance from the circuit, since it depends on the input and output stabilizers, i.e. $U(n,n+1)$ is, in general, not equal to $U(m,m+1)$ for $m\neq n$.

Now let $\mathcal{S}(t)$ denote the stabilizer group (ignoring signs) at time $t$. For a finite system size, there is only a finite (albeit exponentially large) number of stabilizer groups with this entropy. Therefore, there must be a time $t_{plat} \geq t^*$ such that $\mathcal{S}(t_{plat}) = \mathcal{S}(t_{plat}+\tau)$ for some integer period $\tau$. Then for $t_1\geq t_{plat}$, since the inputs at time $t_1$ and $t_1+\tau$ are the same, in this case we \textit{can} deduce that $U(t_1,t_1+\tau)=U(t_1 + n\tau,t_1+(n+1)\tau)$ for all $n\geq 0$. That is, for $t\geq t_{plat}$, the system belongs to a "plateau" of period $\tau$.

In general, $\tau$ can be exponentially large in $L$, and $t_{plat}$ may exceed $t^*$ by $O(L)$. However, evolution from a fully mixed initial state enjoys two special properties: 

\begin{theorem}\label{theor:plateau}
Purification under an STTI (or simply TTI) Clifford circuit starting from the fully mixed state arrives at a static plateau group ($\tau=1$). This group is reached immediately after the last purification event, i.e. $t_{plat}=t^*$.
\end{theorem}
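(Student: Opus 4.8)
The plan is to treat the stroboscopic evolution of the stabilizer group (modulo signs) as a deterministic dynamical system on a finite set, with the entanglement entropy serving as a Lyapunov function. First I would use the fact established above that measurement outcomes affect only the signs of generators: one period of the circuit therefore induces a \emph{well-defined deterministic} map $\mathcal{S}(t)\mapsto\mathcal{S}(t+1)=\Phi(\mathcal{S}(t))$ on the set of stabilizer groups modulo signs, and time-translation invariance guarantees that $\Phi$ is the same in every period. Next I would establish monotonicity of $S(t)$: a Clifford layer conjugates the group and preserves $S$, while a single-Pauli measurement either leaves $S$ unchanged (the measured operator is already determined, or it anticommutes with the group and is swapped in for an existing generator) or lowers $S$ by one (a genuine purification). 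Hence $S(t)$ is non-increasing, and by the definition of $t^*$ we have $S(t)=s^*L$ for all $t\ge t^*$, with every measurement after $t^*$ being non-purifying.

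Because the set of stabilizer groups is finite, the orbit $\mathcal{S}(0),\mathcal{S}(1),\dots$ is eventually periodic, say entering a cycle $C$ of length $\tau$ at time $t_{plat}$. Since $\Phi$ permutes $C$ cyclically, $S$ returns to its value after $\tau$ steps and is therefore constant on $C$; combined with monotonicity this forces the entire cycle into the regime $t\ge t^*$. What remains is to prove the two sharp claims $\tau=1$ and $t_{plat}=t^*$, i.e.\ that the constant-entropy portion of the orbit has no transient and the cycle is a single fixed point.

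This is the crux, and it is precisely where the fully-mixed initial condition must enter: for a pure initial state the conclusion is false (a single qubit in $\ket{0}$ driven by $H$ each period has stroboscopic period $2$), so any valid argument must break down at non-maximal initial entropy. I would resolve it by characterizing the fully-mixed orbit directly. Starting from $\mathcal{S}(0)=\{I\}$, which is invariant under conjugation by any Clifford, every generator of $\mathcal{S}(t)$ must originate from a measurement rather than from the initial data; accordingly I would show that $g\in\mathcal{S}(t)$ exactly when $g$, propagated to the Heisenberg frame at time $t$, is a product of measured Paulis each of which commutes with all \emph{later} measured Paulis --- the ``surviving'' measurements. On the plateau every measurement is non-purifying, and by time-translation invariance the pattern of which measured operators survive is itself periodic; the plan is to show that this surviving set, and hence the group it generates, is stroboscopically identical from one period to the next, yielding $\Phi(\mathcal{S}(t))=\mathcal{S}(t)$ for all $t\ge t^*$ and thereby $\tau=1$ and $t_{plat}=t^*$ simultaneously.

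I expect the main obstacle to be making this ``surviving-measurement'' characterization precise and stable. An anticommuting measurement genuinely deletes a generator, so $\Phi$ is \emph{not} injective on constant-entropy groups in general, and one cannot simply invoke reversibility of the map. The real content of the theorem is that, along the fully-mixed orbit, any generator removed by such a swap is regenerated in the same stroboscopic position one period later, so that no information present at $t^*$ is ever permanently lost or displaced. I would therefore isolate as the key lemma the statement that, on the plateau, the forward map restricted to the \emph{actual orbit} is a bijection onto a fixed point, and prove it by tracking the measured operators in the Heisenberg picture together with time-translation invariance, rather than by appealing to any general structural property of $\Phi$.
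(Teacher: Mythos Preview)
Your framework up to the crux is sound: $\Phi$ is well-defined on stabilizer groups modulo signs, $S(t)$ is non-increasing, and eventual periodicity follows from finiteness. But your proposed resolution of the crux has two problems. First, the surviving-measurement characterization is wrong as stated. With $U_F=\mathbbm{1}$, measure $X_1$, then $X_2$, then $Z_1Z_2$ starting from the trivial group: the result is $\langle X_1X_2,\,Z_1Z_2\rangle$, yet neither $X_1$ nor $X_2$ individually commutes with the later $Z_1Z_2$, so your criterion misses $X_1X_2$. The correct description is a nested normalizer (``code-switching'') recursion, not a flat commutes-with-all-later condition. Second, and more seriously, even with a corrected characterization your $1$-periodicity argument is circular: which measurements ``survive'' in a given period is determined by the stabilizer group entering that period, so if the groups form a $\tau$-cycle the surviving pattern is a priori only $\tau$-periodic, and you have not gained anything.

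The paper avoids Heisenberg bookkeeping for this theorem. For $\tau=1$ it argues by contradiction: if $\tau>1$, form the higher-entropy ``ensemble'' group $\mathcal{S}'=\bigcap_{i=0}^{\tau-1}\mathcal{S}(t_{plat}+i)$; the fully mixed orbit can never escape $\mathcal{S}'$, contradicting $\mathcal{S}(t_{plat})\supsetneq\mathcal{S}'$. The one-line lemma underlying this, and the idea you are missing, is that $\Phi$ is \emph{monotone under inclusion}: $G\leq H\Rightarrow\Phi(G)\leq\Phi(H)$ (check each of the three measurement outcomes separately). Combined with $\{I\}\leq\Phi(\{I\})$ this gives $\mathcal{S}(t)\leq\mathcal{S}(t+1)$ for all $t$, so the orbit from the trivial group is an increasing chain of groups and stabilizes at a fixed point the instant the entropy stops dropping---delivering $\tau=1$ and $t_{plat}=t^*$ simultaneously, with no need to track individual measured operators. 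For $t_{plat}=t^*$ the paper also gives an independent basis-choice argument: pick generators at $t=0$ aligned with the fixed point $\mathcal{S}^*$, observe that every stabilizer added during the evolution already lies in $\mathcal{S}^*$, and conclude $\mathcal{S}(t^*)=\mathcal{S}^*$ from equality of ranks.
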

\begin{proof}
To prove the first property, suppose for contradiction that $\tau > 1$, so $\mathcal{S}(t_{plat}), \mathcal{S}(t_{plat}+1),..., \mathcal{S}(t_{plat}+\tau-1)$ are distinct stabilizer groups. We can then define a higher-entropy "plateau group" $\mathcal{S}'$ consisting of this ensemble of groups. The maximally mixed state would therefore have stopped purifying at this higher-entropy group $\mathcal{S}'$ instead of $\mathcal{S}(t^*)$.

Now consider a purification of the maximally mixed state using $\mathcal{S}(t_{plat})$ and its logicals as a basis; i.e., we can choose the logical $Z$ (logical $X$) operators at time $0$ to be the stabilizer (destabilizer) and logical $Z$ (logical $X$) generators of the plateau group. In this basis, a given logical is tagged as either “destined to remain a logical” or “destined to become a stabilizer/destabilizer.” The stabilizer group and destabilizer group is preserved under one time step of the dynamics, and the purification events just convert these “future (de)stabilizers” into “actual (de)stabilizers”, while the logical generators evolve according to the unitary $U$. Since the only stabilizers added during the evolution are those belonging to the plateau group, once the state has finished purifying at $t=t^*$, it must be the plateau group, so $t^*=t_{plat}$.
\end{proof}

Henceforth the phrase "the plateau" will refer to the steady-state plateau group $\mathcal{S}^*=\mathcal{S}(t^*)$ reached from the fully mixed initial condition. The proof of~\autoref{theor:plateau} also implies that any initial state with a stabilizer group that is a subgroup of $\mathcal{S}^*$ will likewise stop purifying at $s=s^*$. Meanwhile, since all initial states are contained within the fully mixed initial state, a generic initial condition will evolve to an ensemble of $\tau$ stabilizer groups that all contain $\mathcal{S}^*$ as a subgroup (i.e. the corresponding states are lower-entropy states that "belong" to the mixed state stabilized by $\mathcal{S}^*$).

A related property of the purification dynamics holds for the fully mixed initial condition, or any state in the sequence of stabilizer groups from fully mixed to $\mathcal{S}^*$:
\begin{theorem}\label{theor:subgroup}
Let $\mathcal{S}(t,s)$ denote the stabilizer group at a fixed instant $s$ within the time step $t$, starting from the fully mixed state (trivial group) at $(t,s)=(0,0)$. Then $\mathcal{S}(t-1,s)$ is a subgroup of $\mathcal{S}(t,s)$.
\end{theorem}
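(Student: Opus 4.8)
The plan is to work throughout with stabilizer groups taken modulo signs, so that each elementary operation of the circuit induces a well-defined, outcome-independent map on such groups. This is the decisive simplification: once signs are discarded, the only data the update rules depend on is the commutation structure (the symplectic form on the Pauli group mod phases), and the Pauli measured at a given phase is fixed by time-translation invariance even though the outcome may vary from period to period. Concretely, I would label the elementary operations in temporal order as $O_1, O_2, \dots$, with $O_{k+T}=O_k$ where $T$ is the number of operations in one period; set $G_0$ to be the trivial group (the fully mixed state) and $G_k=O_k(G_{k-1})$. Since $\mathcal{S}(t,s)$ and $\mathcal{S}(t-1,s)$ are snapshots at the same phase in consecutive periods, their global indices differ by exactly one period's worth of operations, $T$, so the theorem is equivalent to the single claim that $G_k\subseteq G_{k+T}$ for all $k\geq 0$.

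First I would establish the engine of the argument, a monotonicity lemma: every elementary operation $O$, viewed as a map on stabilizer groups mod signs, preserves inclusion, i.e. $\mathcal{A}\subseteq\mathcal{B}\Rightarrow O(\mathcal{A})\subseteq O(\mathcal{B})$. For a Clifford unitary $U$ this is immediate, since $\mathcal{S}\mapsto U\mathcal{S}U^\dagger$ is a group automorphism and hence preserves subgroup relations. For a (single- or multi-qubit) Pauli measurement of $P$, the standard update rule, restated group-theoretically, reads
\begin{equation}
O_P(\mathcal{S}) = \langle \, \mathcal{S}\cap C(P),\, P \, \rangle,
\end{equation}
where $C(P)$ is the centralizer of $P$ in the Pauli group mod phases. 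This one formula subsumes the commuting case (where $\mathcal{S}\cap C(P)=\mathcal{S}$ and $P$ is simply adjoined) and the anticommuting case (where $\mathcal{S}\cap C(P)$ is the index-two commuting subgroup that survives the measurement). Both constituent steps—intersecting with the fixed subgroup $C(P)$ and generating with the fixed extra element $P$—are manifestly monotone under inclusion, so $O_P$ is monotone as well.

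With the lemma in hand, the theorem follows by induction on the global step $k$. The base case $G_0\subseteq G_T$ holds because $G_0$ is the trivial group, a subgroup of everything. For the inductive step, I would assume $G_{k-1}\subseteq G_{k+T-1}$ and apply the \emph{same} operation $O_k=O_{k+T}$ (by time-translation invariance) to both groups; the lemma then gives $G_k=O_k(G_{k-1})\subseteq O_k(G_{k+T-1})=G_{k+T}$. Translating back through the index correspondence recovers $\mathcal{S}(t-1,s)\subseteq\mathcal{S}(t,s)$. Note that only time-translation invariance enters here: all that is required is that the operation applied at a given phase be identical across periods.

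The main obstacle is the measurement case of the lemma, and in particular getting the mod-signs bookkeeping right. The natural worry is that a measurement discards the elements of $\mathcal{S}$ anticommuting with $P$ and could thereby break an inclusion. The resolution is exactly the compact formula above: whatever is discarded from $\mathcal{A}$ (its part anticommuting with $P$) is discarded consistently from the larger group $\mathcal{B}$, because $\mathcal{A}\cap C(P)\subseteq\mathcal{B}\cap C(P)$ whenever $\mathcal{A}\subseteq\mathcal{B}$. I would verify separately the mixed sub-case in which $P$ commutes with all of $\mathcal{A}$ but anticommutes with some element of $\mathcal{B}$: there the formula still yields $\langle\mathcal{A},P\rangle\subseteq\langle\mathcal{B}\cap C(P),P\rangle$ via the inclusion $\mathcal{A}\subseteq\mathcal{B}\cap C(P)$, confirming that monotonicity survives even when the two copies behave differently under the same measurement.
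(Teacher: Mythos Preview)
Your proof is correct, and it shares the paper's essential ingredient---monotonicity of the stabilizer-update maps under inclusion, seeded by the trivial initial group---but your organization is cleaner and more direct than the paper's. The paper first derives an explicit nested formula for $\mathcal{S}(t)$ in terms of iterated ``code-switching'' maps $M(t')[\,\cdot\,]$ built from the Floquet unitary and the time-evolved measurement operators, and then compares the nested expressions for $\mathcal{S}(t-1)$ and $\mathcal{S}(t)$ layer by layer from the inside out, invoking $G\leq H\Rightarrow M(t')[G]\leq M(t')[H]$ at each step. You bypass this detour: your monotonicity lemma for the \emph{elementary} operations (unitary conjugation and the single compact formula $O_P(\mathcal{S})=\langle \mathcal{S}\cap C(P),P\rangle$) lets you induct forward one elementary step at a time, and the periodicity $O_{k+T}=O_k$ does the rest. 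The upshot is that your version automatically covers every instant $s$ in the period and any number of measurement layers without extra bookkeeping, whereas the paper fixes $s$ just after the measurement layer and explicitly restricts to one layer per period to keep the notation manageable. The paper's approach does yield the explicit nested formula as a by-product, but that formula is not used elsewhere, so nothing is lost by your shortcut.
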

\begin{proof}
We will choose the instant of time $s$ to be immediately after a layer of measurements, and let $\mathcal{S}(t) \equiv \mathcal{S}(t,s)$. To simplify the proof, we assume there is only $N=1$ layer of measurements per Floquet period.

Let $U_F$ denote the unitary evolution under one time step of the brickwork circuit, from just after a layer of measurements to just before the next layer of measurements. $U_F$ is independent of $t$ since the circuit is Floquet. Also let $\overline{Z}(0)$ denote the stabilizer group generated by the measurement operators, which, in the absence of measurements, would evolve unitarily as
\begin{equation}
    \overline{Z}(t) = U_F^t \overline{Z}(0) (U_F^\dagger)^t
\end{equation}
Finally, as shorthand, define:
\begin{equation}\label{eq:normal1}
    M(t)[G] = \mathcal{N}_{\langle G, \overline{Z}(t)\rangle}(\overline{Z}(t))
\end{equation}
where $\mathcal{N}_B(A)$ is "everything in $B$ that commutes with everything in $A$". That is, $M(t)[G]$ is the group generated by the measurement(s) time-evolved by $t$ steps, along with subgroup of the group $G$ that commutes with those time-evolved measurements. This is a form of "code-switching", i.e. the map $S \rightarrow \mathcal{N}_{\langle S, F\rangle} (F)$~\cite{ECZoo,Aasen2023}.

Starting from the trivial group $\mathcal{S}(0) = \langle \emptyset \rangle$, and $\mathcal{S}(1) = \overline{Z}(0)$, the stabilizer group after the layer of measurements in time step $t$ is:
\begin{equation}\label{eq:S_tn}
\mathcal{S}(t) = M(0)[M(1)[M(2)[...M(t-2)[\overline{Z}(t-1)]]]]
\end{equation}
This follows from induction: After the second round of measurements,
\begin{equation}
    \mathcal{S}(2) = M(0)[\overline{Z}(1)].
\end{equation}
Now suppose~\autoref{eq:S_tn} holds for $t-1$. The stabilizer group then evolves unitarily under $U_F$:
\begin{align}\label{eq:unitary-S1}
U_F \mathcal{S}(t-1) U_F^\dag = M(1)[M(2)[...M(t-2)[\overline{Z}(t-1)]]]
\end{align}
where we have used the fact that commutators of Paulis are preserved under Clifford gates:
\begin{align}
    U M(t-1)[G] U^\dagger = M(t) [U G U^\dagger].
\end{align}
Then, the system undergoes a layer of measurements, implementing the code-switching map:
\begin{equation}
    M(1)[M(2)[...M(t-2)[\overline{Z}(t-1)]]] \rightarrow M(0)[M(1)[M(2)[...M(t-2)[\overline{Z}(t-1)]]]],
\end{equation}
in agreement with~\autoref{eq:S_tn}.

The theorem then follows by observing that for $t'=0,...,t-3$, the argument of $M(t')$ in the expression for $\mathcal{S}(t-1)$ is a subgroup of the argument of $M(t')$ for $\mathcal{S}(t)$, and that
\begin{equation}
    G \leq H \rightarrow M(t)[G] \leq M(t)[H]
\end{equation}
\end{proof}

An immediate consequence of this theorem is that the entropy $S(t,s)$ is a convex function of $t$:
\begin{equation}\label{eq:convex}
    S(t-1, s) - S(t, s) \geq S(t,s) - S(t+1,s)
\end{equation}
In particular, this means that there are no "false plateaus" in the entropy density: if $S(t+1,s)=S(t,s)$, then $s^*=s(t,s)$. 

\autoref{eq:convex} follows from comparing the stabilizer groups immediately before the measurements at time $(t-1,s)$ and $(t,s)$. Measurement of the operator $P$ can lead to three outcomes~\cite{Aaronson2004}:
\begin{enumerate}
    \item $P$ commutes with all of $\mathcal{S}$, but does not belong to $\mathcal{S}$, in which case a purification event occurs ($\pm P$ is added as a generator).
    \item $P$ commutes with all of $\mathcal{S}$ and $\pm P \in \mathcal{S}$, so the measurement is deterministic and $\mathcal{S}$ is unchanged.
    \item $P$ anticommutes with some element of $\mathcal{S}$, so half of $\mathcal{S}$ gets replaced and the measurement outcome is random.
\end{enumerate}
If outcome (2) or (3) happens at time $t$, then by~\autoref{theor:subgroup} it must occur in all future Floquet periods. Hence once a measurement at a particular site fails to produce a purifying event, all subsequent measurements of that site (at integer time steps) will also not lead to purification.

Generalizing beyond the fully mixed initial condition, \autoref{theor:subgroup} and its corollary~\autoref{eq:convex} also hold for any initial state on the ramp from the fully mixed state to the plateau. That is, any state of the form~\autoref{eq:S_tn}, if fed into the STTI circuit starting immediately after a measurement layer, purifies in a concave-up fashion until reaching the plateau. 

For a generic initial state, the theorem and its corollary need not hold. However, the convexity of $S(t,s)$ for the fully mixed initial state does place an important constraint on the purification dynamics from any initial state. In Clifford circuits, the entropy is discretized to integer values. Starting from the maximally mixed state with entropy $L$, for $1<t\leq t^*$, at least one stabilizer is added per time step, so $t^*\leq L$. For an STTI circuit with $m=L/a$ measurements per layer, all $m$ measurements in the first round produce purifications, so in fact $t^*\leq 1 + m(a-1)$.  Since the fully mixed initial state contains all lower-entropy initial states, the time for an arbitrary initial state $\mathcal{S}_0$ to reach a state within the plateau, $t_{plat}(\mathcal{S}_0)$, cannot exceed $t^*$, which therefore imposes a linear upper bound on the plateau time for all initial states.

As a final remark, let us compare~\autoref{theor:subgroup} to~\autoref{theor:plateau}. The proof of the former does not even invoke the existence of a plateau group, but given that $\mathcal{S}(t-1,s)$ is a subgroup of $\mathcal{S}(t,s)$, the latter theorem comes for free: at every time step, the stabilizer group must be a subgroup of the plateau group (uniquely defined at a particular point in the Floquet period), so as soon as the initially fully mixed state stops purifying, it reaches the static plateau group.

One result from the proof of~\autoref{theor:plateau} that does not naturally follow from~\autoref{theor:subgroup} regards initial states that are subgroups of $\mathcal{S^*}$ but do not take the form of~\autoref{eq:S_tn}. Such initial states are not guaranteed to purify convexly, but \textit{do} satisfy $\mathcal{S}(t,s) \leq \mathcal{S^*}(s)$ for all $t$, and thus do not purify below the plateau.

\subsection{Unitarity within the plateau}
Before characterizing the plateau group of Clifford circuits, let us pause to remark upon the unitary evolution within the plateau subspace. 

Clifford circuits preserve stabilizerness, so clearly if we feed in a fully mixed state it will remain a (mixed) stabilizer state under the STTI Clifford evolution; in particular, it will have a flat spectrum. In other words, the state is fully mixed on the subspace stabilized by the stabilizer group, and each purification event cuts the dimension of the subspace in half. We can interpret this in two ways. If the measurement of an operator $P$ is a "true" measurement (sampled according to the Born rule, and followed, say, by classical feedback to correct any random $-1$ outcomes to $+1$), then each purification maps two orthogonal subspaces (those with eigenvalues $\pm 1$ under $P$) to the same $+1$ eigenspace, preserving the norm of the state. If the measurement of $P$ is a forced projector (i.e., $(\mathbbm{1} + P)/2$), then it annihilates the $-1$ eigenspace and preserves the $+1$ eigenspace, reducing the norm of an input mixed state by a factor of 2. Thus, we can view the purification dynamics in the Clifford circuit as either uniformly projecting out states outside of the plateau subspace, or uniformly mapping them onto the plateau. %From either viewpoint, the plateau stabilizer group stabilizes a fully mixed subspace. %In contrast, if the fully mixed state evolves to a stable steady state with a nonuniform spectrum, then we can still define a plateau subspace on which this steady state has support; the plateau state will be invariant under but the steady state will have lower von Neumann entropy than the "plateau state"

By definition, once the state belongs to the plateau, it cannot purify any further, so only (2) and (3) above are possible outcomes of a measurement. In case (2), the measurement just implements an identity gate. In case (3), the measured operator $P$ must anticommute with some $Q \in \mathcal{S^*}$, and we can again interpret this in two ways. If the measurement obeys the Born rule and is followed by classical feedback, then it implements the unitary gate $(Q+P)/\sqrt{2}$ on all states within the plateau~\cite{Yoganathan2019}. If the measurement is forced, then it implements $(Q + P)/2$, which is proportional to a unitary gate, and reduces the norm of all states within the plateau by the same amount, $\mathrm{Tr}(\rho_{out}) = \mathrm{Tr}(\rho_{in})/2$. Either way, the inner product between states belonging to the plateau subspace is preserved up to constant factor \textit{independent} of the particular states. This unitarity is a necessary condition for having \textit{local} dynamics on the plateau, i.e. a finite butterfly cone, in turn allowing us to define entire light sectors within which the line tension is linear. 

{More generally, if $K\rho K^\dag / [\mathrm{Tr}(K \rho K^\dag)$ has the same spectrum as $\rho$, where $K$ is the Kraus operator for one time step, then we say that $\rho$ belongs to the plateau. But unless $\mathrm{Tr}(K \rho K^{\dag})$ is independent of $\rho$ within the plateau, the plateau dynamics cannot be unitary. This should be contrasted with the theorem in Ref.~\cite{Gullans2020}, which states that the emergent plateau in the volume-law phase of a \textit{monitored} (not post-selected) circuit is approximately unitary. (When the plateau is permanently stable, the recovery is perfect and the dynamics within the code space is precisely unitary.) The proof of this theorem relies on the fact that the "unraveled channel" which contains all possible trajectories, weighted by their Born probabilities, is normalized, whereas the post-selected dynamics along a single trajectory is not. The results of Ref.~\cite{Gullans2020} can presumably be extended to typical trajectories, but rare trajectories such as the non-Hermitian evolution in Ref.~\cite{Gopalakrishnan2021} manifestly violate the independence condition.}

{A simple proof shows that the independence condition is also sufficient for unitarity within the plateau subspace: suppose $\mathrm{Tr}(K \rho K^{\dag}) = C$ for all $\rho$ belonging to the plateau. Letting $K_c = \sqrt{C} K$, $\bra{\psi} K_c^\dag K_c \ket{\psi} = 1$ for any normalized pure state $\ket{\psi}$ within the plateau, so $K_c$ acts as an isometry within this subspace.}

\subsection{Properties of the plateau group}
Returning to Clifford circuits, having proven the existence of a plateau group reached in at most $O(L)$ time steps, we now specialize to STTI circuits with single-qubit (non-entangling) measurements and classify them based on their purification dynamics and entanglement on the plateau. Unless otherwise noted, periodic boundary conditions are used to preserve the full translation invariance.

Starting from a fully mixed state, the purification process occurs in two stages. In the first $O(1)$ steps, all $m=L/a$ measurements (or a finite fraction of them) lead to purification events. In the second stage, which can last up to $O(m)$ steps, each measurement layer only results in $O(1)$ purifications.

There are two basic classes of circuits when PBCs are used. We use the term "gapped" to refer to circuits which stop purifying after the first stage, and "gapless" to refer to those which continue through the second stage.

When a circuit is gapped for all $m$, both $t^*$ and $s^*$ are independent of system size, and $s^*$ is simply the entanglement velocity of the dual circuit. An extreme example is when only the first layer of measurements leads to purifications, in which case the circuit is "pinned" to the maximal entropy density $(a-1)/a$. The code obtained by throwing away the just-measured qubits is then a trivial code with code rate 1 and code distance 1. (Recall that the distance of a code is the minimum weight of a nontrivial logical operator, i.e. the lowest weight of an undetectable Pauli error~\cite{nielsen2010}.) At the other extreme are circuits for which each subsequent measurement also leads to a purification event, so that an area-law-entangled pure steady state is reached in $a$ time steps. These circuits are the STTI counterparts of random monitored Clifford circuits in the pure/area-law phase, and their dual has $v_E=0$.

%{To check: are all gapped, purifying, STTI circuits with PBCs also gapped, purifying with OBCs? That would lend credence to the supposition that for this class, the dual circuit will have $v_E=0$.}

The purification dynamics in "gapless" circuits is more complex, with a striking sensitivity to the system size $m=j 2^n$. Generally, the time to reach the plateau grows linearly with $2^n$ but is independent of $j$, while $s^*$ depends on $j$ but not on $n$, as shown in~\autoref{fig:gapless} for unit cell $(b,a,d)=(4,4,0)$. An extreme case is where $s^*=0$ for $j=1$ but is nonzero for $j>1$ (see next subsection and Ref.~\cite{Sommers2023}). 

The mechanism behind this "fractal gaplessness" is that the second stage of purification is "fragile": the stabilizer generators added to the stabilizer group during this stage are of length $O(m)$, requiring a particular alignment with the system size in order to be added. The simplest trend is that the second stage ends after $1$ step for $n=0$ (odd $m$), 2 steps for $n=1$, and so on, such that as $j\rightarrow\infty$ the plateau entropy density asymptotes toward the entropy density of the state after the first purification stage. For example, in the gapless $(4,4,0)$ circuits surveyed, the plateau entropy density asymptotes toward either $s_1=3/4$, $s_1=1/2$, or $s_1=1/4$. It is this entropy density---an integer multiple of $1/a$---that quantifies the line tension of a membrane run along the spatial direction, rather than $s^*$. Indeed, even for $j=1$ where the state continues to purify well below $s_1$, in many cases $s_1$ is still lurking within the entanglement structure of the plateau group. For example, pure states sampled randomly within the mixed plateau exhibit Page curves with slope $s_{eq} = s_1$, rather than slope $s^*$. The "true" entropy density $s_1$ may also be recovered by switching to open boundary conditions, although even with OBCs the purification dynamics have some residual sensitivity to the system size, and the second stage of purification can still last for $O(m)$ steps.

From this perspective, gapless purification is only a finite-size phenomenon which, if we take the appropriate sequence of system sizes (i.e., odd $m$) in the $m\rightarrow\infty$ limit, disappears in the infinite system size limit. Nevertheless, the extra structure present in the plateau group of $j=1$ gapless circuits points to some interesting features beyond the entanglement membrane theory, which we briefly catalog here.

\begin{figure}[t]
\centering
\subfloat[]{
\includegraphics[width=0.5\linewidth,height=2.7in,keepaspectratio]{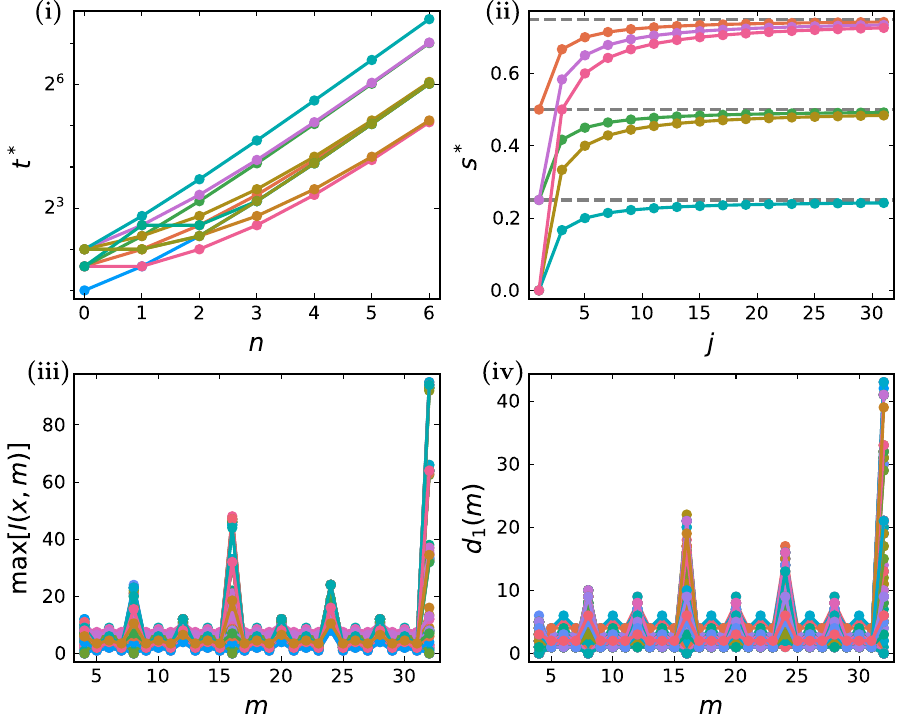}
\label{fig:gapless}}
\hfill
\subfloat[]{
\includegraphics[width=0.45\linewidth,height=2.7in,keepaspectratio]{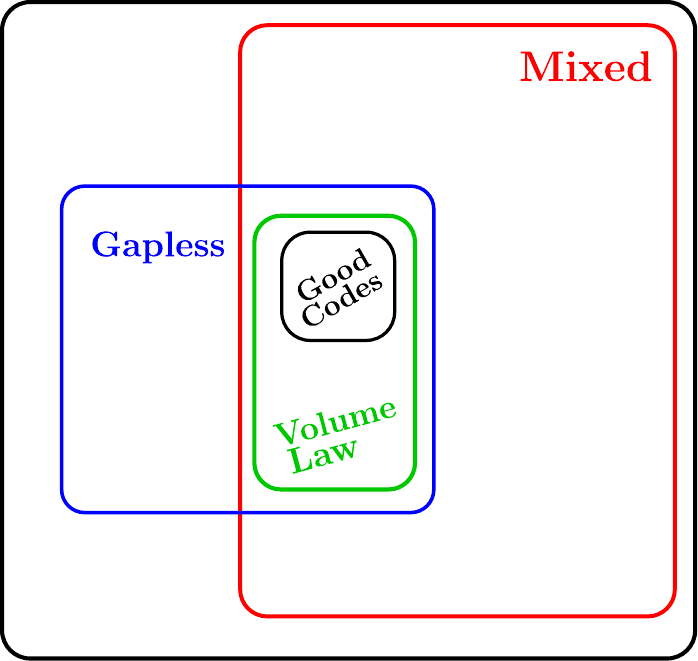}\label{fig:hierarchy}}
\caption{(a) Properties of the plateau group in gapless $(b,a,d)=(4,4,0)$ circuits on $m=j 2^n$ unit cells with PBCs. (i) Time to reach plateau, $t^*$, as a function of $n$. (ii) Entropy density of the plateau, $s^*$, as a function of $j$. (iii) Mutual information $\max_x I(x,m)$ of the plateau group. (iv) Code length $d_1$ of the plateau group. If the plateau group is pure (which occurs for a subset of $j=1$ circuits), this is indicated as $d_1=0$. (b) Zoology of STTI hybrid circuits, classified based on their behavior for $m=2^n$ with PBCs.}
\end{figure}

~\autoref{fig:hierarchy} depicts the classification of $j=1$ STTI hybrid circuits. Within the group of gapless circuits, we can further distinguish between those for which the plateau group (possibly pure) has area-law mutual information vs. volume-law mutual information. By "volume law", we mean that the function
\begin{equation}
    I(x, m) = \langle I(A:\overline{A}) \rangle_{|A| = x},
\end{equation}
i.e. the mutual information between contiguous region $A$ and its  complement $\overline{A}$, averaged over all contiguous regions of the same length $x$, has a maximum that scales linearly with $m$. Volume-law mutual information implies that the set of stabilizer generators, compressed to their shortest length, contains $O(m)$ long generators. The volume-law gapless $(4,4,0)$ circuits appear as spikes in $\max_x[I(x,m)]$ at $m=2^n$ in the lower left panel of~\autoref{fig:gapless}.

Within the intersection of gapless, volume-law, mixed circuits is a subset of circuits whose purification dynamics generates a "good code": with a code rate of $s^*$, the plateau group encodes $s^* L$ logical operators into $L$ physical qubits, with a contiguous code distance (length of the shortest logical operator)~\cite{Bravyi2009} that scales linearly in $L$. These codes are attractive for quantum error correction since they can have finite thresholds for local noise, although it should be emphasized that since many of the stabilizer generators are long, this limits their fault tolerance potential compared to good low-density-parity-check (LDPC) codes~\cite{Panteleev2021,Campbell2017,Breuckmann2021,Rakovszky2023,Rakovszky2024}.

From a more physics-oriented perspective, note that an extensive code length means the codespace is not Wannierizable, which in this context means that it cannot be written as a product Hilbert space of logical qubits~\cite{Ashcroft1976}. We leave open the question of Wannierizability for the plateau group for generic system sizes, such as odd $m$ where the gapless purification stage stops after $O(1)$ steps.

\subsection{Gapless, pure, product}\label{sect:gpp}
As another example of the sensitivity to boundary conditions in hybrid circuits, consider the class of circuits with $(b,a,d)=(2,4,0)$ which purify gaplessly to a pure product state with PBCs when $j=1$, but have a volume-law plateau group with OBCs, with $s_{eq}=1/2$.

This leads to a situation that cannot be explained by a membrane with a local free energy. In the protocol depicted in~\autoref{fig:page-sketch}, we first run the circuit with the vertical cut at $x=0$ and OBCs, so each half becomes well scrambled (and belongs to the plateau group of the OBC circuit). Suppose we take $m$ to be a power of 2. After terminating the cut at $t=0$, we can run two slightly different circuits. With open boundary conditions for $t>0$, the entanglement after the cut fills in with an infinite butterfly velocity to the right (\autoref{fig:gpp-obc}). The implied line tension is in fact identical to that of the dual hybrid circuit (left panel of \autoref{fig:dual-hybrid}), although we do not have an explicit multi-unitary construction for this circuit. In contrast, if we instead run the circuit with \textit{periodic} boundary conditions for $t>0$, a stark difference in the entanglement appears even at $O(1)$ times (\autoref{fig:gpp-pbc}): rather than increasing linearly in time as is the case with OBCs, $S(x,t)$ for small $x$ saturates at a constant under PBCs. Then at $O(m)$ times, entanglement starts to be destroyed in the PBC circuit (right panel), until the product steady state is reached. This indicates a sort of "interference" phenomenon between the two domain walls involved in $S(x,t)$ when we impose periodic boundary conditions.

\begin{figure}[t]
\subfloat[]{
\label{fig:gpp-obc}
\includegraphics[width=0.49\linewidth]{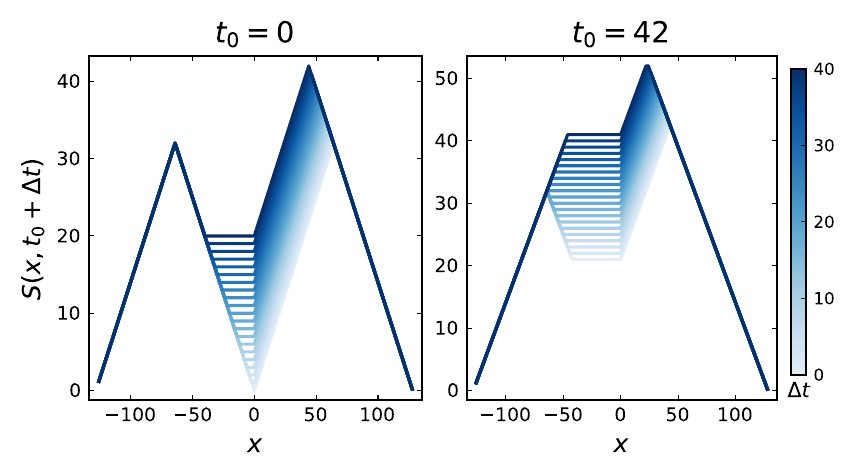}}
\subfloat[]{
\label{fig:gpp-pbc}
\includegraphics[width=0.49\linewidth]{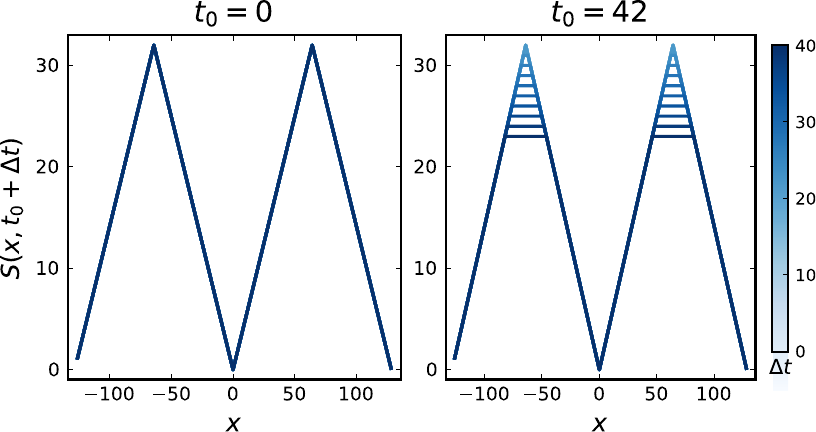}}
\caption{Subsystem entropy on a chain of $L=256$ qubits after terminating cut, for a circuit that (a) has a volume-law steady state with OBCs, but (b) is "gapless, purifying, product" with PBCs when the number of unit cells is a power of 2. Entropy is plotted immediately after measurements, every two layers, for even $x$. \label{fig:gpp}}
\end{figure}

While a local free energy cannot explain this phenomenon, we emphasize that the early-time sensitivity to boundary conditions in the above example is in fact a consequence of the "long time" correlations that accumulate during the preparation of the initial state ($t<0$). If we replace the $t<0$ time evolution with a random Clifford circuit, then the interference effect is absent. Namely, we recover
\begin{equation}
S_{pbc}(x,t) = S_{obc}(x,t) + S_{obc}(0,t).
\end{equation}
%{I have a figure for this but not sure if I need to show it.}
\subsection{Recurrences}
Another phenomenon not explained by the local free energy is the recurrence time $\tau$ of pure states within the plateau group. In unitary STTI Clifford circuits, also known as Clifford quantum cellular automata (CQCA)~\cite{Schlingemann2008,Gutschow2010solo,Gutschow2010long}, the recurrence time is linear in $m$ if $m=2^n$, but for the fractal class of CQCA is exponentially large for odd $m$~\cite{VonKeyserlingk2018,Sommers2023}. We observe an analogous fractal trend in the effective unitary dynamics within the plateau group of our hybrid Clifford circuits, although the effective unitary is not a local CQCA. For $j=1$, the $O(m)$ recurrence time means that if we start from an area-law entangled pure state within the plateau, then even if the circuit succeeds in generating volume-law entanglement governed by a finite-tension membrane, this entanglement is periodically destroyed on linear time scales. 

\subsection{Random hybrid circuits}
\gs{In the main text, we claimed that the entanglement membranes in a \textit{random} monitored Clifford circuit in the volume-law phase likewise are at zero temperature, owing to an emergent plateau.~\autoref{fig:lcf} shows how this property manifests in the purification dynamics starting from a fully mixed state at a low measurement rate $p=0.05<p_c \approx 0.16$~\cite{Gullans2020}.}

\gs{In an individual realization, the entropy density approaches a finite value $s^*$ according to the DPRM scaling~\cite{Li2023}:
\begin{equation}
    \frac{S(L,t)}{L} – s^* \sim L^{-2/3} F(t/L^{2/3}).
\end{equation}
At early times $t \ll L^{2/3}$, the deviation from the plateau entropy density is independent of $L$, so the scaling function $F(x)$ satisfies $F(x) \sim x^{-1}$ for $x\ll 1$.}\footnote{\gs{This scaling function differs from that used in~\cite{Li2023,Sang2023}. In those works, the entropy of a subsystem $|A|$ where $|A| \ll L$ is analyzed, and for $t\ll L^{2/3}$, it is the entropy difference $S(\rho_A,t)-s^* |A|$, not the difference in entropy \textit{density}, which is independent of $t$, resulting in a scaling function $\Phi(x) \sim x^{1/2}$ as $x\rightarrow 0$.}}

\gs{Differentiating the total entropy in this early time regime we obtain $dS/dt \sim -L/t^2$, so the time spacing between the discrete purification events scales as}
\begin{equation}
    \frac{t^2}{L} G(t/L^{2/3}).
\end{equation}

\gs{Thus, the purification events are sparse in time for $t\gg L^{1/2}$. At much later times, these events become exponentially rare in $L$, corresponding to an exponentially long lived plateau~\cite{Gullans2020}. Each purification event (marked with dashed vertical lines for $L=512$ in~\autoref{fig:lcf}) is associated with the membrane finding a new minimum energy path. Since the entropy density only changes at sparse intervals, whereas terms are continuously added to the partition function of the corresponding polymer (1D membrane) as it is given more room to fluctuate, we can say that the polymer is at zero temperature.}

\begin{figure}[hbtp]
\centering
\includegraphics[width=0.5\linewidth]{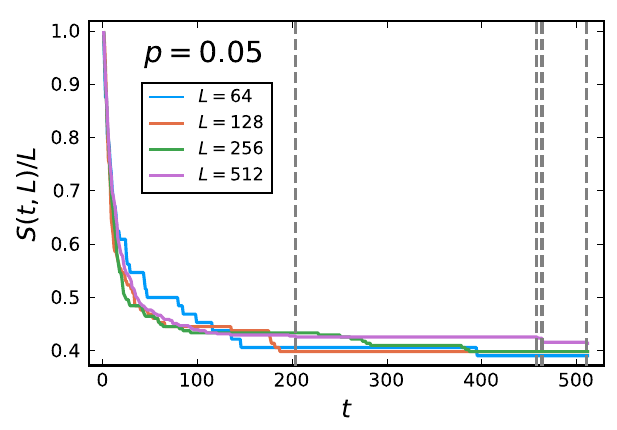}
\caption{\gs{Entropy density vs. time in individual runs of a random brickwork (1+1d) Clifford circuit with measurement rate $p$. Different colors are different system sizes. For the largest system size, vertical lines mark the purification events at $t>200$.} \label{fig:lcf}}
\end{figure}
\section{Ancilla probes}\label{app:probes}
In the main text, we controlled the entanglement membrane by pinning its ends by performing measurements or omitting gates. In this section, we demonstrate a method for probing the membrane in between the pinned ends, by introducing ancilla(s) at spacetime location $(x_p,t_p)$. We use the good-scrambling CNOT-NOTC circuit defined in~\autoref{eq:idx9} as our case study.

To be concrete, we consider a few different ancilla probes. In three of the probes, a single ancilla qubit is initialized in the state $\ket{+}$, and then "measures" the qubit at $x_p$ in the $X$, $Y$, or $Z$ basis by applying the gate CX, CY, or CZ. Another option ("Bell probe") is to bring in a pair of ancillas in a Bell state, and swap one of the ancillas with the qubit at $x_p$. %\digress{On average, the entropy S_{x_p,t_p, \alpha}(x,t) won't depend on the choice $\alpha = X,Y,Z$, but in a given circuit it can differ}. 

 \subsection{Timelike probes}
To probe timelike domain walls, we start from a product state, insert the ancilla(s) at location $(x_p, t_p)$, and then at a later time $t$ evaluate the quantity
 \begin{equation}
\Delta S_{x_p,t_p}(x,t) = S_{x_p,t_p}(x+A,t) - S_{x_p,t_p}(x,t)
\end{equation}
where $S_{x_p,t_p}(x,t)$ denotes the entropy of the interval of the interval $[-L/2,x]$ not including the ancilla qubit(s) A in the subsystem, and $S_{x_p, t_p}(x+A,t)$ is the entropy where A is included in the subsystem. If \textit{all} minimum-energy interfaces connecting the point $(x,t)$ to the bottom boundary pass to the left of $(x_p,t_p)$, then $\Delta S < 0$ (since an extra cost is paid to exclude $A$). Conversely, if all minimum-energy domain walls pass to the right of $(x_p, t_p)$, then $\Delta S > 0$. 
 And if $\Delta S=0$, that indicates a degeneracy of domain walls passing left and right of the probe location.

Two comments are in order. First, $\Delta S$ satisfies the equation:
\begin{equation}\label{eq:DeltaS}
\Delta S_{x_p,t_p}(x,t) = I_{x_p,t_p}([-L/2,x]:A, t) - |A|
\end{equation}
where $I_{x_p,t_p}([-L/2,x]:A,t)$ is the mutual information between the subsystem $[-L/2,x]$ and the ancilla(s) inserted at $(x_p, t_p)$. Thus, for the single-ancilla probes, left/right degeneracy ($\Delta S=0$) corresponds precisely to the interval $[-L/2,x]$ containing only a classical bit. %\digress{For a given circuit, this mutual information can depend on the choice of probe.}

Second, note that the simple act of inserting the probe can change the landscape of domain walls nonperturbatively. For example, the "Bell probe" starts by destroying the entanglement between $x_p$ and the rest of the original system.
In that sense, these probes are in some sense "invasive", so are not fully cleanly interpreted. However, we find good agreement between all four probes, and all are consistent with the absence of thermal fluctuations. To wit,~\autoref{fig:probe} shows $\Delta S$ for the CX probe, for two choices of $(x,t)$. When $x=0$, the $\Delta S=0$ region (shown in red) has a width of $O(1)$, indicating an absence of thermal fluctuations which would tend to broaden this region. When $t\sqrt{3}>|x|>0$, the $\Delta S=0$ region forms a parallelogram with vertices at $(x_p,t_p)=(0,0),(0,t-|x|/\sqrt{3}), (x,|x|/\sqrt{3}), (x,t)$, consistent with the butterfly velocity $v_B=\sqrt{3}$ and line tension in~\autoref{eq:tri-tension}.  Because the line tension is linear for $v\in [0,v_B]$, all domain walls within the parallelogram, which are composed of piecewise segments of slope $\in [0, v_B]$ if $x\geq 0$ (or slope $\in [-v_B,0]$ if $x<0$) are degenerate, with free energy $(t\sqrt{3} + |x|)$. %, for a well-behaved representative of the $d_f \approx 1.94$ class. \digress{The Sierpinski class had some anomalous behavior perhaps tied to recurrences...} 

\begin{figure}[t]
\centering
\subfloat[]{
\includegraphics[width=0.5\linewidth]{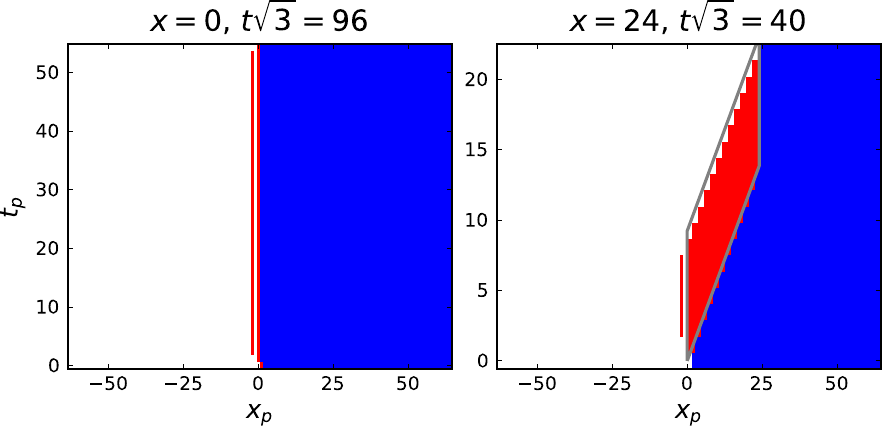}\label{fig:probe}}
\subfloat[]{
\includegraphics[width=0.3\linewidth]{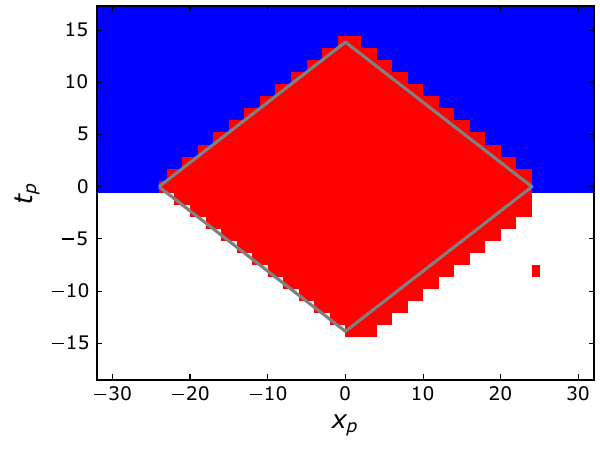}\label{fig:spacelike}}
\caption{$\Delta S_{x_p,t_p}$ for (a) timelike and (b) spacelike CX probes of the entanglement membrane in the representative CNOT-NOTC STTI circuit. Blue pixels are $\Delta S = 1$, red is $\Delta S = 0$, and white is $\Delta S = -1$. In (a), the system size is $L=128$ qubits, and each panel shows $\Delta S_{x_p, t_p}(x,t)$ for the $(x,t)$ indicated in the caption. In (b), $L=64$ and $L_{cut}=8$ qubits measured on each end at $t=0$. Gray lines indicate the predicted boundaries of the $\Delta S = 0$ region.}
\end{figure}

\subsection{Spatial probes}
We also probe domain walls forced to run in the spatial direction, by initializing in a fully mixed state. The dynamics is unitary except at two times: at $t=t_{cut}$, the rightmost and leftmost $L_{cut}$ qubits are measured, and at $t=t_{p}$, an ancilla probe is inserted at $x_p$. While unitarity forbids downward-pointing kinks in the membrane~\cite{Zhou2020}, the nonunitary operation at the insertion of the probe allows a "wrong turn" at $(x_p,t_p)$.

In~\autoref{fig:spacelike}, we plot the difference between the entropy of system + ancilla and system alone, at $t > \max(t_{cut}, t_p)$. % The three vertical lines are to guide the eye for the right edge of the left cut, the midpoint of the system, and the left edge of the right cut. For the X,Y, and Z probes, there is only one ancilla brought in at the location $(x_p,t_p)$, so $\Delta S$ is either $-1$ (blue), 0 (white), or $1$ (red). In the final column, for the "Bell probe", $\Delta S$ can range from $-2$ to $2$. 
If $\Delta S < 0$ that means all domain walls separating the system at the top from the reference at the bottom pass above $(x_p,t_p)$, and if $\Delta S>0$ that means all domain walls separating the system from the reference pass below $(x_p,t_p)$. The $\Delta S = 0$ region forms a rhombus with slopes $\pm v_B$. Spacelike membranes running inside this rhombus have free energy $L-2 L_{cut}$, since $\mathcal{E}(v) = |v|$ for $|v| \geq v_B$, whereas membranes running outside the rhombus pay an energy penalty for their timelike segments, since $\mathcal{E}(v) > |v|$ for $|v|<v_{B}$. This prediction for the extent of the $\Delta S=0$ region would hold in a generic unitary circuit, as well: $\mathcal{E}(v) \geq |v|$ is a general inequality, with equality at and above the butterfly velocity~\cite{Jonay2018}. But when the dynamics are not fully multi-unitary, we would expect a broadening of the boundary between $\Delta S=0$ and $\Delta S \neq 0$ due to tails outside the butterfly cone.  

\section{Unbinding transition}\label{app:unbinding}
As further evidence of the zero-temperature nature of the entanglement domain walls, we study the first-order unbinding transition from the edge due to quasiperiodic dissipation~\cite{Lovas2023}.

Starting from a pure product state defined on the interval $[0,L]$, we run the CNOT-NOTC STTI circuit defined in~\autoref{eq:idx9}, with open boundary conditions. The leftmost qubit is erased at an interval of $\tau$ if time is marked continuously. In the discrete time of the brickwork circuit, the erasure events occur quasiperiodically, at times:
\begin{equation}
t_{diss} = \lfloor (\tau (j + r))\rfloor,
\end{equation}
where $j=0,1,...$ and $r$ is a random number between 0 and 1, which varies from sample to sample. There is no randomness, except for the randomness in $r$ (and optionally, randomness in the choice of single-site stabilizers of the initial state). Since the maximum dissipation rate is 1 erasure per two layers, we define the dissipation rate $p=2/\tau$. 

\begin{figure}[t]
\subfloat[]{
    \centering
    \includegraphics[height=1.5in,keepaspectratio]{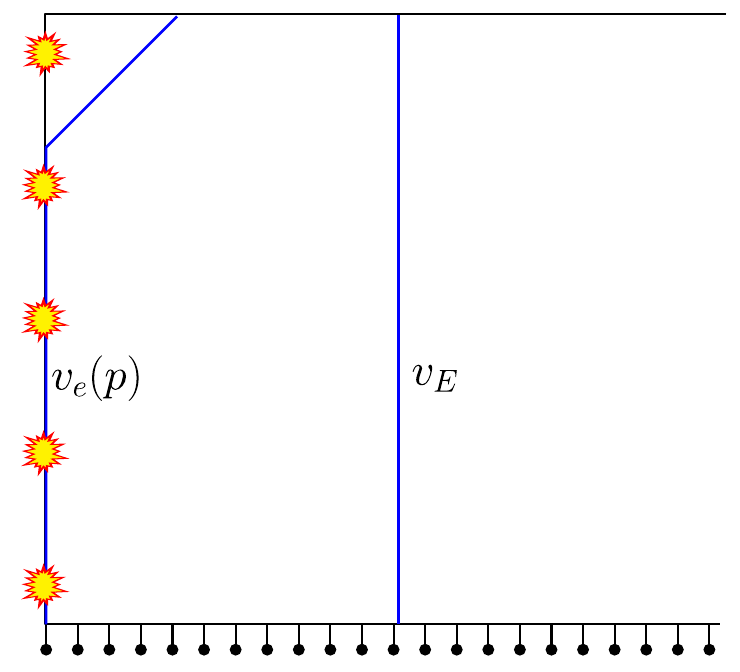}
    \label{fig:edge-dissipation}}
    \subfloat[]{
    \includegraphics[height=1.5in,keepaspectratio]{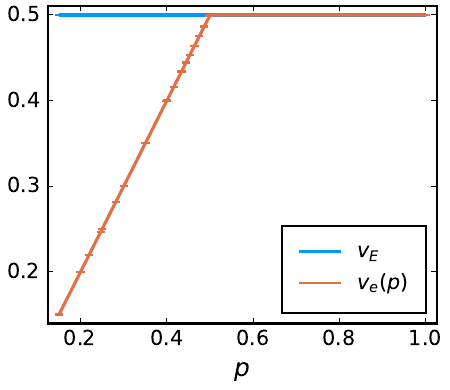}
    \label{fig:first-order}
    }
    \subfloat[]{
    \includegraphics[height=1.5in,keepaspectratio]{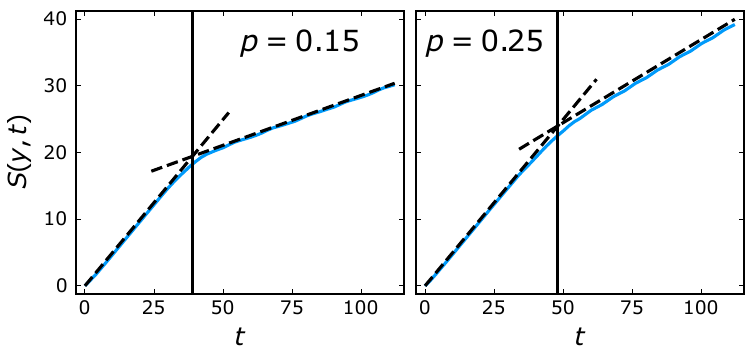}\label{fig:Sy}}
\caption{(a) Sketch showing the two paths the domain wall can take, and dissipation on the edge. (b) $v_E$ (blue) vs. $v_e(p)$ (orange), as a function of the quasiperiodic erasure rate, $p=2/\tau$. $v_e(p)$ is obtained from a linear fit to $S(0,t)$ for $t\in[L/8,15L/8]$. (c) $S(y,t)$ with $y=L/8$ and $L=128$ for two dissipation rates deep in the bound phase, $p=0.15$ (left) and $p=0.25$ (right). Black dashed lines show the expected free energy of the two domain wall configurations shown in (a) according to~\autoref{eq:Sy}, and the vertical black line marks the predicted transition time $t^*$.}
\end{figure}

The unbinding transition arises from the competition between two sources of entanglement, shown in~\autoref{fig:edge-dissipation}. The bulk of the unitary circuit generates entanglement at a rate $v_E$, the line tension of a membrane which travels straight down. On the other hand, each erasure injects at most 2 bits of entropy into the left edge. The effectiveness of this injection is quantified by the entropy of the (now mixed) state, which grows as $v_e(p) t$. The line tension of a membrane which binds to the left edge, $v_e(p)$, increases as a function of $p$: at sufficiently small $p$ the boundary is attractive ($v_e(p) < v_E$), whereas at large dissipation rates, it becomes energetically favorable for the membrane whose top endpoint is at $(0,t)$ to travel into the bulk (unbind).
The bulk and boundary entanglement rates, $v_E$ and $v_e(p)$, are plotted together in~\autoref{fig:first-order}. The transition is at $p_c=1/2$, with $v_e(p)$ growing and sticking at $v_E$ with a discontinuity in the derivative  $dv_e(p)/dp$  at the transition. 

In fact, $v_e(p)$ takes the piecewise linear form

\begin{equation}
v_e(p) = \begin{cases}
 p & p<1/2 \\
 1/2 & p\geq 1/2
 \end{cases}
 \end{equation}

Note that the growth of $S(0,t)=pt$ in the bound phase saturates the upper bound of 2 bits per erasure.

Within the bound phase, there is also a first-order transition between the two paths shown in~\autoref{fig:edge-dissipation}, as a function of time for a fixed distance from the edge. The minimal cut prediction for the entropy of the interval $[0,y]$, where $y<L/2$, is
\begin{equation}\label{eq:Sy}
S(y,t) = \min(v_E t, y + (t-y)p)
\end{equation}
so that, for a given $y$, we expect the membrane to switch between traveling straight down (slope $v_E=1/2$) and connecting to the edge (slope $p$). This transition occurs at $t^* = y(1-p)/(1/2-p)$ and is marked by a discontinuity of $\Delta = 1/2 - p$ in the slope at $t^*$. As $p$ increases, the transition shifts to higher $t$ and the discontinuity becomes less pronounced. For sufficiently small $p$ and $y/L$ this ansatz works fairly well, as shown in~\autoref{fig:Sy} for $y/L=1/8$ and $p=0.15,0.25$.

\begin{figure}[t]
\subfloat[]{
\includegraphics[width=0.35\linewidth]{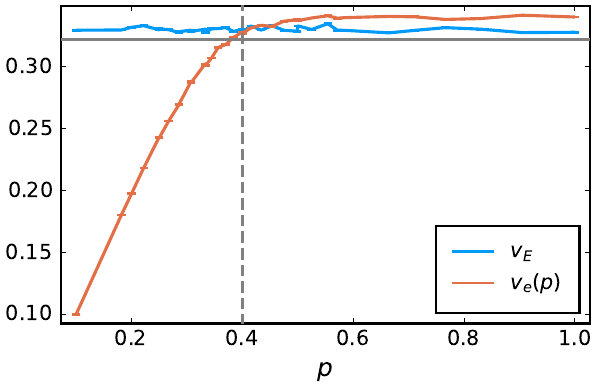}
\label{fig:random}}
\subfloat[]{
\centering
\includegraphics[width=0.35\linewidth]{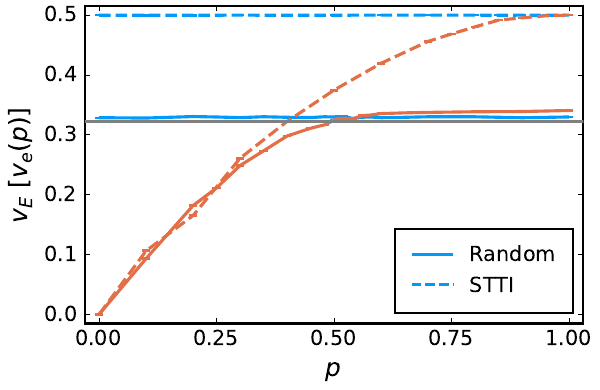}
\label{fig:edge-random}}
\caption{
(a) $v_E$ (from a linear fit to $S(L/2,t)$ for $t\in [L/8, 4L/3]$) vs. $v_e(p)$ (from a linear fit to $S(0,t)$ for $t\in [L/8, 3L/2]$) for quasiperiodic edge dissipation with random Clifford gates in the bulk. Gray solid line is the theoretical prediction for $v_E=\log_2(5/4)=0.3219...$ for a Haar-random circuit~\cite{Jonay2018}, and gray dashed line indicates the estimated critical point $p_c \approx 0.4$. (b) Fits to $v_E$ and $v_e(p)$ in a random Clifford circuit (solid, transition near $p=0.5$) and the CNOT-NOTC STTI circuit (dashed, transition near $p=1$), with random edge dissipation.}
\end{figure}

In contrast, when the gates are random, the phase transition is continuous. This clear qualitative difference from the STTI circuit is shown in~\autoref{fig:random}. Quantitatively, both $v_E$ and $v_e(p)$ are smaller than in the STTI circuit, as the random circuit scrambles more slowly (which also inhibits the injection of entropy into the left edge). As another point of contrast, deep in the unbound phase, when the gates are STTI, $S(y,t)-S(0,t)$ drops to zero in an $O(1)$ length scale, in a way that is independent of $t$ at large $t$. That is, consistent with the fact that the entanglement membranes are straight with no wandering, it suffices to travel $O(1)$ distance into the "bulk" to find a minimal energy path. In contrast, when the gates are random, the bulk wandering of the membrane is governed by DPRM physics~\cite{Nahum2017entanglement}, so that $S(y,t) - S(0,t) \sim t^{1/3} f(y/t^{2/3})$. 

A complete understanding of the critical behavior in the random case eludes us at this stage, but we do find that, as expected~\cite{Lovas2023}, edge randomness appears to be irrelevant to the critical behavior, although it shifts the transition to higher $p$ (random edge is less effective at generating entanglement). The transition in the presence of random edge dissipation is shown in~\autoref{fig:edge-random}, for both a random Clifford circuit and the CNOT-NOTC STTI circuit. At low $p$, $v_e^{rand}(p)$ in the random Clifford circuit tracks $v_e^{STTI}(p)$ in the STTI circuit quite closely. As $v_e(p)$ approaches the bulk entanglement velocity of the random circuit, $v_e^{rand}(p)$ peels away from $v_e^{STTI}(p)$. The transition is around $p_c\approx 0.5$ in the random circuit, vs. $p_c$ close to 1 in the STTI circuit, with the higher $p_c$ in the latter resulting from its higher bulk entanglement velocity. Notably, whereas the randomness of the boundary is irrelevant when the bulk is random, it is relevant at the bulk STTI transition, changing the transition from first-order to continuous.

\section{Haar-random initial states}\label{sect:haar}
The numerics in the main text and preceding sections of the Supplement have all involved initial stabilizer states. Here we provide evidence for the same entanglement growth for generic initial states. Concretely, we consider feeding the product of two Haar-random states, on the intervals $[-L/2,0]$ and $[0,L/2]$, into a circuit defined by the Floquet unitary $U_F$. We focus on the Floquet unitary of the good-scrambling CNOT-NOTC circuit (\autoref{eq:idx9}). The key insight is that although the circuit is nonrandom, the randomness in the initial state is sufficient to leverage tools from Haar calculus to compute ensemble averages such as the average and variance of the purity of intervals $A=[-L/2,x]$.

\subsection{Average purity}
For any pure state $\ket{\psi}$ defined on $n$ qubits, the $k$-replica average is:~\cite{Mele2023}:
\begin{equation}\label{eq:haar-ave}
\mathbbm{E}_{U \sim \mathrm{Haar}} [U^{\otimes k} (\ket{\psi} \bra{\psi}) ^{\otimes k} {U^\dag}^{\otimes k}] = \frac{(d-1)!}{(k+d-1)!} \sum_{\pi \in \mathcal{S}_k} V_d(\pi)
\end{equation}
where $d=2^n$ is the total Hilbert space dimension, and $V_d(\pi)$ is the permutation matrix corresponding to the permutation $\pi$ on $k$ elements.

The average purity of the interval $A$ involves a two-replica average, which at time $t=0$ is: 
\begin{align}\label{eq:purity}
\overline{\Tr[\rho_A(t=0)^2]} = \mathbbm{E}[\Tr[(S_A \otimes \mathbbm{1}_{\overline{A}}) \rho(0)^{\otimes{2}}]] = \frac{1}{2^L(2^{L/2}+1)^2} \Tr\left[\left(S_A \otimes \mathrm{1}_{\overline{A}}\right) ((\mathbbm{1} + S)_1 \otimes (\mathbbm{1} + S)_2)\right]
\end{align}
where $S$ is the SWAP operator acting between the two replicas, and the subscripts $1,2$ refer to the left and right halves of the physical Hilbert space.

%Say we want to evaluate the average purity for a given circuit $U_F$, averaged over Haar-random states in each half. Since the Clifford group is a two-design, we could just sample random \textit{stabilizer} states in each half and take the sample mean. But we should also be able to calculate this explicitly.

After $t$ layers of the Floquet operator $U_F$, the purity evolves to:
\begin{align}\label{eq:purity-t}
\overline{\Tr[\rho_A(t)^2]} = \frac{1}{2^L(2^{L/2}+1)^2} \Tr\left[\left(S_A \otimes \mathrm{1}_{\overline{A}}\right) \left(U_F(t)^{\otimes{2}} ((\mathbbm{1} + S)_1 \otimes (\mathbbm{1} + S)_2) U_F^\dag(t)^{\otimes{2}}\right)\right].
\end{align}

We can split this into four parts: 
\begin{equation}
\overline{\Tr[\rho_A(t)^2]} = \frac{Z_1+Z_2+Z_3+Z_4}{2^L(2^{L/2}+1)^2}
\end{equation}
where
\begin{subequations}
\begin{align}
Z_1 &= \Tr\left[\left(S_A \otimes \mathbbm{1}_{\overline{A}}\right) \left(U_F(t)^{\otimes{2}} (\mathbbm{1}_1 \otimes \mathbbm{1}_2) U_F^\dag(t)^{\otimes{2}}\right)\right] =  \Tr\left[S_A \otimes \mathbbm{1}_{\overline{A}}\right]
=2^{3L/2-x} \\
Z_2 &= \Tr\left[\left(S_A \otimes \mathbbm{1}_{\overline{A}}\right) \left(U_F(t)^{\otimes{2}} (S_1 \otimes S_2) U_F^\dag(t)^{\otimes{2}}\right)\right] = \Tr\left[\mathbbm{1}_{A} \otimes S_{\overline{A}}\right] = 2^{3L/2+x} \\
Z_3(t)&=\mathrm{Tr}[(S_A \otimes \mathbbm{1}_{\overline{A}}) (U_F(t)^{\otimes 2}) (S_1 \otimes \mathbbm{1}_2) U_F^\dag(t)^{\otimes 2}] \equiv \Tr[U_3(t)] \\
Z_4(t) &=\mathrm{Tr}[(S_A \otimes \mathbbm{1}_{\overline{A}}) (U_F(t)^{\otimes 2}) (\mathbbm{1}_1 \otimes S_2) U_F^\dag(t)^{\otimes 2}] \equiv \Tr[U_4(t)]
\end{align}
\end{subequations}
Each of these terms can be interpreted as the partition function of a domain wall configuration, as visually depicted in~\autoref{fig:z}: the choice of permutation on the left and right halves sets the bottom boundary condition, the choice of subsystem $A$ sets the top boundary condition, and the form of $U_F(t)$ determines the cost of the domain wall in the bulk.
\begin{figure}[hbtp]
\subfloat[]{
\includegraphics[width=0.65\linewidth]{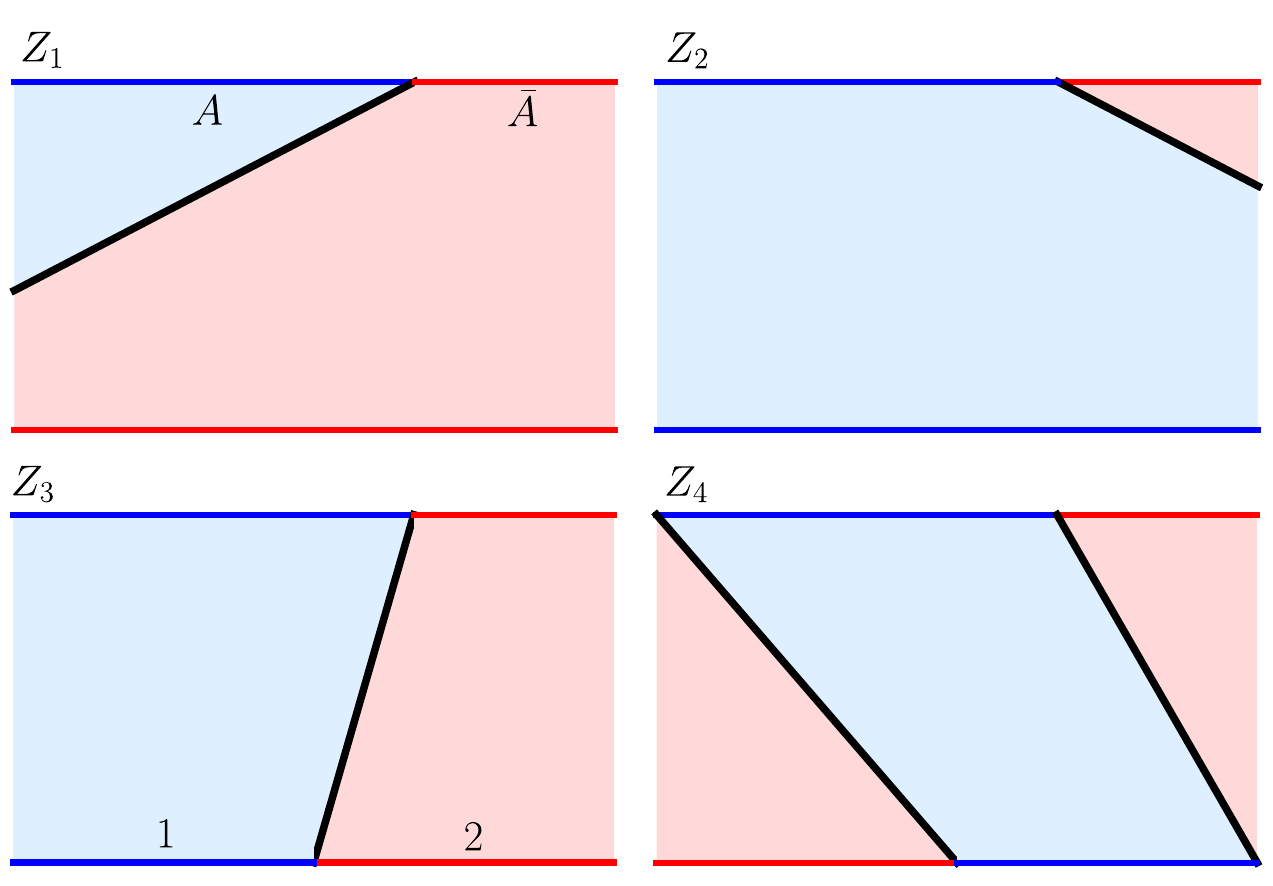}\label{fig:z}}
\subfloat[]{
\includegraphics[width=0.35\linewidth]{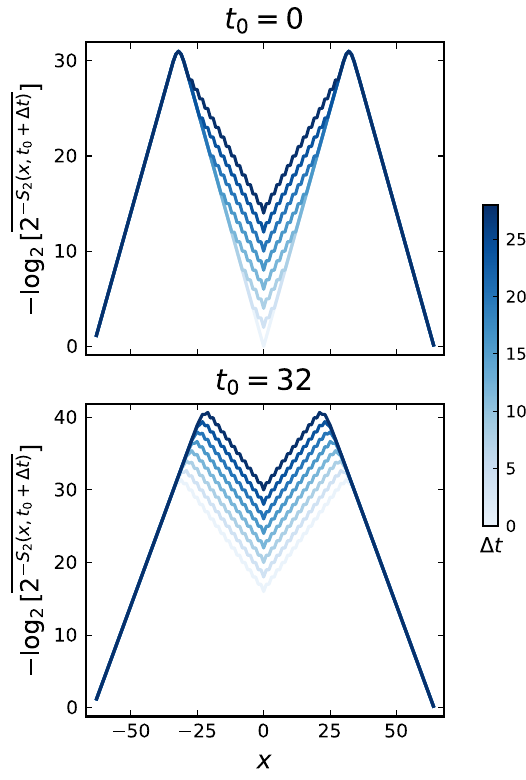}\label{fig:purity}}
\caption{(a) Sketch of the four terms that enter into the average purity: bright blue and red lines signify identity and swap, respectively, between the two replicas, on the indicated interval. (b) Log average purity in a CNOT-NOTC circuit with single-qubit gates given by~\autoref{eq:idx9}, for $L=128$, starting from the product of two Haar-random states at $t=0$. Times in increments of 4 layers are shown. }
\end{figure}
To compute $Z_3$ and $Z_4$, we leverage the fact that $U_3$ and $U_4$ are Clifford unitaries on the doubled Hilbert space. For a Clifford unitary $U$, let $G(U)$ denote a generating set of Pauli strings $P$ that are invariant up to a sign under conjugation by $U$: $U^\dag P U = \pm P$. Then the quantity $\mathrm{Tr}(U) \mathrm{Tr}(U^\dag)$ evaluates to~\cite{Su}:
\begin{equation}
|\Tr(U)|^2 = \begin{cases} 2^{|G(U)|} & U^\dag P U = +P \, \forall P \in G(U) \\
0 & \mathrm{otherwise}.
\end{cases}
\end{equation}
$G(U)$ can be obtained as the 1-eigenspace of the binary symplectic representation of $U$~\cite{Su}. When $U_F(t)$ is given by the good-scrambling CNOT-NOTC circuit defined in~\autoref{eq:idx9}, we find that the corresponding $|\Tr(U_3)|^2$ and $|\Tr(U_4)|^2$ are always nonzero.\footnote{As a simple example where this is not the case, consider $V=H$: $G(H)=Y$, but since $H Y H =-Y$, we recover the obvious result $\Tr(H) = \sqrt{|\Tr(H)|^2} = 0$.}

On small system sizes, we verify that both $Z_3$ and $Z_4$ are positive, i.e.
\begin{equation}\label{eq:positive}
\Tr[U_j(t)] = \Tr[U_j^\dag(t)] = + \sqrt{|\Tr[U_j(t)]|^2}
\end{equation}
Positing that~\autoref{eq:positive} holds for all $t$\footnote{This assumption is justified a posteriori by the agreement between~\autoref{fig:page} and~\autoref{fig:purity}, combined with the low variance.}, we find:
\begin{equation}
    \overline{\Tr(\rho_A(t)^2)} = \frac{1}{2^L(2^{L/2}+1)^2} (2^{3L/2-x} + 2^{3L/2+x} + 2^{|G(U_3(t))|/2} + 2^{|G(U_4(t))|/2} = \overline{2^{-S_2(x,t)}}.
\end{equation}
where $S_2(x,t)$ is the second-Renyi entropy. The log average purity, $-\log_2[\overline{\Tr(\rho_A(t)^2)}]$, is shown in~\autoref{fig:purity}.

Since the Clifford group is a unitary two-design~\cite{DiVincenzo2002},
\begin{equation}
   \mathbbm{E}_{\mathrm{Haar}}[\Tr(\rho_A(t)^2)] = \mathbbm{E}_{\mathrm{stab}}[\Tr(\rho_A(t)^2]
\end{equation}
where the subscripts indicate an average over Haar-random states or random stabilizer states, respectively,  on the two halves. Moreover, the close agreement between $-\log_2[\mathbbm{E}_{\mathrm{stab}}[2^{-S_2(x,t)}]]$ and the average entropy $\mathbbm{E}_{\mathrm{stab}} [S_2(x,t)]$ (obtained via sampling) implies a low variance of the purity across initial stabilizer states. Indeed,~\autoref{fig:purity} is nearly identical to the entropy growth observed for a specific initial state (generated by acting with $O(L)$ layers of the CNOT-NOTC circuit on a random product state) in~\autoref{fig:page}. 

\subsection{Variance of purity}
Next, to quantify the state-to-state fluctuations of the purity across Haar-random initial states on the two halves, we calculate the variance
\begin{equation}
\overline{\Tr[\rho_A(t)^2]^2} - \overline{\Tr[\rho_A(t)^2]}^2.
\end{equation}

The first term involves a four-replica average. Substituting $k=4$ into~\autoref{eq:haar-ave}, we obtain:
\begin{align}
\overline{\Tr[\rho_A(t)^2]^2} &= \frac{1}{(d(d+1)(d+2)(d+3))^2} \sum_{\pi_1,\pi_2 \in \mathcal{S}_4} \Tr \left[(S_A \otimes \mathbbm{1}_{\overline{A}})^{\otimes{2}}\left(U_F(t)^{\otimes{4}} \left(V_d(\pi_1)_1 \otimes V_d(\pi_2)_2 \right)(U_F(t)^\dag)^{\otimes{4}}\right)\right] \\
&\equiv \frac{1}{(d(d+1)(d+2)(d+3))^2} \sum_{\pi_1,\pi_2 \in \mathcal{S}_4} \Tr[U(\pi_1,\pi_2;t)]
\end{align}
where $d=2^{L/2}$ and $S^{\otimes 2}$ permutes replicas $(1,2)$ and $(3,4)$.
This yields the upper bound:
\begin{equation}
\mathrm{Var}(\Tr[\rho_A(t)]) \leq \frac{\sum_{\pi_1,\pi_2 \in \mathcal{S}_4} 2^{|G(U(\pi_1,\pi_2;t))|/2}}{(d(d+1)(d+2)(d+3))^2}  - \frac{(Z_1+Z_2+Z_3+Z_4)^2}{(d(d+1))^4}.
\end{equation}

By explicit computation, we find that the upper bound is either zero or decays exponentially with $L$. This indicates that \textit{typical} generic initial states result in the piecewise-linear entanglement generation shown in~\autoref{fig:purity}.\footnote{This small variance across Haar-random states could be anticipated from the small variance across sampled stabilizer initial states, because the fourth moment over the Clifford group matches the Haar average up to one additional term whose contribution we expect to be small~\cite{Zhu2016}.}

\end{document}